\DeclareMathAlphabet{\pazocal}{OMS}{zplm}{m}{n}
\newtheorem{prop}{Proposition}
\newtheorem{coro}{Corollary}
\newtheorem{lemm}{Lemma}
\newtheorem{rem}{Remark}
\newtheorem{exa}{Example}
\newcounter{assumption}
\newenvironment{assumption}[1][]{\refstepcounter{assumption}\par\medskip
   \textbf{A\hspace{-0.15cm}~\theassumption.\;#1} \rmfamily}{\smallskip}
\def\1{\mathds{1}}
\def\A{\texttt{a}}
\newcommandx\ABC[2][1=]{
\ifthenelse{\equal{#1}{}}
{
\ifthenelse{\equal{#2}{}}{(\tparam,U)}{(\tparam^{(#2)},U^{(#2)})}
}
{
\ifthenelse{\equal{#2}{}}{(\tparam^{#2}_{#1},U^{#2}_{#1})}{(\tparam^{#2}_{#1},U^{#2}_{#1})}
}
}
\def\barK{\bar{K}}
\def\ber{\mathrm{Bernoulli}}
\def\borel{\mathcal{B}}
\def\bS{\bar{S}}
\def\bQ{\overline{Q}}
\def\bigO{\mathcal{O}}
\def\bvarrho{\overline{\varrho}}
\def\bDelta{\bar{\Delta}}
\newcommandx\coord[3][1=1, 3=n]{(#2_{#1},\ldots,#2_{#3})}
\newcommandx\coor[3][1=1, 3=n]{#2_{#1},\ldots,#2_{#3}}
\def\cov{\text{cov}}
\def\diag{\mathrm{diag}}
\def\esp{\mathbb{E}}
\def\eg{\textit{e.g.}\,}
\def\eps{\epsilon}
\def\event{\mathcal{E}}
\def\fam{\mathsf{F}}
\def\grad{\nabla}
\newcommand{\KL}{\mathrm{KL}}
\def\ie{\textit{i.e}\;}
\def\iid{\textit{i.i.d.}\,}
\def\hpi{\hat{\pi}}
\newcommandx\MH[2][1=]{
\ifthenelse{\equal{#1}{}}
{
\ifthenelse{\equal{#2}{}}{\param}{\param^{(#2)}}
}
{
\ifthenelse{\equal{#2}{}}{\param^{#2}_{#1}}{\param^{(#2)}_{#1}}
}
}
\def\meas{\mathcal{M}}
\def\norm{\pazocal{N}}
\def\nset{\mathbb{N}}
\def\Ntest{N_{\text{test}}}
\def\proba{\mathbb{P}}
\def\param{\theta}
\def\paramst{\theta^{\ast}}
\def\paramalg{\borel(\paramset)}
\def\paramset{\Theta}
\def\proba{\mathbb{P}}
\newcommand{\pscal} [2]{\left\langle #1 , #2\right\rangle}
\def\plim{\mathrm{plim}}
\def\rmd{\mathrm{d}}
\def\rset{\mathbb{R}}
\def\suffset{\mathsf{S}}
\def\tparam{\tilde{\theta}}
\def\targ{\pi}
\def\T{^{T}}
\def\TV{\mathrm{TV}}
\def\talpha{\tilde{\alpha}}
\def\tparam{\tilde{\param}}
\def\tK{\tilde{K}}
\def\ttarg{\tilde{\pi}}
\def\tf{\tilde{f}}
\def\thetaMLE{\theta^{\ast}}
\def\ta{\tilde{a}}
\def\ttheta{\tilde{\theta}}
\def\tpi{\tilde{\pi}}
\def\tn{|\!|\!|}
\def\tra{\text{tr}}
\def\trho{\tilde{\rho}}
\def\tA{\tilde{A}}
\def\U{U}
\def\Uset{\mathsf{\U}}
\def\Usetst{\mathsf{\U}^{\star}}
\def\Ualg{\mathcal{\U}}
\def\unif{\text{unif}}
\def\Vset{\bar{\Uset}}
\def\var{\text{var}}
\def\wKL{\widehat{\KL}}
\def\X{X}
\def\Xset{\mathsf{\X}}
\def\Xalg{\mathcal{\X}}
\def\Y{Y}
\def\Yset{\mathsf{\Y}}
\def\tZ{\tilde{Z}}
\newtheorem{theorem}{Theorem}
\newtheorem{lemma}[theorem]{Lemma}
\begin{document}

\begin{frontmatter}
\title{Informed Sub-Sampling MCMC: Approximate Bayesian Inference for Large Datasets}

\author[ucd,insight]{Florian Maire\corref{corresp}}
\cortext[corresp]{Corresponding author}
\ead{florian.maire@ucd.ie}

\author[ucd,insight]{Nial Friel}
\author[ensae]{Pierre Alquier}
\address[ucd]{School of Mathematics and Statistics, University College Dublin}
\address[insight]{Insight Centre for Data Analytics, University College Dublin}
\address[ensae]{CREST, ENSAE, Universit\'{e} Paris Saclay}

\begin{abstract}
This paper introduces a framework for speeding up Bayesian inference conducted in presence of large datasets. We design a Markov chain whose transition kernel uses an {unknown} fraction of {fixed size} of 
the available data that is randomly refreshed throughout the algorithm. Inspired by the Approximate Bayesian Computation (ABC) literature, the subsampling process is guided by the fidelity to the observed data, 
as measured by summary statistics. The resulting algorithm, Informed Sub-Sampling MCMC (ISS-MCMC), is a generic and flexible approach which, contrary to existing scalable methodologies, preserves the simplicity of the 
Metropolis-Hastings algorithm. Even though exactness is lost, \ie the chain distribution approximates the posterior, we study and quantify theoretically this bias and show on a diverse set of examples that it 
yields excellent performances when the computational budget is limited. If available and cheap to compute, we show that setting the summary statistics as the maximum likelihood estimator is supported by theoretical 
arguments.
\end{abstract}

\begin{keyword}
{Bayesian inference \sep Big-data \sep Approximate Bayesian Computation  \sep noisy Markov chain Monte Carlo}

Primary: 65C40, 65C60 -- Secondary: 62F15

\end{keyword}

\end{frontmatter}

\section{Introduction}
The development of statistical methodology that scale to large datasets represents a significant research frontier in modern statistics. This paper presents a generic and flexible approach to directly address this challenge when a Bayesian strategy is followed. Given a set of observed data $\coord{Y}[N]$, a specified prior distribution $p$ and a likelihood function $f$, estimating parameters $\param\in\paramset$ of the model proceeds via exploration of the posterior distribution $\targ$ defined on $(\paramset,\borel(\paramset))$ by
\begin{equation}
\label{eq:posterior}
\targ(\rmd\param\,|\,\coor{Y}[N])\propto f(\coor{Y}[N]\,|\,\param)p(\rmd\param)\,.
\end{equation}
Stochastic computation methods such as Monte Carlo methods allow one to estimate characteristics of $\targ$. In Bayesian inference, Markov chain Monte Carlo (MCMC) methods remain the most widely used strategy. Paradoxically, improvements in data acquisition technologies together with increased storage capacities, present a new challenge for these methods. Indeed, the size of the data set $N$ (along with the dimension of each observation) can become so large, that even a routine likelihood evaluation is made prohibitively computationally intensive. As a consequence, MCMC methods such as the Metropolis-Hastings algorithm \citep{metropolis1953equation} cannot be considered for reasonable runtime. This issue has recently generated a lot of research activity, see \cite{bardenet2015markov} for a comprehensive review.

Most of the scalable MCMC methods proposed in the literature are based on approximations of the Metropolis-Hastings (M-H) algorithm. In the sequel, we will refer to as \textit{exact approximations}, algorithms that 
produce samples from the target distribution when the chain is in the stationary regime, as opposed to \textit{approximate} methods that do not. Central to those scalable MCMC approaches is the idea that only the 
calculation of the likelihood of a subset of data would be required to simulate a new state of the Markov chain. Following the development of pseudo-marginal algorithms \citep{andrieu2009pseudo,andrieu2012convergence}, 
a first direction has been to replace the likelihoods in the M-H acceptance ratio by positive unbiased estimators (based on a subset of data). Although appealing since exact, this approach remains (for now) mostly 
theoretical because such estimators are in general not available \citep{jacob2015nonnegative}. Attempts to circumvent the positivity and unbiasedness requirements of the estimator have been studied in 
\cite{quiroz2016exact} and \cite{quiroz2015speeding}, respectively. In both cases, the authors resort to sophisticated control variates, which can be computationally expensive to compute. 

Other authors have proposed to approximate the log-likelihood ratio by subsampling data points (\cite{korattikara2013austerity,bardenet2014towards,bardenet2015markov}), the objective being to mimic the accept/reject decision that would be achieved by the Metropolis-Hastings algorithm. Even though the resulting algorithms are not exact, the \textit{Confidence sampler} proposed in \cite{bardenet2014towards} and refined in \cite{bardenet2015markov} is designed such that the accept/reject decision is, with an arbitrarily high probability, identical to that taken by the Metropolis-Hastings algorithm. The construction of this algorithm, based on concentration inequalities, allows to bound the L1 distance between the stationary distribution of the algorithm and $\pi$. The price to pay is that the number of likelihood evaluations is not fixed but adaptively set by the algorithm at each iteration and, as noted in \cite{bardenet2015markov}, it is of order $\bigO(N)$ when the chain reaches equilibrium. This number can be brought down if an accurate proxy of the log-likelihood ratio, acting as control variates, is available, as demonstrated in \cite{bardenet2015markov}.

More recently, a stream of research has shed light on the use of continuous time Markov processes (Zig-Zag process, Langevin diffusion) to perform Bayesian analysis of tall dataset \citep{bierkens2016zig,pollock2016scalable,fearnhead2016piecewise}. The computational bottleneck for this class of methods is the calculation of the gradient of the log-likelihood and it has been shown that provided that an unbiased estimate of this gradient is used, they remain exact. Here again, the use of control variates to reduce the variance of the estimator is in practice essential to reach the full potential of these methods. However, we note that those approaches represent a significant departure from the M-H algorithm and as such lose its implementational simplicity.

In this paper, we propose Informed Sub-Sampling MCMC (ISS-MCMC), a novel methodology which aims to make the best use of a computational resource available for a given computational run-time, while still preserving the celebrated 
simplicity of the standard M-H sampler. The state space $\Theta$ is extended with an $n$-dimensional vector of unique integers $U_k\subset\{1,\ldots,N\}$ identifying a subset of the data used by the Markov transition kernel at the the $k$-th iteration of the algorithm, where $n\ll N$ is set according to the available computational budget. Central to our approach is the fact that each subset is weighted according to a \textit{similarity measure} with respect to the full set of data through summary statistics, in the spirit of Approximate Bayesian Computation (ABC) \cite[see \eg][]{marin2012approximate}. The subset variable is randomly refreshed at each iteration according to the similarity measure. The Markov chain transition kernel only uses a fraction $n/N$ of the available data which is by construction --and contrary to \cite{maclaurin2014firefly}, \cite{korattikara2013austerity} and \cite{bardenet2014towards}-- held constant throughout the algorithm.
Moreover, unlike most of the papers mentioned before, our method can be applied to virtually any model (involving \iid data or not), as it does not require any assumption on the likelihood function nor on the prior distribution. Our algorithm can be cast as a \textit{noisy} MCMC method since the marginal in $\theta$ of our Markov chain targets an approximation of $\targ$ that we quantify using the framework established in \cite{alquier2014noisy}. In the special case where the data are \iid realizations from an exponential model, we prove that when the summary statistics is set as the sufficient statistics, this yields an optimal approximation, in the sense of minimizing an upper bound of the Kullback-Leibler (KL) divergence between $\targ$ and the marginal target of our method. In the general case, we show that setting the summary statistics as the maximum likelihood estimator allows to bound the approximation error (in L1 distance) of our algorithm. We connect our work to a number of recent papers including \cite{rudolf2015perturbation,huggins2016,dalalyan2017further} that bound approximation error of MCMC algorithms, using the Wasserstein metric.

To summarize, the main contribution of our work is to show that, under verifiable conditions, it is possible to infer $\pi$  through a scalable approximation of the M-H algorithm where the computational budget of each iteration is fixed (through the subset size $n$). To do so, it is necessary to draw the subsets according to a similarity measure with respect to the full data set and not uniformly at random, as previously explored in the literature. We show that setting the similarity measure as the squared L2 distance between the full dataset and subsample maximum likelihood estimators is supported by theoretical arguments.

Section \ref{sec:first_ex} presents a striking real data example which we hope will help the reader to understand the problem we address and motivate the solution we propose, without going into further technical details at this stage. In Section \ref{sec:expo}, we provide theoretical results concerning exponential-family models, which we illustrate through a probit example. This section allows us to justify our motivations supporting the Informed Sub-Sampling general methodology which is rigorously presented in Section \ref{sec:alg}. In Section \ref{sec:conv}, we study the transition kernel of our algorithm and show that it yields a Markov chain targeting, marginally, an approximation of $\targ$. The approximation error is quantified and we provide theoretical justifications for setting up the Informed Sub-Sampling tuning parameters, including the choice of summary statistics. Finally, in Section \ref{sec:sim}, our method is used to estimate parameters of an autoregressive time series and a logistic regression model. It is also illustrated to perform a binary classification task. In the latter example, we compare the performance of our algorithm with the SubLikelihoods approach proposed in \cite{bardenet2014towards}. 

\section{An introductory example}
\label{sec:first_ex}
We showcase the principles of our approach on a first real data example. The problem at hand is to infer some template shapes of handwritten digits from the MNIST database
(\href{http://yann.lecun.com/exdb/mnist/}{http://yann.lecun.com/exdb/mnist/}).

\begin{exa}
\label{ex1}
The data $Y_1,Y_2,\ldots$ are modelled by a deformable template model \citep{allassonniere2007towards}. Each data $Y_i$ is a $15\times15$ pixel image representing an handwritten digit whose conditional distribution given its class $J(i)\in(0,1,\ldots,9)$ is a random deformation of the template shape, parameterized by a $d=256$ dimensional vector $\param_{J(i)}$. Assuming small deformations, the model is similar to a standard regression problem:
\begin{equation}
\label{eq:exa1}
Y_i=\phi(\param_{J(i)})+\sigma^2\eps_i\,,
\end{equation}
where $Y_i$ is regarded as a vector $\rset^{225}$, $\phi:\rset^{256}\to\rset^{225}$ is some deterministic mapping and $\sigma>0$ is the standard deviation of the additive noise $\eps_i\sim\norm(0_{225},\mathrm{Id}_{225})$.
\end{exa}

Given a set of $N$ labeled images $Y_1,Y_2,\ldots,Y_N$ and a prior distribution for $\param=\{\param_1,\ldots,\param_9\}$, one can estimate $\param$ through its posterior distribution $\pi$, for example using the Metropolis-Hastings (M-H) algorithm \citep{metropolis1953equation}. However, since the regression function $\phi$ in \eqref{eq:exa1}  is quite sophisticated, even a single likelihood evaluation is expensive to calculate.  As a result, the M-H efficiency can be questioned as computing the $N$ likelihoods in the M-H ratio dramatically slows down each transition.

At this stage, we do not provide precise details on the Informed Sub-Sampling MCMC method but we simply provide an insight of the rationale of our approach. It designs a Markov chain whose transition kernel targets a scaled version of the posterior distribution of the parameter of interest $\param$ given a random subset of $n$ images ($n\ll N$). More specifically, we inject in the standard M-H transition a decision about \textit{refreshing} the subset of data, which, as a result, will change randomly over time. In this example, we use the knowledge of the observation labels to promote subsets of images in which the proportion of each digit is balanced.

We consider $N=10,000$ images of five digits $1,\ldots,5$, subsets of size $n=100$ and a non-informative Gaussian prior for $\param$, as specified in \cite{allassonniere2007towards}. Figure \ref{fig:template} 
indicates a striking advantage of our method compared to a standard M-H using the same $N=10,000$ images. In this scenario, we allow a fixed computational budget (1 hour) for both methods and compare the estimation 
of the mean estimate of the two Markov chains. Qualitatively, the upper part of Figure \ref{fig:template} compares the estimated template shapes of the five digits at different time steps and shows that ISS-MCMC
allows one to extract template shapes much quicker than the standard M-H, while still reaching an apparent similar graphical quality after one hour. This fact is confirmed quantitatively, in the lower part of Figure \ref{fig:template}, which plots, against time and for both methods, the Euclidean distance between the Markov chain mean estimate and the maximum
likelihood estimate $(\param^{\ast}_1,\ldots,\param^{\ast}_5)$ obtained using a stochastic EM \citep{allassonniere2007towards}. More precisely, we compare the real valued function $\left\{d(t),\;t\in\rset\right\}$ defined as
\begin{equation*}
d(t)=\sum_{j=1}^5\|\param_j^{\ast}-\mu(\param_{j,1:\kappa(t)})\|,\qquad\text{where}
\quad
\left\{
\begin{array}{l}
\forall t\in\rset,\;\kappa(t)=\max_{k\in\nset}\{t\geq \tau_k\}\,,\\
\tau_k\;\text{is\,the\,time\,at\,the\,end\,of\,the\,}k\,\text{-th\,iteration}\,,\\
\forall k\in\nset,\;\mu(\param_{j,1:k})=(1/k)\sum_{\ell=1}^k\param_{j,\ell}\,,
\end{array}
\right.
\end{equation*}
where we have defined for $(j,k)\in\{1,\ldots,5\}\times \nset$, $\theta_{j,k}$ as the $j$-th class parameter obtained after $k$ iterations of the Markov chains. For a vector $x\in\rset^n$, $\|\cdot\|$ will refer to the usual L2 norm on $\rset^n$, unless stated otherwise.

\begin{figure}[h!]
\centering
\begin{tabular}{ccc}
\hspace{-.5cm}time & M-H & \hspace{-.7cm} Informed Sub-Sampling MCMC \\
\hspace{-1.3cm}
\begin{tabular}{c}
\vspace{1cm}3 mins\\
\vspace{1cm}15 mins\\
\vspace{1cm}30 mins\\
\vspace{0.1cm}60 mins
\end{tabular} &\hspace{-.9cm}
\begin{tabular}{c}
\includegraphics[scale=0.41]{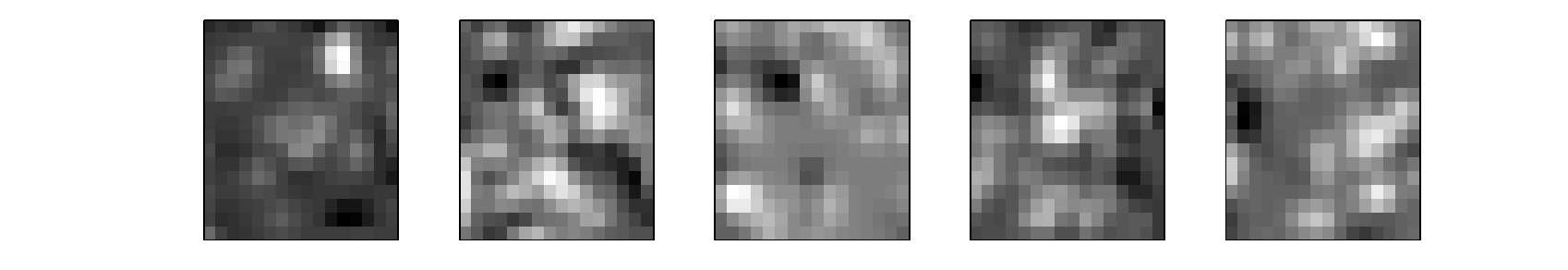}\\
\includegraphics[scale=0.41]{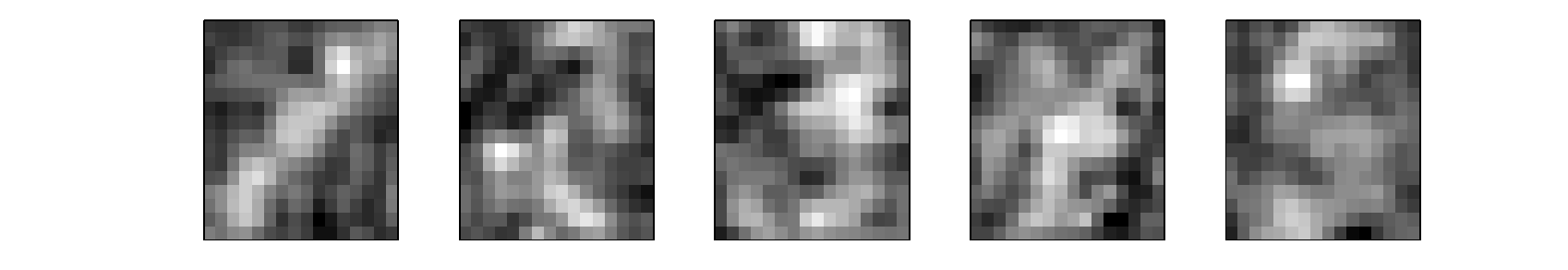}\\
\includegraphics[scale=0.41]{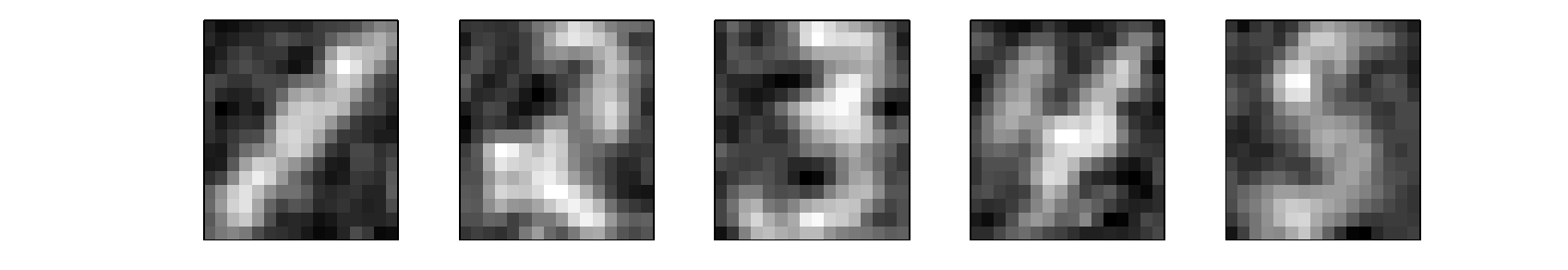}\\
\includegraphics[scale=0.41]{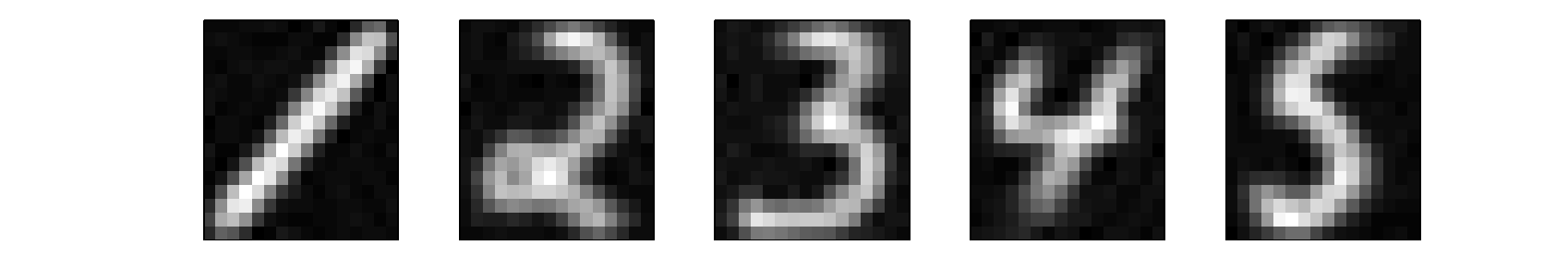}
\end{tabular}&\hspace{-1.5cm}
\begin{tabular}{c}
\includegraphics[scale=0.41]{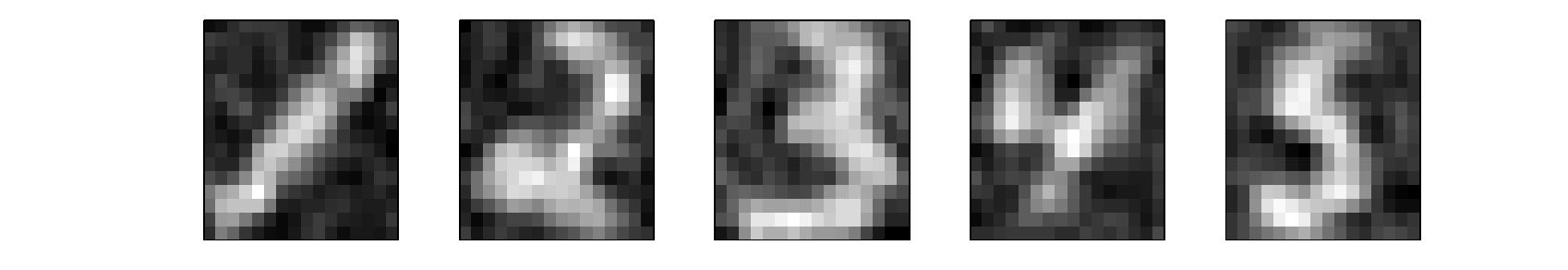}\\
\includegraphics[scale=0.41]{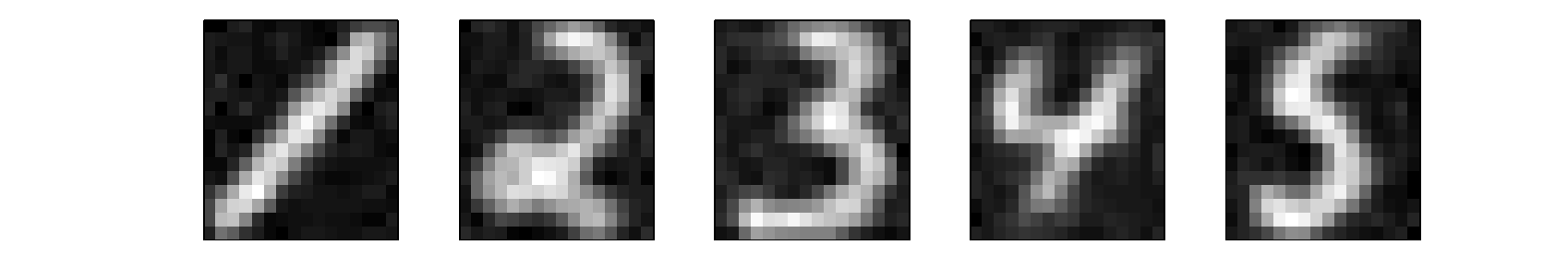}\\
\includegraphics[scale=0.41]{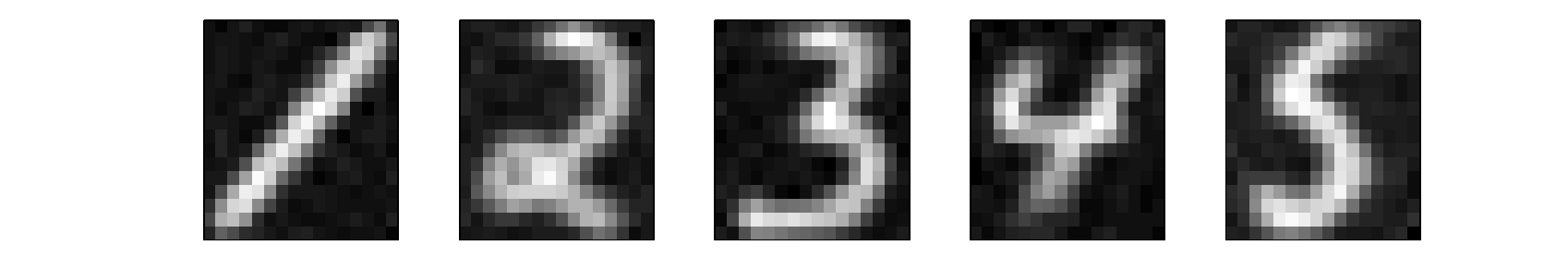}\\
\includegraphics[scale=0.41]{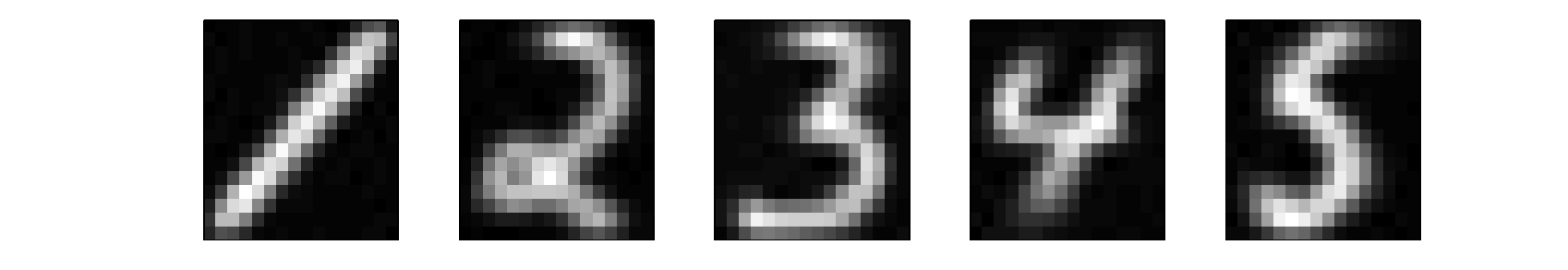}
\end{tabular}
\end{tabular}
\includegraphics[scale=0.5]{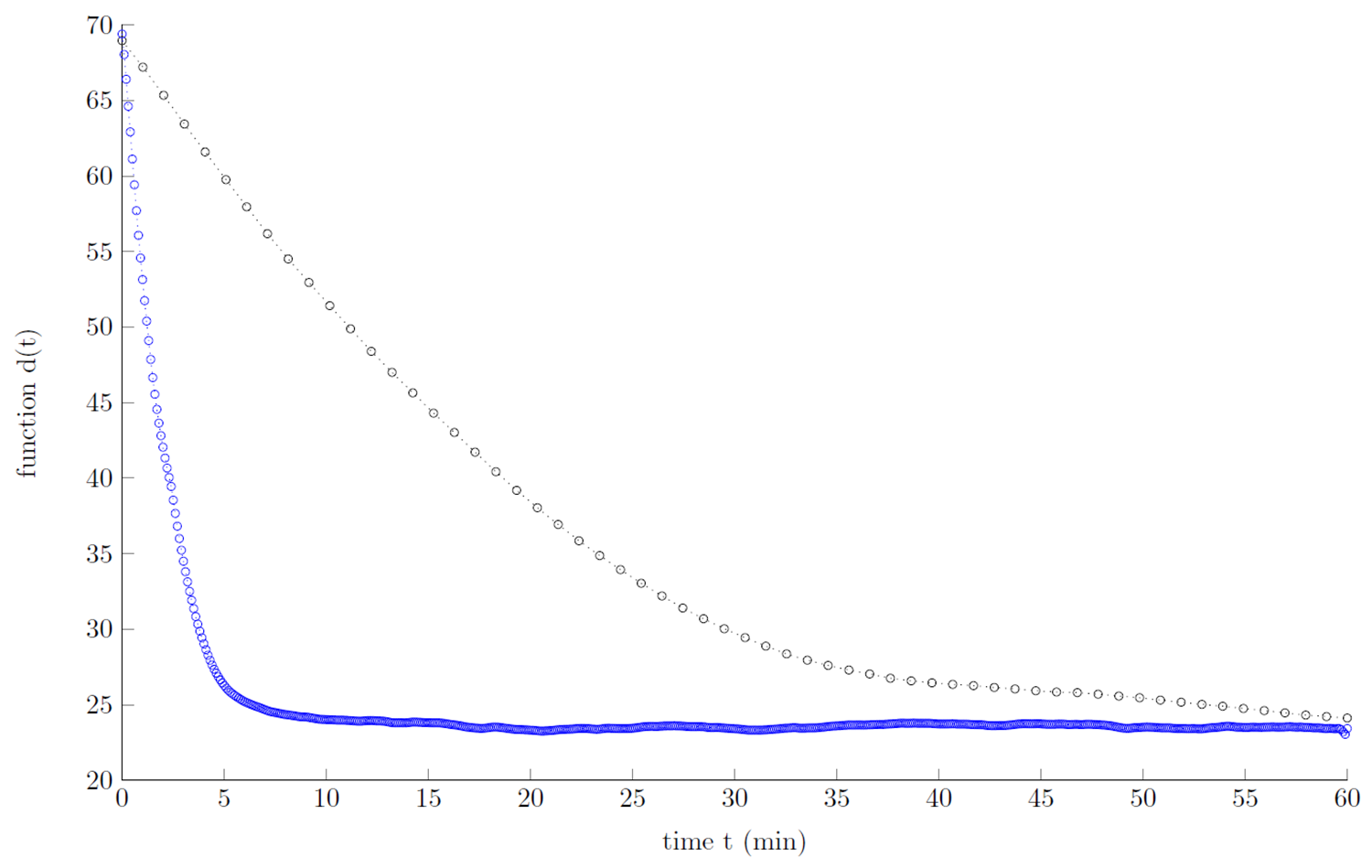}
\caption{(Example \ref{ex1}: Handwritten digits) Efficiency of template estimation through M-H (black) and Informed Sub-Sampling MCMC (blue).}
\label{fig:template}
\end{figure}

One can see that the transient phase of the Informed Sub-Sampling Markov chain is significantly shorter than that of the Metropolis-Hastings chain. More details on Example \ref{ex1} can be found at Section \ref{sec:sim}.  In particular, Figure \ref{fig:template2} shows that the stationary distribution of Informed Sub-Sampling matches reasonably well $\pi$, which is a primary concern in Bayesian inference.

Our algorithm provides very encouraging results for this real data example. We motivate and formalize our method in Sections \ref{sec:expo} and \ref{sec:alg} and provide theoretical arguments supporting it at Section \ref{sec:conv}. 

\section{Approximation of the posterior distribution in exponential models: an optimality result}
\label{sec:expo}
In this section, we consider the case of $N$ independent and identically distributed (\iid) observations from an exponential model. Sampling from the posterior distribution of such models  using the 
Metropolis-Hastings algorithm is effortless since the information conveyed by the $N$ observations is contained in the sufficient statistics vector, which needs to be calculated only once.

The existence of sufficient statistics in this type of models allows us to establish a number of theoretical results that will be used to design and justify our Informed Sub-Sampling methodology that approximately 
samples from posterior distributions in general contexts, \ie non-\iid observations from general likelihood models without sufficient statistics. More precisely, Propositions \ref{prop:KL} and \ref{prop:BvM} put 
forward an optimal approximation of the posterior distribution $\pi$ by a distribution $\ttarg_{n}$ of the parameter of interest given only a subsample of $n$ observations. Finally, Proposition \ref{prop:esp} 
justifies the introduction of a probability distribution on the set of subsamples. This is an essential element of our work as it represents a significant departure from all existing subsampling methodologies 
proposed in the Markov chain Monte Carlo literature, that have assumed uniform distribution on the subsamples.

\subsection{Notation}
Let $\coord{Y}[N]\in \Yset^{N}$ be a set of \iid observed data $(\Yset\subseteq\rset^m,\, m>0)$ and define
\begin{itemize}
\item $Y_{i:j}=(Y_i,\ldots,Y_j)$ if $1\leq i \leq j \leq N$ with the convention that $Y_{i:j}=\{\emptyset\}$, otherwise.
\item $Y_U=\{Y_k,\,k\in U\}$, where $U\subseteq \{1,\ldots,N\}$.
\end{itemize}

In this section, we assume that the likelihood model $f$ belongs to the exponential family and is fully specified by a vector of parameters $\param\in\paramset$, ($\paramset\subseteq \rset^d,\,d>0$), a bounded mapping $g:\paramset\to\suffset $ and a sufficient statistic mapping $S:\Yset\to\suffset$ ($\suffset\subseteq\rset^s,s>0$) such that
$$
f(y\,|\,\param)=\exp\left\{g(\param)\T S(y)\right\}\bigg\slash L(\param)\,,\qquad L(\param)=\int_{Y\in\Yset}\exp\left\{S(y)\T g(\param)\right\}\rmd y\,,
$$
is the density of the likelihood distribution with respect to the Lebesgue measure. The posterior distribution $\targ$ is defined on the measurable space $(\paramset,\paramalg)$ by its density function
\begin{equation}
\label{eq:posterior_full}
\targ(\param\,|\,Y_{1:N})= p(\param) \frac{\exp\left\{\sum_{k=1}^N S(Y_k)\T g(\param)\right\}}{L(\param)^N}\bigg \slash Z(Y_{1:N})\,,
\end{equation}
where
\begin{equation}
Z(Y_{1:N})=\int p(\rmd\param)\frac{\exp\left\{\sum_{k=1}^N S(Y_k)\T g(\param)\right\}}{L(\param)^N}\,.
\end{equation}
$p$ is a prior distribution defined on $(\paramset,\paramalg)$ and with some abuse of notation, $p$ denotes also the probability density function on $\paramset$.

For all $n\leq N$, we define $\Uset_n$ as the set of possible combinations of $n$ different integer numbers less than or equal to $N$ and $\Ualg_n$ as the powerset of $\Uset_n$. 
In the sequel, we set $n$ as a constant and wish to compare the posterior distribution $\pi$ \eqref{eq:posterior_full} with any distribution from the family $\fam_n=\{\ttarg_n(U),\,U\in\Uset_n\}$, where for all $U\in\Uset_n$, we have defined $\ttarg_n(U)$ as the distribution on $(\paramset,\paramalg)$ with probability density function
\begin{equation}
\label{eq:subposterior}
\ttarg_n(\param\,|\,Y_{U})\propto p(\param) f(Y_U\,|\,\param)^{N/n}\,.
\end{equation}

\subsection{Optimal subsets for the Kullback-Leibler divergence between $\targ$ and $\ttarg_n$}
Recall that for two measures $\targ$ and $\ttarg$ defined on the same measurable space $(\paramset,\paramalg)$, the Kullback-Leibler (KL) divergence between $\targ$ and $\ttarg$ is defined as:
\begin{equation}
\label{eq:KL}
\KL(\targ,\ttarg)=\esp_{\targ}\left\{\log\frac{\targ(\param)}{\ttarg(\param)}\right\}\,.
\end{equation}
Although not a proper distance between probability measures defined on the same space, $\KL(\targ,\ttarg)$ is used as a similarity criterion between $\targ$ and $\ttarg$. It can be interpreted in information theory as a measure of the information lost when $\ttarg$ is used to approximate $\targ$, which is our primary concern here. We now state the main result of this section.

\begin{prop}
\label{prop:KL}
For any subset $U\in\Uset_n$, define the vector of difference of sufficient statistics between the whole dataset and the subset $Y_U$ as
\begin{equation}
\label{eq:Delta}
\Delta_n(U)=\sum_{k=1}^NS(Y_k)-(N/n)\sum_{k\in\U}S(Y_k)\,.
\end{equation}
Then, the following inequality holds:
\begin{equation}
\label{eq:KL_inequality}
\KL\left\{\targ,\ttarg_n(U)\right\}\leq B(Y,U)\,,
\end{equation}
where
\begin{equation}
\label{eq:KL_bound}
B(Y,U)=\log\esp_\pi\exp\left\{\|\esp_\pi(g(\param))-g(\param)\|\,\|\Delta_n(U)\|\right\}
\end{equation}
and $\|\cdot\|$ is the L2 norm.
\end{prop}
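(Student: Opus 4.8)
\medskip\noindent\textbf{Proof strategy (proposal).}
The plan is to derive an exact closed-form expression for $\KL\{\targ,\ttarg_n(U)\}$ in the exponential-family setting and then relax it to $B(Y,U)$ by a single application of the Cauchy--Schwarz inequality. First I would write the densities of $\targ$ and $\ttarg_n(U)$ against the same dominating measure. Since $f(Y_U\,|\,\param)^{N/n}=\exp\{(N/n)\sum_{k\in U}S(Y_k)\T g(\param)\}\big/L(\param)^N$, the subposterior \eqref{eq:subposterior} has density $\ttarg_n(\param\,|\,Y_U)=p(\param)\exp\{(N/n)\sum_{k\in U}S(Y_k)\T g(\param)\}\big/\{L(\param)^N\tZ_n(Y_U)\}$, where $\tZ_n(Y_U)$ denotes its normalising constant. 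Taking the log-ratio against \eqref{eq:posterior_full}, the factors $p(\param)$ and $L(\param)^N$ cancel and the two linear-in-$S$ exponents combine into exactly $\Delta_n(U)\T g(\param)$ with $\Delta_n(U)$ as in \eqref{eq:Delta}; hence $\log\{\targ(\param)/\ttarg_n(\param\,|\,Y_U)\}=\Delta_n(U)\T g(\param)-\log Z(Y_{1:N})+\log\tZ_n(Y_U)$.

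Second, I would re-express the intractable ratio of normalising constants as a $\targ$-expectation: substituting $(N/n)\sum_{k\in U}S(Y_k)=\sum_{k=1}^N S(Y_k)-\Delta_n(U)$ into $\tZ_n(Y_U)$ and recognising the posterior density in the resulting integrand gives $\tZ_n(Y_U)=Z(Y_{1:N})\,\esp_\pi\exp\{-\Delta_n(U)\T g(\param)\}$. Integrating the displayed log-ratio against $\targ$ and using \eqref{eq:KL} then yields the identity
\[
\KL\{\targ,\ttarg_n(U)\}=\esp_\pi\{\Delta_n(U)\T g(\param)\}+\log\esp_\pi\exp\{-\Delta_n(U)\T g(\param)\}.
\]
Pulling the first (constant in $\param$) term inside the logarithm as the factor $\exp\{\Delta_n(U)\T\esp_\pi g(\param)\}$, this collapses to $\KL\{\targ,\ttarg_n(U)\}=\log\esp_\pi\exp\{-\Delta_n(U)\T(g(\param)-\esp_\pi g(\param))\}$, i.e.\ the log-Laplace transform of the centred random variable $-\Delta_n(U)\T(g(\param)-\esp_\pi g(\param))$.

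Finally, by Cauchy--Schwarz we have, pointwise in $\param$, $-\Delta_n(U)\T(g(\param)-\esp_\pi g(\param))\le\|\Delta_n(U)\|\,\|g(\param)-\esp_\pi g(\param)\|$, and since $t\mapsto e^t$ and $\log$ are nondecreasing this gives $\KL\{\targ,\ttarg_n(U)\}\le\log\esp_\pi\exp\{\|\esp_\pi g(\param)-g(\param)\|\,\|\Delta_n(U)\|\}=B(Y,U)$, which is \eqref{eq:KL_inequality}--\eqref{eq:KL_bound}. Boundedness of $g$, assumed in the model specification, guarantees that $\esp_\pi g(\param)$ and the exponential moments appearing above are finite, so all the manipulations are legitimate. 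The only step that is not purely mechanical is the second one: recognising that the intractable constant ratio $\tZ_n(Y_U)/Z(Y_{1:N})$ is itself a $\targ$-expectation, and that once it is combined with the explicit linear term the whole divergence reduces to a centred log-Laplace transform; after that, Cauchy--Schwarz does all the remaining work.
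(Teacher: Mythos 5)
Your proposal is correct and follows essentially the same route as the paper's own proof: compute the exact log-ratio, identify $\tZ_n(Y_U)/Z(Y_{1:N})$ as $\esp_\pi\exp\{-\Delta_n(U)\T g(\param)\}$, collapse the divergence to a centred log-Laplace transform, and finish with Cauchy--Schwarz. The only cosmetic difference is that the paper takes $g$ to be the identity without loss of generality, while you carry $g$ throughout.
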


The proof is detailed in \ref{app1} and follows from straightforward algebra and applying Cauchy-Schwartz inequality. Note that by definition of $B$, we remark that for any two subsets $(U_1,U_2)\in\Uset_n^2$,
$$
\|\Delta_n(U_1)\|\leq \|\Delta_n(U_2)\|\;\Rightarrow \;B(Y,U_1)\leq B(Y,U_2)\,.
$$

The following corollary is an immediate consequence of Proposition \ref{prop:KL}.

\begin{coro}
\label{coro1:KL}
Define the set:
\begin{equation}
\label{eq:opti}
\Usetst_n:=\left\{U\in\Uset_n,\quad \frac{1}{N}\sum_{k=1}^N S(Y_k)=\frac{1}{n}\sum_{k\in U}S(Y_k)\right\}\,.
\end{equation}
If $\Usetst_n$ is non-empty, then for any $U\in\Usetst_n$, then $\pi(\theta\,|\,Y)=\ttarg_n(\theta\,|\,Y_U)$, $\pi$-almost everywhere.
\end{coro}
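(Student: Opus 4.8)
The plan is to read off the claim from the bound in Proposition \ref{prop:KL} together with the nonnegativity of the Kullback--Leibler divergence. First I would observe that, by the very definition \eqref{eq:opti} of $\Usetst_n$, membership $U\in\Usetst_n$ is equivalent to the vanishing of the sufficient-statistics discrepancy introduced in \eqref{eq:Delta}, namely $\Delta_n(U)=0$, so that $\|\Delta_n(U)\|=0$. Plugging this into the definition \eqref{eq:KL_bound} of $B(Y,U)$ gives $B(Y,U)=\log\esp_\pi\exp\{0\}=\log 1=0$.

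Next I would apply Proposition \ref{prop:KL}, which then yields $\KL\{\targ,\ttarg_n(U)\}\leq B(Y,U)=0$. Since the Kullback--Leibler divergence between two probability measures on a common measurable space is always nonnegative (Gibbs' inequality, i.e. Jensen applied to the strictly convex map $x\mapsto-\log x$), we must have $\KL\{\targ,\ttarg_n(U)\}=0$. I would then invoke the equality case of Gibbs' inequality: writing $r=\ttarg_n(\param\,|\,Y_U)/\targ(\param\,|\,Y_{1:N})$, equality in Jensen forces $r$ to be $\targ$-almost surely constant, and since $\esp_\targ[r]=\int\ttarg_n(\param\,|\,Y_U)\,\rmd\param=1$ this constant equals $1$; hence $\targ(\param\,|\,Y_{1:N})=\ttarg_n(\param\,|\,Y_U)$ for $\targ$-almost every $\param$. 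Both distributions are dominated by the Lebesgue measure, so this last step is legitimate.

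There is no genuine obstacle here, the corollary being an immediate consequence of Proposition \ref{prop:KL}; the only point deserving a word of care is that the equality case of Jensen is being used for the possibly unbounded variable $\log r$, which is precisely what the standard proof of Gibbs' inequality delivers. As a sanity check one can also derive the identity directly without passing through the divergence: when $\Delta_n(U)=0$ the exponents $\sum_{k=1}^N S(Y_k)\T g(\param)$ and $(N/n)\sum_{k\in U}S(Y_k)\T g(\param)$ coincide for every $\param$, so by \eqref{eq:posterior_full} and \eqref{eq:subposterior} the two unnormalised densities are the same function of $\param$, which after normalisation forces the equality $\targ$-almost everywhere.
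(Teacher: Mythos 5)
Your proposal is correct and follows exactly the route the paper intends: the paper presents this corollary as an immediate consequence of Proposition \ref{prop:KL}, obtained by noting that $U\in\Usetst_n$ forces $\Delta_n(U)=0$, hence $B(Y,U)=0$, hence $\KL\{\targ,\ttarg_n(U)\}=0$ and the two densities agree $\targ$-almost everywhere. Your closing sanity check (the unnormalised densities coincide identically when $\Delta_n(U)=0$) is also valid and arguably the cleanest way to see the result.
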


A stronger result can be obtained under the assumption that a Bernstein-von Mises Theorem \cite{van2000asymptotic,le2012asymptotic} holds for the concentration of $\pi$ to its Normal approximation:
\begin{equation}
\label{eqBvM}
\hpi(\,\cdot\,|\,Y_{1:N}):=\mathcal{N}\left(\thetaMLE(Y_{1:N}),I^{-1}\left(\theta_0\right)/N\right)\,,
\end{equation}
where $\norm$ denotes the Normal distribution, $\thetaMLE(Y_{1:N})=\arg\max_{\theta\in\Theta}f(Y_{1:N}\,|\,\theta)$, $\theta_0\in\Theta$ is some parameter and $I(\theta)$ is the Fisher information matrix given $Y_{1:N}$ at $\theta$.

\begin{prop}
\label{prop:BvM}
Let $(U_1,U_2)\in\Uset_n^2$. Assume that for all $i\in\{1,\ldots,d\}$, $|\Delta_n(U_1)^{(i)}|\leq |\Delta_n(U_2)^{(i)}|$, where $|\Delta_n(U_1)^{(i)}|$ refers to the $i$-th element of $\Delta_n(U_1)$ \eqref{eq:Delta}. Then $\wKL_n(U_1)\leq \wKL_n(U_2)$, where $\wKL_n(U)$ is the Kullback-Leibler divergence between the asymptotic approximation of the posterior $\hpi$ \eqref{eqBvM}  and $\ttarg_n(U)$ \eqref{eq:subposterior}.
\end{prop}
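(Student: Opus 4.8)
\emph{Proof sketch.} The plan is to use the Bernstein--von Mises hypothesis \eqref{eqBvM} to replace both arguments of $\wKL_n(U)$ by Gaussians, obtain a closed form for the divergence as a function of $\Delta_n(U)$, and then read off the monotonicity. Since the statement compares $\ttarg_n(U)$ with $\hpi$ rather than with $\pi$, I would treat \eqref{eqBvM} as exact and work in the canonical parametrization $g(\param)=\param$ (in general, linearize $g$ near $\thetaMLE(Y_{1:N})$). Combining \eqref{eq:posterior_full} and \eqref{eq:subposterior} yields the exponential tilting identity $\ttarg_n(\param\,|\,Y_U)\propto\pi(\param\,|\,Y_{1:N})\exp\{-\Delta_n(U)^{T}g(\param)\}$; substituting $\hpi=\mathcal{N}(\thetaMLE(Y_{1:N}),I^{-1}(\theta_0)/N)$ for $\pi$ and renormalizing, the tilted density is again Gaussian -- the natural Gaussian approximation of $\ttarg_n(U)$ -- \emph{with the same covariance} $I^{-1}(\theta_0)/N$ and mean $\thetaMLE(Y_{1:N})-N^{-1}I^{-1}(\theta_0)\Delta_n(U)$, which coincides to first order with the sub-sample maximum likelihood estimator $\thetaMLE(Y_U)$.

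Because the two covariances coincide, the Kullback--Leibler divergence between the two Gaussians reduces to the mean-discrepancy term, and a direct computation gives
\begin{equation*}
\wKL_n(U)\;=\;\frac{1}{2N}\,\Delta_n(U)^{T}\,I^{-1}(\theta_0)\,\Delta_n(U)\,,
\end{equation*}
a positive-definite quadratic form in $\Delta_n(U)$, in the same spirit as the bound $B(Y,U)$ of Proposition \ref{prop:KL}, which likewise increased with $\|\Delta_n(U)\|$.

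It then remains to show this quadratic form is nondecreasing with respect to the componentwise order on $\bigl(|\Delta_n(U)^{(i)}|\bigr)_i$, which I expect to be the real obstacle. It goes through cleanly when $I(\theta_0)$ -- hence $I^{-1}(\theta_0)$ -- is diagonal, for then $\wKL_n(U)=(2N)^{-1}\sum_i\bigl(I^{-1}(\theta_0)\bigr)_{ii}\bigl(\Delta_n(U)^{(i)}\bigr)^{2}$ is a sum of functions each nondecreasing in $|\Delta_n(U)^{(i)}|$; equivalently, one runs the whole BvM-plus-tilting argument coordinate by coordinate. With non-diagonal Fisher information the sign-dependent cross terms of a general quadratic form are not controlled by a purely coordinatewise bound on $|\Delta_n(U)|$, so one would instead establish the inequality for the monotone surrogate $(2N)^{-1}|\Delta_n(U)|^{T}|A|\,|\Delta_n(U)|\ge\wKL_n(U)$, with $A=I^{-1}(\theta_0)$ and $|A|$ its entrywise modulus -- a nonnegative matrix, whose quadratic form is nondecreasing on the nonnegative orthant. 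Finally, if \eqref{eqBvM} is not taken as exact, one must additionally control the residual $\esp_{\hpi}\bigl[\log(\hpi/\pi)\bigr]$ and the $g$-linearization error uniformly over the finitely many $U\in\Uset_n$.
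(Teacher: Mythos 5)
Your derivation lands on exactly the closed form the paper obtains, $\wKL_n(U)=\tfrac12\Delta_n(U)\T\Gamma_N\Delta_n(U)$ with $\Gamma_N=I^{-1}(\theta_0)/N$, by a mildly different but equivalent route: the paper substitutes the Gaussian approximation into the identity $\KL_n(U)=\log\esp_\pi\exp[\{\esp_\pi(\theta)-\theta\}\T\Delta_n(U)]$ established in the proof of Proposition \ref{prop:KL} and evaluates a Gaussian moment generating function, whereas you recognise $\ttarg_n(U)$ as an exponential tilt of $\hpi$, hence a Gaussian with the same covariance and a mean shifted by $-\Gamma_N\Delta_n(U)$, and invoke the closed-form KL between two Gaussians with common covariance. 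Your concern about the final monotonicity step is well founded, and it is the most valuable part of your write-up: the paper's own proof stops at the quadratic-form identity and never explains why the componentwise hypothesis $|\Delta_n(U_1)^{(i)}|\leq|\Delta_n(U_2)^{(i)}|$ forces $\Delta_n(U_1)\T\Gamma_N\Delta_n(U_1)\leq\Delta_n(U_2)\T\Gamma_N\Delta_n(U_2)$. As you say, this is immediate when $I^{-1}(\theta_0)$ is diagonal but can fail otherwise: with
\begin{equation*}
\Gamma_N=\begin{pmatrix}1&0.9\\0.9&1\end{pmatrix},\qquad \Delta_n(U_1)=(1,1)\T,\qquad \Delta_n(U_2)=1.01\,(1,-1)\T,
\end{equation*}
the hypothesis holds yet the quadratic forms are $3.8$ and $\approx 0.204$ respectively. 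Be aware, however, that your proposed repair does not actually salvage the claim: showing that the entrywise-modulus surrogate $(2N)^{-1}|\Delta_n(U)|\T|A|\,|\Delta_n(U)|$ dominates $\wKL_n(U)$ and is itself monotone only orders the surrogate, not $\wKL_n$. A clean fix requires an additional hypothesis (diagonal Fisher information, nonnegative entries of $\Gamma_N$ together with sign-consistent $\Delta_n(U_1)$, $\Delta_n(U_2)$, or replacing the componentwise ordering by an ordering in the $\Gamma_N$-norm), and the same caveat applies to the paper's proof as written.
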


The proof is detailed in \ref{app2}. Note that the asymptotic approximation is for $N\to\infty$ and for a fixed $n$ and is thus relevant to the context of our analysis.

\subsection{Weighting the subsamples}
Consider the distribution $\nu_{n,\eps}$ on the discrete space $\Uset_n$ defined for all $\epsilon\geq 0$ by:
\begin{equation}
\label{eq:nu}
\nu_{n,\eps}(U)\propto \exp\left\{-\epsilon\|\Delta_n(U)\|^2\right\}\,,\qquad \text{for all}\;U\in\Uset_n\,.
\end{equation}
$\nu_{n,\eps}$ assigns a weight to any subset according to their representativeness with respect to the full dataset. When $\eps=0$, $\nu_{n,\eps}$ is uniform on $\Uset_n$ while when $\eps\to\infty$, $\nu_{n,\eps}$ is uniform on the set of subset(s) that minimize(s) $U\mapsto\|\Delta_n(U)\|$.  Proposition \ref{prop:BvM} suggests that for exponential models, the optimal inference based on subsamples of size $n$ is obtained by picking the subposterior $\pi_n(U)$ \eqref{eq:subposterior} using the distribution $U\sim\nu_{n,\eps}$ with $\eps\to \infty$.

We now state Proposition \ref{prop:esp}. This result is important even though somewhat obscure at this stage. Indeed, we will show that it is a necessary condition for the method we introduce in Section \ref{sec:alg} to converge. In fact, moving away to general models (\ie non \iid and non exponential) amounts to relax the sufficient statistics existence assumption as well as the  $\eps\to\infty$ condition. This will be achieved by constructing a class of summary statistics for the model at hand for which a similar result to Proposition \ref{prop:esp} holds.

\begin{prop}
\label{prop:esp}
For any $\theta\in\Theta$ and $\eps>0$, there exists $M<\infty$ such that:
\begin{equation}
\label{upper}
\esp_{n,\eps}\left\{\frac{f(Y\,|\,\theta)}{f(Y_U\,|\,\theta)^{N/n}}\right\}<M\,,
\end{equation}
where $\esp_{n,\eps}$ is the expectation under $\nu_{n,\eps}$, as defined in \eqref{eq:nu}.
\end{prop}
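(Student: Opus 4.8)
The plan is to exploit the exponential-family structure so that the ratio inside \eqref{upper} collapses to a single exponential, and then to control its $\nu_{n,\eps}$-expectation using elementary estimates together with the finiteness of $\Uset_n$. First I would substitute the exponential-family densities: since $f(Y\,|\,\param)=\exp\{g(\param)\T\sum_{k=1}^{N}S(Y_k)\}\big/L(\param)^{N}$ and, because $|U|=n$, $f(Y_U\,|\,\param)^{N/n}=\exp\{(N/n)\,g(\param)\T\sum_{k\in U}S(Y_k)\}\big/L(\param)^{N}$, the powers of the normalising constant $L(\param)$ cancel exactly, and by the definition \eqref{eq:Delta} of $\Delta_n(U)$ the exponents combine into
\begin{equation*}
\frac{f(Y\,|\,\param)}{f(Y_U\,|\,\param)^{N/n}}=\exp\left\{g(\param)\T\Delta_n(U)\right\}\,.
\end{equation*}

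For a fixed $\param$, the right-hand side is a finite positive number for each of the finitely many $U\in\Uset_n$, so its $\nu_{n,\eps}$-expectation is automatically finite and any $M$ strictly exceeding it proves \eqref{upper}. To pin down an explicit constant I would push a little further: Cauchy--Schwarz gives $g(\param)\T\Delta_n(U)\leq\|g(\param)\|\,\|\Delta_n(U)\|$, so the ratio is at most $\exp\{\|g(\param)\|\,\|\Delta_n(U)\|\}$. Writing $\esp_{n,\eps}$ out via \eqref{eq:nu} as a ratio of finite sums over $\Uset_n$, completing the square in $t=\|\Delta_n(U)\|\geq0$ to bound each numerator summand by $\exp\{\|g(\param)\|^2/(4\eps)\}$ (legitimate since $\eps>0$), and lower-bounding the denominator by its largest summand $\exp\{-\eps\min_{U\in\Uset_n}\|\Delta_n(U)\|^2\}$, one obtains
\begin{equation*}
\esp_{n,\eps}\left\{\frac{f(Y\,|\,\param)}{f(Y_U\,|\,\param)^{N/n}}\right\}\leq\binom{N}{n}\exp\left\{\frac{\|g(\param)\|^2}{4\eps}+\eps\min_{U\in\Uset_n}\|\Delta_n(U)\|^2\right\}<\infty\,;
\end{equation*}
if a bound uniform in $\param$ is wanted one may replace $\|g(\param)\|$ by $\sup_{\param\in\paramset}\|g(\param)\|$, which is finite because $g$ is assumed bounded.

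I do not expect a genuine obstacle here. The only step worth pausing on is the first identity, where the factors $L(\param)^{N}$ cancel and the exponents combine precisely into $g(\param)\T\Delta_n(U)$ --- the very quantity that already governs the discrepancy between $\targ$ and $\ttarg_n(U)$ in the proof of Proposition \ref{prop:KL}. Everything after that follows from the finiteness of $\Uset_n$ alone; the completing-the-square refinement is included only to make the dependence of $M$ on $(\param,\eps)$ transparent, since that is the form that will later be transported to the general, non-exponential models of Section \ref{sec:alg}.
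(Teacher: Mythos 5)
Your proof is correct and follows essentially the same route as the paper's: both reduce the ratio to $\exp\{g(\theta)\T\Delta_n(U)\}$, apply Cauchy--Schwarz to get the bound $\exp\{\|g(\theta)\|\,\|\Delta_n(U)\|\}$, and then exploit the fact that the quadratic penalty in $\nu_{n,\eps}$ dominates this linear growth in $\|\Delta_n(U)\|$. Your completing-the-square step, yielding the explicit constant $\binom{N}{n}\exp\{\|g(\theta)\|^2/(4\eps)+\eps\min_{U}\|\Delta_n(U)\|^2\}$, is in fact tidier than the paper's informal limiting argument $\nu_{n,\eps}(U)\exp\{\|g(\theta)\|\|\Delta_n(U)\|\}\to 0$, and makes the dependence on $\eps$ transparent.
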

The proof is postponed to \ref{sec:proof:prop3}. Note that Proposition \ref{prop:esp} essentially holds because $\log \nu_{n,\eps}$ is quadratic in $\|\Delta_n(U)\|$. Other weighting schemes for the subsets (e.g. uniform weights or weights $\propto \exp\{-\eps\|\Delta_n(U)\|\}$) would not necessarily allow to bound $\esp_{n,\eps}\{f(Y\,|\,\theta)\slash f(Y_U\,|\,\theta)^{N/n}\}$.

\subsection{Illustration with a probit model: effect of choice of sub-sample}
\label{sec:expo:subsec}
We consider a pedagogical example, based on a probit model, to illustrate the results from the previous subsections.

\begin{exa}
\label{ex2}
A probit model is used in regression problems in which a binary variable $Y_k\in\{0,1\}$ is observed through the following sequence of independent random experiments, defined for all $k\in\{1,\ldots,N\}$ as:
\begin{enumerate}[(i)]
\item Draw $X_k\sim \norm(\paramst,\gamma^2)$
\item Set $Y_k$ as follows
\begin{equation}
\label{eq:probit_model}
Y_k=
\left\{
\begin{array}{ll}
1, & \text{if}\; X_k>0,\\
0, & \text{otherwise}.
\end{array}
\right.
\end{equation}
\end{enumerate}
\end{exa}

Observing a large number of realizations $Y_1,\ldots,Y_N$, we aim to estimate the posterior distribution of $\param$. If $\gamma$ is unknown, the model is not identifiable and for simplicity we considered it as known here. The likelihood function can be expressed as
\begin{equation}
\label{eq:probit}
f(Y_k\,|\,\param)=\alpha(\param)^{Y_k}(1-\alpha(\param))^{(1-Y_k)}=\left(1-\alpha(\param)\right)\left(\frac{\alpha(\param)}{1-\alpha(\param)}\right)^{Y_k},
\end{equation}
where $\alpha(\param)=\int_{0}^{\infty}(2\pi\gamma^2)^{-1/2}\exp\{-(1/2\gamma^2)(t-\param)^2\}\rmd t$ and clearly belongs to the exponential family. The pdf of the posterior distribution $\pi$ and any distribution $\ttarg_n(U)\in\fam_n$ writes respectively as
$$
\targ(\param\,|\,Y_{1:N})\propto p(\param)\left(1-\alpha(\param)\right)^{N}\left(\frac{\alpha(\param)}{1-\alpha(\param)}\right)^{\sum_{k=1}^N Y_k}\,,
$$
$$
\ttarg_n(\param\,|\,Y_U)\propto p(\param)\left(1-\alpha(\param)\right)^{N}\left(\frac{\alpha(\param)}{1-\alpha(\param)}\right)^{(N/n)\sum_{k\in U} Y_k}\,,
$$
where $p$ is a prior density on $\theta$. Again, in this example, the posterior density is easy to evaluate pointwise, even when $N$ is extremely large, as it only requires to sum over all the binary variables $Y_1,\ldots,Y_N$. As a consequence, samples from $\pi$ can routinely be obtained by a standard M-H algorithm and similarly for any distribution $\ttarg_n({U})\in\fam_n$.

\begin{figure}
\centering
\includegraphics[scale=0.65]{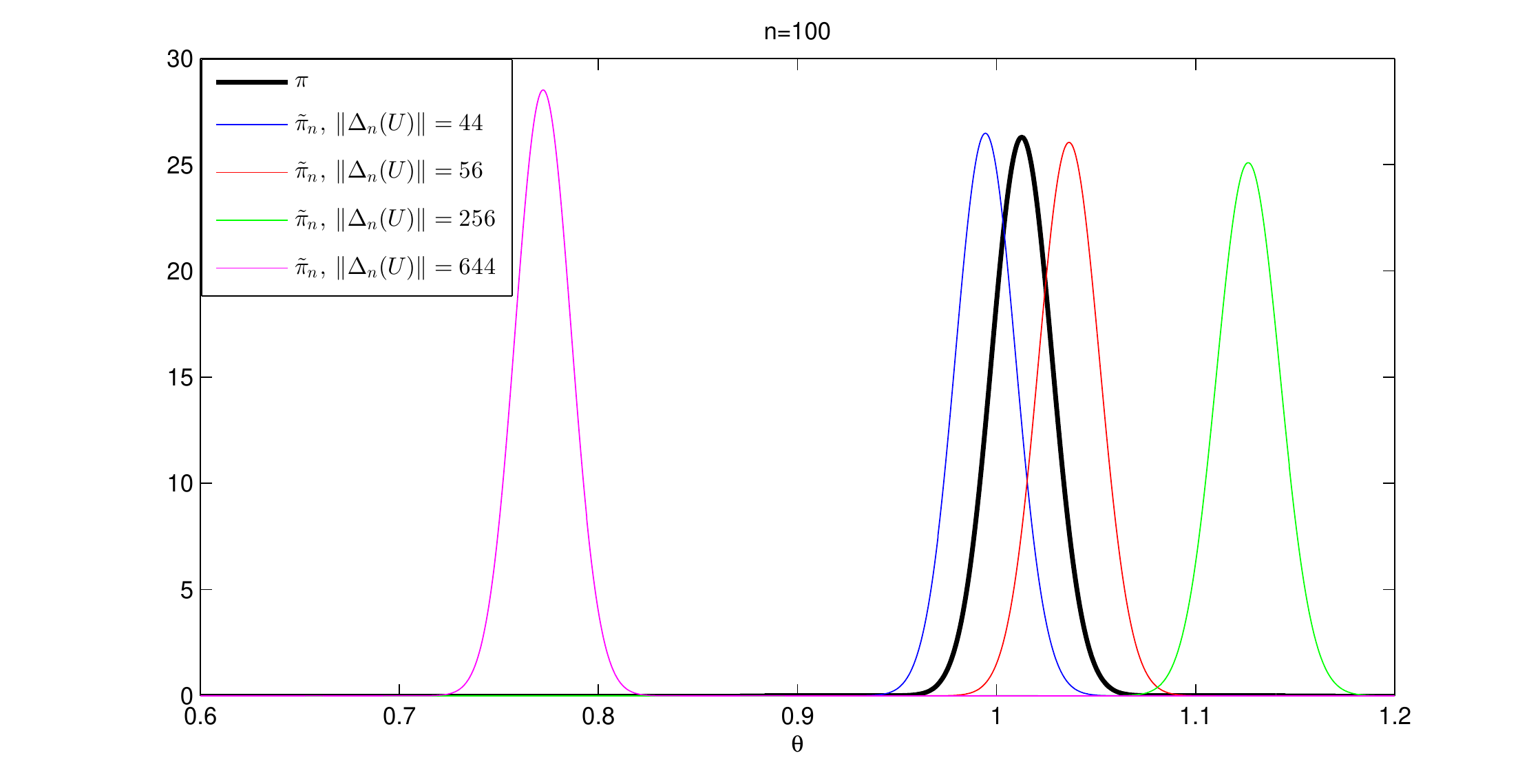}
\includegraphics[scale=0.65]{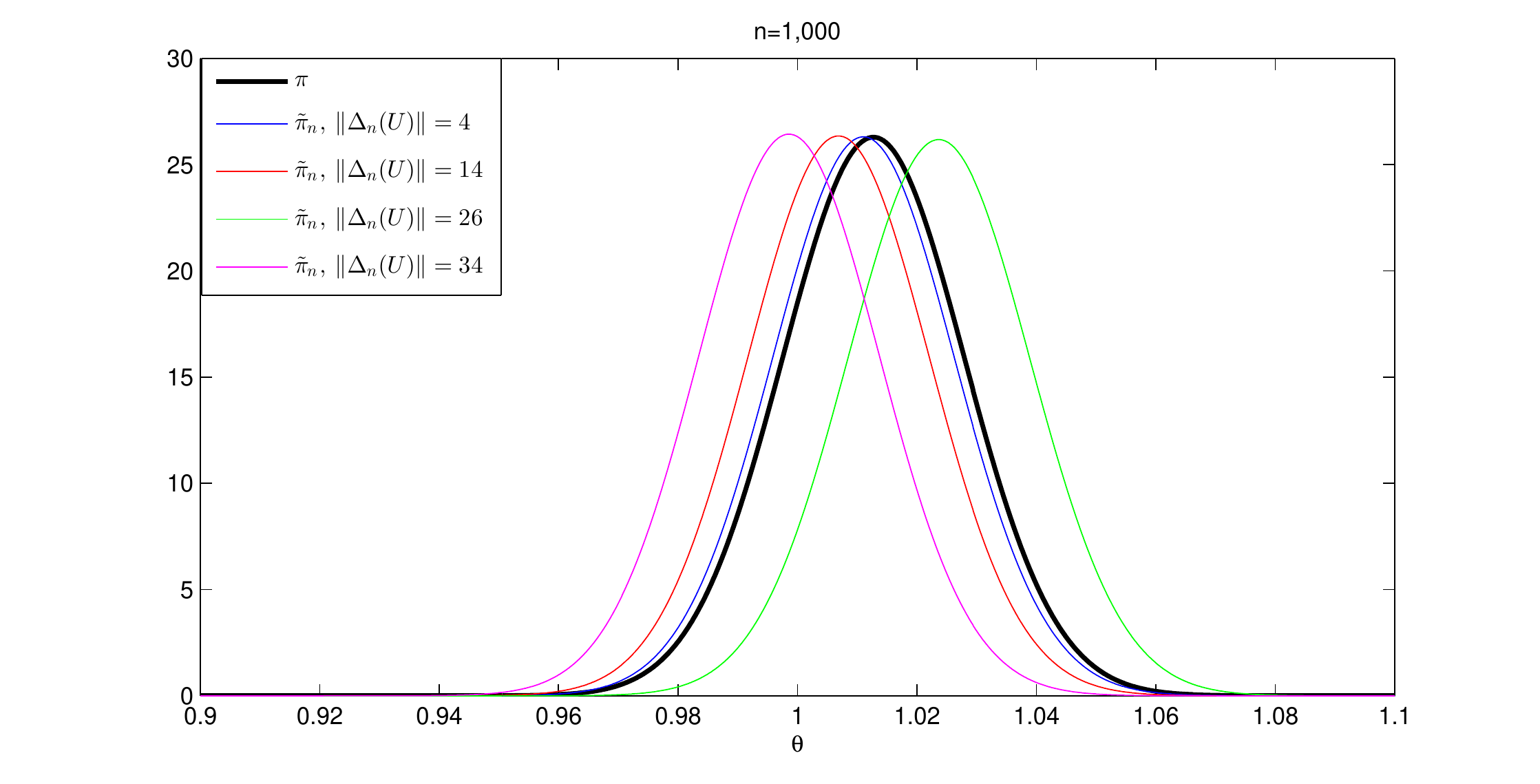}
\caption{(Example \ref{ex2}: Probit model) Influence of the parameter $U\in\Uset_{n}$ on the sub-posterior distribution $\ttarg_n(U)$ and comparison with $\targ$ for subsets of size $n=100$ (top) and $n=1,000$ (bottom).}
\label{fig:probit}
\end{figure}

\begin{table}
\centering
\begin{tabular}{|c|c|c|c|}
\hline
$n$ & $\|\Delta_n(U)\|$ & $\KL\left\{\targ,\ttarg_n(U)\right\}$ & $B(Y,U)$\\
\hline
$1,000$ & $3$ & $0.004$ & $0.04$\\
$1,000$ &$14$ & $0.11$ & $0.18$\\
$1,000$ &$23$ & $0.19$ & $0.29$\\
$100$ &$33$ & $0.41$ & $0.54$\\
\hline
\end{tabular}

\vspace{.3cm}
\caption{(Example \ref{ex2}: Probit model) Comparison of the KL divergence between $\pi$ and the optimal $\ttarg_n\in\fam_n$ ($\|\Delta_n(U)\|=3$) and other distributions in $\fam_n$.}
\label{tab:KL}
\end{table}

We simulated $N=10,000$ simulated data $Y_1,\ldots,Y_N$ from \eqref{eq:probit_model}, with true parameter $\theta^\ast=1$. We used the prior distribution $p=\norm(0,10)$. In this probit model, $S$ is simply the 
identity function, implying that $\|\Delta_n(U)\|$ gives the absolute value of the difference between the scaled proportion of 1 and 0's between the full dataset and the subset $Y_U$. Figure \ref{fig:probit} 
reports the density functions of $\pi$ and several other distributions $\ttarg_n(U)\in\fam_n$, for $n=100$ and $n=1,000$, with different values for the quantity $\|\Delta_n(U)\|$ \eqref{eq:Delta}. This plot, as 
well as the quantitative result of Table \ref{tab:KL} are consistent with the statement of Corollary \ref{prop:KL}: when learning from a subsample of $n$ data, one should work with a subset $U$ featuring a perfect 
match with the full dataset, \ie $\|\Delta_n(U)\|=0$, or as small as possible to achieve an \textit{optimal} approximation of $\targ$. Finally, Figure \ref{fig:upper} illustrates Proposition \ref{prop:esp}: assigning 
the distribution $\nu_{n,\eps}$ \eqref{eq:nu} to the subsamples allows one to control the expectation of the likelihood ratio $f(Y\,|\,\theta)\slash f(Y_U\,|\,\theta)^{N/n}$ around 1.

\begin{figure}
\centering
\includegraphics[scale=0.65]{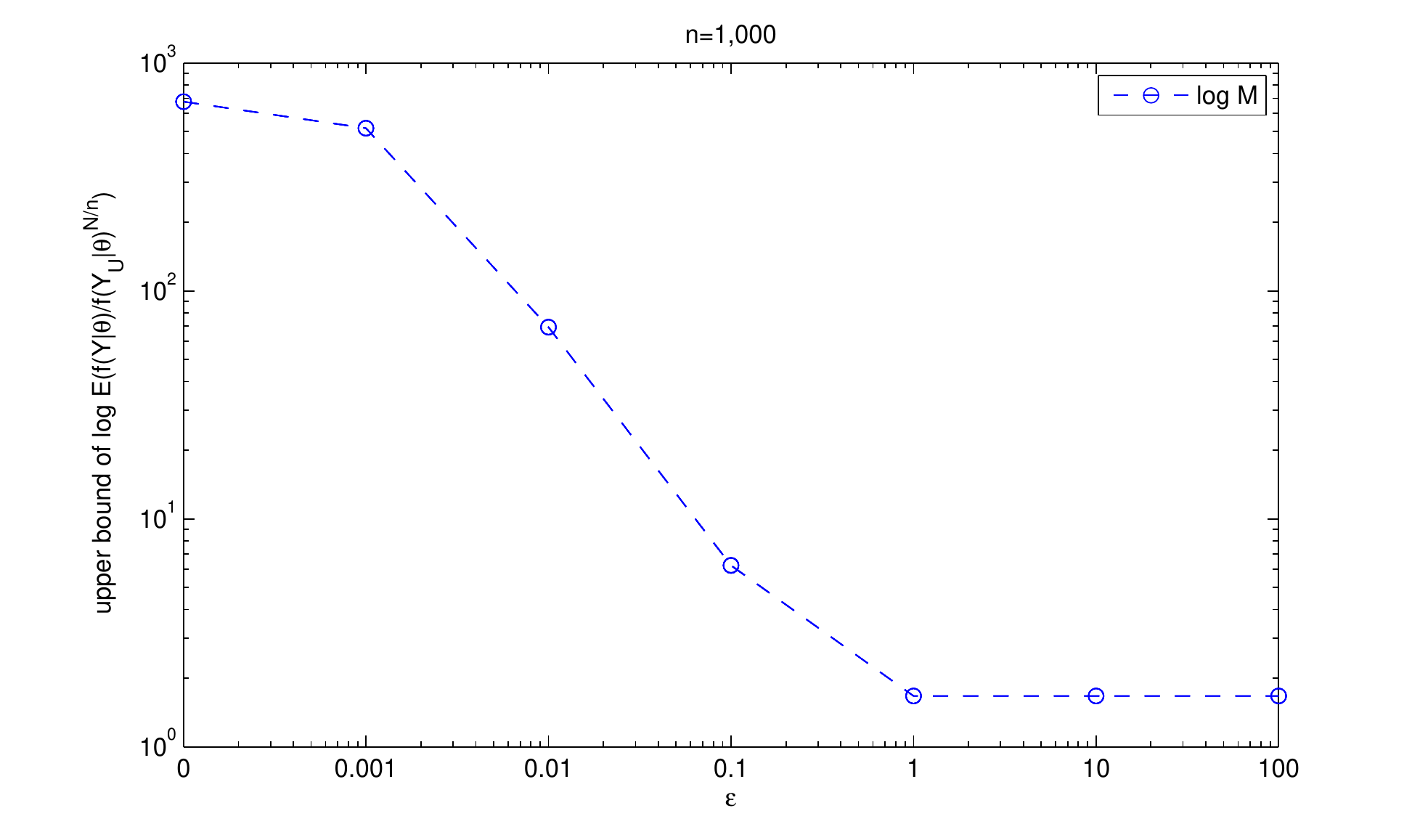}
\caption{(Example \ref{ex2}: Probit model) Influence of the parameter $\eps$ of the distribution $\nu_{n,\eps}$ on the upper bound $M$ of $\esp\{f(Y\,|\,\theta)\slash f(Y_U\,|\,\theta)^{N/n}\}$ for $\theta\in(0,1.5)$, for $n=1,000$. When $\epsilon=0$, $\nu_{n,\eps}$ is uniform (i.e. an identical weight is assigned to all the subsamples) and as a consequence $M\equiv\infty$. Conversely, when $\epsilon\gg 0$, the mass of $\nu_{n,\eps}$ spreads over the best subsamples $Y_U,\,U\in\Uset_n$ (i.e. those minimizing $\Delta_n(U)$) and the bound $M$ is smaller than $e^2$. Indeed, by assigning a weight $\nu_{n,\eps}(U)\propto \exp\{-\eps\Delta_n(U)^2\}$ those subsamples $Y_U,\,U\in\Uset_n$ that have a large $\Delta_n(U)$ will yield a negligible contribution to the expectation, hence preventing from divergence.}
\label{fig:upper}
\end{figure} 

\section{Informed Sub-Sampling MCMC}
\label{sec:alg}
In this section, we do not assume any particular correlation pattern for the sequence of observations, nor any specific likelihood model and simply write the posterior distribution $\targ$ as
\begin{equation}
\targ(\rmd\param\,|\,{Y}_{1:N})\propto p(\rmd\param) f(Y_{1:N}\,|\,\param)\,.
\end{equation}

The Informed Sub-Sampling MCMC (ISS-MCMC) methodology that we describe now can be regarded as an extension of the approximation detailed in the previous section to non-exponential family models with possibly dependent observations.

\subsection{Motivation of our approach}
Central to our approach is the idea that all subsamples $Y_U$ $(U\in\Uset_n)$ are not equally valuable for inferring $\pi$. Here, we do not assume the existence of a sufficient statistic mapping for the models under consideration. Thus, in order to discriminate between different subsamples, we introduce an \textit{artificial} summary statistic mapping $S:\Yset_n\to\suffset$ ($n\leq N$), where $\suffset\subseteq \rset^s$. The choice of the summary statistics $S$ is problem specific and is meant to be the counterpart of the sufficient statistic mapping for general models (hence sharing, slightly abusively, the same notation). Since the question of specifying summary statistics also arises in Approximate Bayesian Computation (ABC), one can take advantage of the abundant ABC literature on this topic to find some examples of summary statistics for usual likelihood models \cite[see \eg][]{nunes2010optimal,csillery2010approximate,marin2012approximate,fearnhead2012constructing}. More details on validation of summary statistics are discussed in Section \ref{sec:alg:sumstat}.

Because the statistics used to assess the representativeness of a subsample $Y_U$ w.r.t. the full dataset $Y$ are only \textit{summary} and not \textit{sufficient}, the results of Section \ref{sec:expo} are no longer valid. In particular, should an optimal subset $U^{\ast}$ minimising a distance between $S(Y_U)$ and $S(Y)$ exist, inferring $\pi$ through the approximation $\ttarg_n(U^\ast)$ is in no sense optimal. In fact, as shown in several examples of Section \ref{sec:sim}, this approximation is usually poor. In such a setting, it is reasonable to consider extending the set of subsamples of interest to a pool of \textit{good} subsamples. This naturally suggests using the distribution $\nu_{n,\eps}$ \eqref{eq:nu} to discriminate between the subsamples, replacing sufficient by summary statistics and relaxing the assumption $\eps\to\infty$, in order to account for a collection of good subsamples. Before proceeding to the presentation of our algorithm, we define the following quantities related to a subset $U\in\Uset_n$:
\begin{equation}
\label{eq:suff_stat}
\bar{S}(Y_U)=S(Y_U)/n\,,\qquad \bar{\Delta}_n(U)=S(Y)/N-S(Y_u)/n\,.
\end{equation}

\subsection{Informed Sub-Sampling MCMC: the methodology}

Informed Sub-Sampling MCMC is a scalable adaptation of the Metropolis-Hastings algorithm \citep{metropolis1953equation}, designed for situations when $N$ is prohibitively large to perform inference on the posterior $\pi$ in a reasonable time frame.  ISS-MCMC relies on a Markov chain whose transition kernel has a bounded computational complexity, which can be controlled through the parameter $n$. We first recall how the Metropolis-Hastings algorithm produces a $\pi$-reversible Markov chain $\{\theta_i,\,i\in\nset\}$, for any distribution $\pi$ known up to a normalizing constant. The index $i$ is used hereafter to refer to the Markov chain iteration.

\subsubsection{Metropolis-Hastings}
Let $Q$ be a transition kernel on $(\Theta,\borel(\Theta))$ and assume that the Metropolis-Hastings Markov chain is at state $\theta_i$. A transition $\theta_i\to \theta_{i+1}$ consists in the two following step:
\begin{enumerate}[(a)]
\item propose a new parameter $\theta\sim Q(\theta_i,\cdot)$
\item set the next state of the Markov chain as $\theta_{i+1}=\theta$ with probability
\begin{equation}
\label{eq:MHratio}
\alpha(\theta_i,\theta)=1\wedge a(\theta_i,\theta)\,,\qquad
a(\theta_i,\theta)=\frac{\pi(\theta\,|\,Y)Q(\theta,\theta_i)}{\pi(\theta_i\,|\,Y)Q(\theta_i,\theta)}
\end{equation}
and as $\theta_{i+1}=\theta_i$ with probability $1-\alpha(\theta_i,\theta)$.
\end{enumerate}
Algorithm \ref{alg:mh} details how to simulate a Metropolis-Hastings Markov chain $\{\theta_i,\,i\in\nset\}$.

\begin{algorithm}
\caption{Metropolis-Hastings algorithm}
\label{alg:mh}
\begin{algorithmic}[1]
\State {\bf{Input: initial state $\param_0$ and posterior evaluation $\pi(\theta_0\,|\,Y)$}}
\For{$i=1,2,\ldots$}
\State propose a new parameter $\param\sim Q(\param_{i-1};\,\cdot\,)$ and draw $I\sim \text{unif}(0,1)$
\State compute $\targ(\theta\,|\,Y)$ and $a=a(\theta_{i-1},\theta)$ defined in \eqref{eq:MHratio}
\If{$I\leq a$}
\State  set $\param_{i}=\param$
\Else
\State set  $\param_{i}=\param_{i-1}$
\EndIf
\EndFor
\State {\bf{return: the Markov chain $\{\param_{i},\,i\in\nset\}$}}
\end{algorithmic}
\end{algorithm}

\subsubsection{Informed Sub-Sampling MCMC}

To avoid any confusion, we denote by $\{\ttheta_i,\,i\in\nset\}$ the sequence of parameters generated by the Informed Sub-Sampling Markov chain, by contrast to the Markov chain $\{\theta_i,\,i\in\nset\}$ produced by the Metropolis-Hastings algorithm (Alg. \ref{alg:mh}). The pool of \textit{good} subsamples used in the Informed Sub-Sampling inference is treated as a sequence of missing data $U_1,U_2,\ldots$ and is thus simulated by our algorithm. More precisely, ISS-MCMC produces a Markov chain $\{(\ttheta_i,U_i),\,i\in\nset\}$ on the extended space $\Theta\times \Uset_n$. Inspired by the analysis of Section \ref{sec:expo}, the sequence of subsamples $\{U_i,\,i\in\nset\}$ is randomly updated in a way that favours those subsets whose summary statistics vector is close to that of the full dataset. Let $R$ be a symmetric transition kernel on $(\Uset_n,\mathcal{U}_n)$, a transition $(\ttheta_i,U_i)\to (\ttheta_{i+1},U_{i+1})$ consists in the two following steps:

\begin{enumerate}[i-]
\item \begin{enumerate}[(a)]
\item propose a new subset variable $U\sim R(U_i,\,\cdot\,)$
\item set $U_{i+1}=U$ with probability
\begin{equation}
\label{eq:accsubset}
\beta(U_i,U)=1\wedge b(U_i,U)\,,\quad b(U_i,U)=\exp\left\{\eps\left(\|\Delta_n(U_i)\|^2-\|\Delta_n(U)\|^2\right)\right\}
\end{equation}
and $U_{i+1}=U_i$ with probability $1-\beta(U_i,U)$. $\Delta_n$ is defined at Eq. \eqref{eq:Delta}.
\end{enumerate}

\item \begin{enumerate}[(a)]
\item propose a new parameter $\ttheta\sim Q(\ttheta_i,\cdot)$
\item set $\ttheta_{i+1}=\ttheta$ with probability
\begin{equation}
\label{eq:InfomedSubratio}
\talpha(\ttheta_i,\ttheta)=1\wedge\ta(\ttheta_i,\theta\,|\,U_{i+1})\,,\quad \ta(\ttheta_i,\ttheta\,|\,U_{i+1})=\frac{\ttarg_n(\ttheta\,|\,Y_{U_{i+1}})Q(\ttheta,\ttheta_i)}{\ttarg_n(\ttheta_i\,|\,Y_{U_{i+1}})Q(\ttheta_i,\ttheta)}
\end{equation}
and as $\ttheta_{i+1}=\ttheta_i$ with probability $1-\talpha(\ttheta_i,\ttheta\,|\,U_{i+1})$.
\end{enumerate}
\end{enumerate}
Algorithm \ref{alg:infsub} details how to simulate an Informed Sub-Sampling Markov chain. Note that at step 11, if $U_i=U_{i-1}$, the quantity
$\ttarg_n(\theta_{i-1}\,|\,U_{i})$ has already been calculated at the previous iteration.
\begin{algorithm}
\caption{Informed Sub-Sampling MCMC algorithm}
\label{alg:infsub}
\begin{algorithmic}[1]
\State {\bf{Input: initial state $(\ttheta_0,U_0)$ and summary statistics $S_0=\bar{S}(Y_{U_0})$, $S^{\ast}=\bS(Y)$}}
\For{$i=1,2,\ldots$}
\State propose a new subset $U\sim R(U_{i-1},\cdot)$ and draw $J\sim\text{unif}(0,1)$,
\State compute $S=\bar{S}(Y_{U})$ and $b=b(U_{i-1},U)$ defined in \eqref{eq:accsubset}
\If {$J\leq b$}
\State set $U_{i}=U$ and $S_{i}=S$
\Else
\State set $U_{i}=U_{i-1}$ and $S_{i}=S_{i-1}$
\EndIf
\State propose a new parameter $\tparam\sim Q(\tparam_{i-1};\,\cdot\,)$ and draw $I\sim \text{unif}(0,1)$
\State compute $\ttarg_{n}(\ttheta_{i-1}\,|\,Y_{U_{i}})$, $\ttarg_{n}(\ttheta\,|\,Y_{U_{i}})$ and $\ta=\ta(\ttheta_{i-1},\ttheta\,|\,U_{i})$ defined in \eqref{eq:InfomedSubratio}
\If {$I\leq \ta$}
\State set $\tparam_{i}=\tparam$
\Else
\State set $\tparam_{i}=\tparam_{i-1}$
\EndIf
\EndFor
\State {\bf{return: the Markov chain $\{(\tparam_{i},U_{i}),\,i\in\nset\}$}}
\end{algorithmic}
\end{algorithm}

\subsection{Connection with noisy ABC}
Approximate Bayesian Computation (ABC) is a class of statistical methods, initiated in \cite{pritchard1999population}, that allows one to infer $\pi$ in situations where the likelihood $f$ is intractable but 
forward simulation of pseudo data $Z\sim f(\,\cdot\,|\,\theta)$ is doable. More precisely, the algorithm consisting of (i) $\vartheta\sim p$, (ii) $Z\sim f(\,\cdot\,|\,\vartheta)$ and 
(iii) set $\theta=\vartheta$ only if $\{Z=Y\}$, does produce a sample $\theta$ whose distribution is $\pi(\,\cdot\,|\,Y)$. Regarding the situation $N\to\infty$ as a source of intractability, one could attempt to borrow from ABC to sample from $\pi$. However, since $N\to\infty$, sampling from the likelihood model is impossible and a natural idea is to replace step (ii) by drawing subsamples $Y_U$ ($U\in\Uset_n$), leading to what we refer as Informed Sub-Sampling, as opposed to Informed Sub-Sampling MCMC described in the previous Subsection. Obviously, the event $\{Y_U=Y\}$ is impossible except in the trivial situation where $N=n$. Overcoming situations where $\{Y=Z\}$ is impossible or very unlikely has already been addressed in the ABC literature (see \cite{fearnhead2012constructing} and \cite{wilkinson2013approximate}), leading to approximate ABC algorithms. In particular, step (iii) is replaced by a step that sets $\theta=\vartheta$ with probability $\propto\exp\{-\eps\|S(Z)-S(Y)\|^2\}$ where $S$ is a vector of summary statistics and $\eps>0$ a tolerance parameter. We build on this analogy to propose a noisy Informed Sub-Sampling algorithm, see Table \ref{tab:abc} for more details.

The Noisy ABC algorithm replaces direct inference of $\pi$ by the following surrogate distribution
\begin{equation}
\label{eq:abc1}
\hpi_{\text{ABC}}(\rmd\theta\,|\,Y):\propto p(\rmd \theta)\hat{f}_{\text{ABC}}(Y\,|\,\theta)=p(\rmd\theta)\int f(\rmd Z\,|\,\theta)\exp\{-\eps\|S(Z)-S(Y)\|^2\}\,,
\end{equation}
where the exact likelihood is replaced by $\hat{f}_{\text{ABC}}$. Similarly, the approximation of $\pi$ stemming from Informed Sub-Sampling is:
\begin{multline}
\label{eq:abc2}
\hpi_{n}(\rmd\theta\,|\,Y):\propto p(\rmd\theta)\hat{f}(Y\,|\,\theta)=\\
p(\rmd\theta)\sum_{U\in\Uset_n} f^{(N/n)}(Y_U\,|\,\theta)\exp\{-\eps\|(N/n)S(Y_U)-S(Y)\|^2\}\,.
\end{multline}

This analogy shows that there is a connection between the ABC and the Informed Sub-Sampling in the way both approximate $\pi$, see \eqref{eq:abc1} and \eqref{eq:abc2}. However, since sampling from $\nu_{n,\eps}$ and $\pi_n(U)$ are not feasible, this approach cannot be considered, hence motivating the use of Markov chains instead, \ie Informed Sub-Sampling MCMC. Moreover, quantifying the approximation of $\pi$ by $\hpi_n$ \eqref{eq:abc2} is technically challenging while resorting to the Informed Sub-Samping Markov chain allows to use the Noisy MCMC framework developed in \cite{alquier2014noisy} to quantify this approximation. This is the purpose of the following Section.

\begin{table}
\centering
\begin{tabular}{c|c|c|c|c}

step  & \multicolumn{2}{|c|}{ABC} &  \multicolumn{2}{c}{Informed Sub-Sampling}\\
 \hline
(i)&\multicolumn{2}{c|}{$\vartheta\sim p$}& \multicolumn{2}{c}{-}\\
\hline
(ii)&\multicolumn{2}{c|}{$Z\sim f(\,\cdot\,|\,\vartheta)$} &  \multicolumn{2}{c}{$Z=Y_U$, $U\sim \unif(\Uset)$}\\
\hline
\multirow{4}{*}{(iii)} & exact & noisy & exact & noisy\\
 \cline{2-5}

& if $Z=Y$,  & with proba. $\propto$ & if $Z=Y$ & with proba. $\propto$\\
& set $\theta=\vartheta$ & $e^{-\eps\|S(Y)-S(Z)\|^2}$ & draw  $\theta\sim\pi(U)$  & $e^{-\eps\|S(Y)-(N/n)S(Y_U)\|^2}$\\
&&set $\theta=\vartheta$&  &draw $\theta\sim\pi(U)$\\
\end{tabular}
\caption{Comparison between ABC and Informed Sub-Sampling, an adaptation of ABC designed for situations where $N\gg 1$ and likelihood simulation is not possible. The exact algorithms provide samples from $\pi$ while the noisy algorithms sample from approximation of $\pi$ given in \eqref{eq:abc1} and \eqref{eq:abc2}.\label{tab:abc}}
\end{table} 

\section{Theoretical Analysis of Informed Sub-Sampling MCMC}
\label{sec:conv}
By construction, ISS-MCMC samples a Markov chain on an extended state space $\{(\ttheta_i,U_i),\,i\in\nset\}$ but the only useful outcome of the algorithm for inferring $\pi$ is the marginal chain $\{\ttheta_i,\,i\in\nset\}$. In this section, we study the distribution of the marginal chain and denote by $\ttarg_i$ the distribution of the random variable $\ttheta_i$. Note that $\{\ttheta_i,\,i\in\nset\}$ is identical to the Metropolis-Hastings chain $\{\theta_i,\,i\in\nset\}$, up to replacing $\alpha$ by $\talpha$ in the accept/reject step. This change, from which the computational gain of our method originates, has important consequences on the stability of the Markov chain and, in particular, implies that $\pi$ is not the stationary distribution of $\{\ttheta_i,\,i\in\nset\}$. Interest lies in quantifying the distance between $\ttarg_i$ and $\pi$. In this paper, our results are expressed in total variation distance but the recent works of \citet{rudolf2015perturbation} and \citet{johndrow2017error} suggest that carrying out the analysis using the Wasserstein metric may lead to more accurate bounds when $\Theta$ is not compact. We first recall the definition of the total variation distance which, for two distributions with density function $\pi$ and $\ttarg_i$ respectively w.r.t. the same common dominating measure, denoted $\rmd\theta$, can be expressed as
$$
\|\pi-\ttarg_i\|=(1/2)\int_\Theta|\pi(\theta)-\ttarg_i(\theta)|\rmd \theta\,.
$$

\subsection{Assumptions}
Let $K$ denote the \textit{exact} MH transition kernel, with proposal $Q$, described in Algorithm \ref{alg:mh}. $Q$ is fixed and set as a random walk kernel that achieves a reasonable acceptance rate. By construction, $K$ is $\pi$-reversible and thus $\pi$-invariant. Moreover, $K$ is assumed to be ergodic \ie $\|K(x,\,\cdot\,)-\pi\|\to 0$ at a geometric rate and the convergence is either simple (Assumption \textbf{A}.\ref{assu:2}) or uniform (Assumption \textbf{A}.\ref{assu:3}).

\begin{assumption}[Geometric ergodicity]
\label{assu:2}
There exists a constant $\varrho\in(0,1)$ and a function $C:\Theta\to\rset^+$ such that for all $(\theta_0,i)\in\Theta\times\nset$
\begin{equation}
\|\pi-K^i(\theta_0,\cdot)\|\leq C(\theta_0)\varrho^i\,.
\end{equation}
\end{assumption}

\begin{assumption}[Uniform ergodicity]
\label{assu:3}
There exists two constants $C<\infty$ and $\varrho\in(0,1)$ such that for all $i\in\nset$
\begin{equation}
\label{eq:assu:3}
\sup_{\theta_0\in\Theta}\|\pi-K^i(\theta_0,\,\cdot\,)\|\leq C\varrho^i\,.
\end{equation}
\end{assumption}

As observed in Remark \ref{rem1} (\ref{app4_bis}), the ISS-MCMC marginal Markov chain $\{\ttheta_i,\,i\in\nset\}$ is time inhomogeneous. Indeed, conditionally on $\ttheta_i$, the probability of the transition $\ttheta_i\to \ttheta_{i+1}$ depends on the iteration index $i$. This complicates the analysis of ISS-MCMC as most results on perturbation of Markov chains are established for time homogeneous Markov chains. For simplicity, we present in this section an analysis of a slight variation of ISS-MCMC that assumes independence between the different subsets $\{U_i,\,i\in\nset\}$ (Assumption \textbf{A}.\ref{assu:1}).
\begin{assumption}[IID subsets]
\label{assu:1}
The subsets $U_1,U_2,\ldots$ are independent and identically distributed under $\nu_{n,\eps}$.
\end{assumption}

In practice, Assumption \textbf{A.}\ref{assu:1} is satisfied when steps (3)-(9) of Algorithm \ref{alg:infsub} are repeated a large number of times to simulate $U_{i+1}$ given $U_i$. Under \textbf{A.}\ref{assu:1}, $\{\ttheta_i,\,i\in\nset\}$ is a time homogeneous Markov chain whose transition kernel $\tK_{n,\eps}$ is
\begin{equation}
\label{eq:8}
\forall\,(\ttheta,A)\in\Theta\times\vartheta,\qquad\tK_{n,\eps}(\ttheta,A)=\sum_{u\in\Uset_n}K(\ttheta,A\,|\,u)\nu_{n,\eps}(u)\,,
\end{equation}
where for all $\theta\in\Theta$, $K(\theta,\,\cdot\,|\,u)$ is the \textit{exact} MH transition kernel conditionally on some $\theta\in\Theta$ with proposal $Q$ that targets $\tpi_n(\,\cdot\,|\,Y_u)$ \eqref{eq:subposterior}.

The results stated in this section hold under Assumption \textbf{A.}\ref{assu:1}. We nevertheless note that this assumption might be relaxed. In particular, we show how the analysis carried out under uniform ergodicity assumption can be extended even if \textbf{A.}\ref{assu:1} does not hold, see \ref{app4_bis}. In the geometric ergodic case, a similar extension may be doable, see \eg \cite[Theorem 8]{douc2004quantitative}, but this is out of the scope of this paper. In general, the perturbation bounds with time inhomogeneous kernels are more obscure to interpret. Note that the numerical illustrations of ISS-MCMC presented at Section \ref{sec:sim} were performed without satisfying Assumption \textbf{A.}\ref{assu:1}, \ie implementing Algorithm \ref{alg:infsub}, and lead to satisfactory results.

Finally, we consider the following assumption for the summary statistics mapping. This assumption is motivated at two levels. First, it is necessary to have some assumptions on the summary statistics to derive theoretical results for ISS-MCMC in absence of sufficient statistics. Second, it offers a way to validate empirically the choice of summary statistics for a given model, see Section \ref{sec:sim}.
\begin{assumption}[Summary Statistics]
\label{assu:4}
There exists a constant $\gamma_n<\infty$, such that for all $(\theta,U)\in\Theta\times \Uset_n$
\begin{equation}
\label{eq:assu4}
\left|\log f(Y\,|\,\theta)-(N/n)\log f(Y_U\,|\,\theta)\right|\leq \gamma_n N \|\bS(Y)-\bS(Y_U)\|\,.
\end{equation}
\end{assumption}

Assumption \textbf{A.}\ref{assu:4} imposes a condition simultaneously on the model $f$ and the summary statistics $S$. In particular, it assumes that for any $\theta\in\Theta$, the variation of the scaled likelihood of the subsamples $Y_U$ $(U\in\Uset_n)$ around $f(Y\,|\,\theta)$ is controlled by the distance between the full dataset $Y$ and the subsample $Y_U$, as measured through their summary statistics. This is a strong assumption which is unlikely to hold if $\Theta$ is not a compact set. It implies that even in absence of sufficient statistics, a result similar to Proposition \ref{prop:esp} exists. One can also note that when $n\to N$, the constant $\gamma_n$ in Eq. \eqref{eq:assu4} goes to zero.

\subsection{$K$ is geometrically ergodic}

Our main result is that for a sufficiently large size of subsample $n$, ISS-MCMC admits a stationary distribution. This follows from an adaptation of the work of \citet{medina2016stability} to the context of ISS-MCMC.

\begin{prop}
\label{prop:geo}
Assume that assumptions \textbf{A.}\ref{assu:2}, \textbf{A.}\ref{assu:1}  and \textbf{A.}\ref{assu:4} hold, then there exists an $n_0\leq N$ such that for all $n>n_0$, $\tK_{n,\eps}$ is also geometrically ergodic for all $\epsilon>0$.
\end{prop}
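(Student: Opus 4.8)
The plan is to show that $\tK_{n,\eps}$ is a bounded perturbation of the exact kernel $K$, and then invoke a perturbation-stability result for geometrically ergodic chains (the adaptation of \citet{medina2016stability} alluded to in the statement). First I would fix $\ttheta\in\Theta$ and compare, for each subset $u\in\Uset_n$, the conditional kernel $K(\ttheta,\cdot\,|\,u)$ with the exact kernel $K(\ttheta,\cdot)$. These two kernels share the same proposal $Q$ and differ only through the acceptance ratio: $\ta(\ttheta,\ttheta'\,|\,u)$ uses the ratio $\tpi_n(\ttheta'\,|\,Y_u)/\tpi_n(\ttheta\,|\,Y_u) = \{f(Y_u\,|\,\ttheta')/f(Y_u\,|\,\ttheta)\}^{N/n}\,p(\ttheta')/p(\ttheta)$ in place of $f(Y\,|\,\ttheta')/f(Y\,|\,\ttheta)\cdot p(\ttheta')/p(\ttheta)$. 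Hence the two acceptance probabilities $\talpha$ and $\alpha$ differ by an amount controlled by
$$
\left|\log\frac{f(Y\,|\,\ttheta')}{(N/n)\log}\cdots\right|
$$
— more precisely, by $\bigl|\log f(Y\,|\,\ttheta)-(N/n)\log f(Y_u\,|\,\ttheta)\bigr| + \bigl|\log f(Y\,|\,\ttheta')-(N/n)\log f(Y_u\,|\,\ttheta')\bigr|$, which by Assumption \textbf{A.}\ref{assu:4} is bounded by $\gamma_n N\bigl(\|\bDelta_n(u)\| + \|\bDelta_n(u)\|\bigr)$ evaluated at the relevant subsample, i.e. by $2\gamma_n N\|\bS(Y)-\bS(Y_u)\|$. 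Using the elementary Lipschitz bound $|1\wedge e^x - 1\wedge e^y|\leq |x-y|$ on the acceptance probabilities, one gets a pointwise bound on the total-variation distance between $K(\ttheta,\cdot\,|\,u)$ and $K(\ttheta,\cdot)$ of the same order.

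Next I would integrate over $u\sim\nu_{n,\eps}$. By \eqref{eq:8} and convexity of total variation,
$$
\|\tK_{n,\eps}(\ttheta,\cdot) - K(\ttheta,\cdot)\| \;\leq\; \sum_{u\in\Uset_n}\nu_{n,\eps}(u)\,\|K(\ttheta,\cdot\,|\,u)-K(\ttheta,\cdot)\| \;\leq\; c\,\gamma_n N\,\esp_{n,\eps}\|\bS(Y)-\bS(Y_u)\|
$$
for an absolute constant $c$. The crucial observation is that $\gamma_n N\,\esp_{n,\eps}\|\bS(Y)-\bS(Y_u)\| = \gamma_n\,\esp_{n,\eps}\|\Delta_n(u)\|$, and since $\log\nu_{n,\eps}(u)\propto -\eps\|\Delta_n(u)\|^2$ is quadratic in $\|\Delta_n(u)\|$ (the same structural fact underlying Proposition \ref{prop:esp}), this expectation is finite and, more importantly, is bounded uniformly in $n$ — indeed the weighting concentrates on representative subsamples. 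Combined with the remark after Assumption \textbf{A.}\ref{assu:4} that $\gamma_n\to 0$ as $n\to N$, this gives $\sup_{\ttheta}\|\tK_{n,\eps}(\ttheta,\cdot)-K(\ttheta,\cdot)\| \to 0$ as $n\to N$; in particular there is $n_0$ such that for $n>n_0$ this supremum is smaller than any prescribed threshold.

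Finally, with $K$ geometrically ergodic (Assumption \textbf{A.}\ref{assu:2}) and the perturbation $\tK_{n,\eps}$ uniformly (in $\ttheta$) close to $K$ in total variation, I would apply the stability theorem of \citet{medina2016stability}: a sufficiently small uniform perturbation of a geometrically ergodic $\pi$-invariant kernel is itself geometrically ergodic (towards a nearby stationary distribution). Choosing $n_0$ so that the perturbation size for $n>n_0$ falls below the threshold required by that theorem yields the claim. The main obstacle I anticipate is the step invoking \citet{medina2016stability}: their hypotheses are phrased in terms of a Lyapunov/drift condition and a minorization, so one must verify that the exact MH chain $K$ with random-walk proposal $Q$ satisfies these (which is standard under \textbf{A.}\ref{assu:2} but needs care), and that the bound just derived is of the right type (e.g. a perturbation controlled in the relevant weighted norm, not merely in total variation) to feed into their result — reconciling the total-variation perturbation bound with a possibly unbounded function $C(\theta_0)$ in \textbf{A.}\ref{assu:2} is the delicate point. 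A secondary technical point is making rigorous the uniform-in-$n$ control of $\esp_{n,\eps}\|\Delta_n(u)\|$, for which one would bound the partition function of $\nu_{n,\eps}$ from below by the contribution of near-optimal subsets and the numerator tail by the Gaussian-type decay of $e^{-\eps\|\Delta_n(u)\|^2}$.
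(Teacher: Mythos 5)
There is a genuine gap at the final step, and you have in fact put your finger on it yourself without resolving it. Your plan reduces everything to a \emph{uniform total-variation} bound $\sup_{\theta}\|\tK_{n,\eps}(\theta,\cdot)-K(\theta,\cdot)\|\to 0$ and then asserts that a sufficiently small uniform TV perturbation of a geometrically ergodic kernel is itself geometrically ergodic. That assertion is false in general: it is the correct currency only in the \emph{uniformly} ergodic setting (which is what the paper uses for Proposition 5, via Mitrophanov). For a merely geometrically ergodic chain on a non-compact $\Theta$, an arbitrarily small TV perturbation can destroy the drift at infinity (e.g.\ a perturbation that with probability $\delta$ jumps from $\theta$ to $2\theta$ has TV size $\delta$ but can make $\tK V$ grow like $V(2\theta)$, killing any geometric drift). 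So the step ``invoke the stability theorem'' cannot be fed a TV estimate; the hypotheses of \citet{medina2016stability} require controlling the perturbation \emph{relative to the Lyapunov function} $V$, and that is exactly where your argument stops short.

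The paper's proof does the work you are missing: it takes the drift inequality $KV\leq\lambda V+b\1_S$ furnished by \textbf{A.}\ref{assu:2} and bounds $(\tK_{n,\eps}-K)V(\theta)$ directly by a quantity of the form $(\text{small})\cdot V(\theta)+(\text{const})\cdot\1_S(\theta)$. The key ingredient is a \emph{multiplicative} (not additive) comparison of acceptance probabilities, $\talpha(\theta,\theta')\leq\alpha(\theta,\theta')\bigl\{1\vee\esp\,\phi_U(\theta')/\phi_U(\theta)\bigr\}$, which lets one write $\int Q(\theta,\rmd\theta')(\talpha-\alpha)V(\theta')\leq\bigl(\esp\{e^{2\gamma_n\|\Delta_n(U)\|}\}-1\bigr)\int Q(\theta,\rmd\theta')\alpha(\theta,\theta')V(\theta')$ and then reuse the drift inequality on the right-hand side; the factor $\esp\{e^{2\gamma_n\|\Delta_n(U)\|}\}$ is controlled by \textbf{A.}\ref{assu:4} together with a Cauchy--Schwarz argument. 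A separate estimate is needed for the diagonal (rejection) part, $\trho(\theta)-\rho(\theta)$, which contributes a term proportional to $V(\theta)$ and is handled via a Markov-inequality bound $\proba\{|\phi_U(\theta)-1|\geq\delta/2\}\leq 2\gamma_n\esp\{\|\Delta_n(U)\|\}/\log(1+\delta/2)$. Finally, the smallness as $n\to N$ (with $N$ fixed) comes from the fact that $\nu_{n,\eps}$ degenerates to the point mass on the full dataset, so $\esp\{\|\Delta_n(U)\|\}\to 0$ and $\esp\{e^{2\gamma_n\|\Delta_n(U)\|}\}\to 1$ --- this is cleaner than your claim that $\gamma_n\,\esp_{n,\eps}\|\Delta_n(U)\|$ is ``bounded uniformly in $n$,'' which you do not justify. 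One must also check that the small set $S$ remains small for $\tK_{n,\eps}$, a point absent from your plan. Your computation of the acceptance-ratio discrepancy via \textbf{A.}\ref{assu:4} is sound as far as it goes, but without recasting it as a $V$-weighted drift perturbation the conclusion does not follow.
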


The proof is outlined to \ref{proof:geo}.

A direct consequence of Proposition \ref{prop:geo}, see for instance \cite[Theorem 16.0.1]{meyn2009markov}, is that for $n$ sufficiently large, $\tK_{n,\eps}$ admits a stationary distribution and that this stationary distribution converges to $\pi$ as $n\to N$. In most cases, it is difficult to obtain a rate of convergence under the assumption that $K$ is geometrically ergodic. We nevertheless note that this rate is related to rate of convergence of $\gamma_n$ to $0$ as hinted by \citet[Theorem 4.1]{medina2016stability}.

\subsection{$K$ is uniformly ergodic}

In addition to admitting a stationary distribution for a large enough $n$, we now show that under the assumption of uniform ergodicity it is possible to quantify the rate of convergence. Our main result follows from an adaptation of the work of \citet{alquier2014noisy} to the context of ISS-MCMC.
\begin{prop}
\label{prop:bound}
Define
\begin{equation}
\label{eq:d2}
A_n:=\esp\left\{\sup_{\theta\in\Theta}\frac{1}{\phi_U(\theta)}\right\}=\sum_{U\in\Uset_n}\nu_{n,\eps}(U)\sup_{\theta\in\Theta}\frac{f(Y\,|\,\theta)}{f(Y_U\,|\,\theta)^{N/n}}\,,
\end{equation}
where for all $(\theta,U)\in(\Theta\times \Uset_n)$, we have set $\phi_U(\theta):=f(Y_U\,|\,\theta)^{N/n}\slash f(Y\,|\,\theta)$ and
\begin{multline}
\label{eq:B_n}
B_n(\theta,U):=\esp\{a(\theta,\theta')|\phi_U(\theta)-\phi_U(\theta')|\}=\int Q(\theta,\rmd\theta')a(\theta,\theta')|\phi_U(\theta)-\phi_U(\theta')|\,.
\end{multline}
Assume that assumptions \textbf{A.}\ref{assu:3}, \textbf{A.}\ref{assu:1} and \textbf{A.}\ref{assu:4} hold, then there exists a constant $\kappa<\infty$ such that for all $i\in\nset$
\begin{equation}
\label{eq:bound2}
\|K^i(\theta_0,\,\cdot)-\tK_{n,\eps}^i(\theta_0,\,\cdot)\|\leq\kappa A_n
\sup_{(\theta,U)\in\Theta\times\Uset_n}B_n(\theta,U)\,,
\end{equation}
and
\begin{equation}
\label{eq:bound2_bis}
\lim_{i\to\infty}\sup_{\theta\in\Theta}\|\pi-\tK_{n,\eps}^i(\theta,\,\cdot\,)\|=\kappa A_n
\sup_{(\theta,U)\in\Theta\times\Uset_n}B_n(\theta,U)\,.
\end{equation}
Moreover, for a large enough subset size $n$, the marginal Markov chain produced by ISS-MCMC admits an invariant distribution $\tpi_n$ that satisfies
\begin{equation}
\label{eq:bound3}
\|\pi-\tpi_n\|\leq\kappa A_n
\sup_{(\theta,U)\in\Theta\times\Uset_n}B_n(\theta,U)\,.
\end{equation}
\end{prop}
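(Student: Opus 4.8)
The plan is to recast ISS-MCMC as a noisy Metropolis--Hastings chain and run the perturbation argument of \citet{alquier2014noisy}: first control the \emph{one-step} discrepancy between the exact kernel $K$ and the averaged kernel $\tK_{n,\eps}$ of \eqref{eq:8} uniformly in the starting point, and then propagate it through the iterates using the uniform ergodicity of $K$.

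First I would bound $\sup_{\theta}\|K(\theta,\cdot)-\tK_{n,\eps}(\theta,\cdot)\|$. By \eqref{eq:8}, $\tK_{n,\eps}=\sum_{u}\nu_{n,\eps}(u)K(\cdot,\cdot\,|\,u)$, where $K(\cdot,\cdot\,|\,u)$ is the M--H kernel with the common proposal $Q$ targeting $\tpi_n(\cdot\,|\,Y_u)\propto p(\cdot)f(Y_u\,|\,\cdot)^{N/n}$ \eqref{eq:subposterior}; hence the two acceptance ratios are linked by $\ta(\theta,\theta'\,|\,u)=a(\theta,\theta')\,\phi_u(\theta')/\phi_u(\theta)$, with $a$ the exact ratio of \eqref{eq:MHratio} and $\phi_u$ as in the statement. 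Writing out the M--H kernels and using that $x\mapsto 1\wedge x$ is $1$-Lipschitz gives, after elementary manipulation, $\|K(\theta,\cdot)-K(\theta,\cdot\,|\,u)\|\le B_n(\theta,u)/\phi_u(\theta)$ with $B_n$ as in \eqref{eq:B_n}. Averaging over $u\sim\nu_{n,\eps}$, bounding $B_n(\theta,u)\le\sup_{(\theta',U)}B_n(\theta',U)$, and exchanging a supremum with the $\nu_{n,\eps}$-average in the favourable direction, $\sum_u\nu_{n,\eps}(u)/\phi_u(\theta)\le\sum_u\nu_{n,\eps}(u)\sup_{\theta'}1/\phi_u(\theta')=A_n$, yields the uniform one-step bound $\sup_\theta\|K(\theta,\cdot)-\tK_{n,\eps}(\theta,\cdot)\|\le A_n\sup_{(\theta,U)}B_n(\theta,U)=:\lambda_n$. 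Finiteness of $A_n$ is exactly where Assumption \textbf{A.}\ref{assu:4} enters: it gives $1/\phi_u(\theta)\le\exp\{\gamma_n N\|\bS(Y)-\bS(Y_u)\|\}$, bounded uniformly in $\theta$, and since $\Uset_n$ is finite the average is finite; this is the non-asymptotic counterpart of Proposition \ref{prop:esp}.

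Next I would propagate this bound. Under \textbf{A.}\ref{assu:1} the marginal chain is time-homogeneous with kernel $\tK_{n,\eps}$, so the telescoping identity $\tK_{n,\eps}^i-K^i=\sum_{j=0}^{i-1}\tK_{n,\eps}^j(\tK_{n,\eps}-K)K^{i-1-j}$ applies. For any probability measure $\mu$, the signed measure $\mu\tK_{n,\eps}^j(\tK_{n,\eps}-K)$ has zero total mass and total-variation norm at most $\lambda_n$; and uniform ergodicity of $K$ (\textbf{A.}\ref{assu:3}), through a Doeblin/Dobrushin-coefficient argument applied to $K^{m_0}$ for $m_0$ large enough that $C\varrho^{m_0}<1/2$, gives a geometric contraction $\|\sigma K^m\|\le c\,\bar{\varrho}^{\,m}\|\sigma\|$ for every zero-mass signed measure $\sigma$, with $c<\infty$ and $\bar{\varrho}\in(0,1)$ depending only on $(C,\varrho)$. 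Summing the telescoped series at $\mu=\delta_{\theta_0}$ then gives \eqref{eq:bound2} with $\kappa:=c/(1-\bar{\varrho})$. The bound \eqref{eq:bound2_bis} follows from the triangle inequality $\|\pi-\tK_{n,\eps}^i(\theta,\cdot)\|\le\|\pi-K^i(\theta,\cdot)\|+\|K^i(\theta,\cdot)-\tK_{n,\eps}^i(\theta,\cdot)\|$ together with $\sup_\theta\|\pi-K^i(\theta,\cdot)\|\le C\varrho^i\to0$. Finally, \textbf{A.}\ref{assu:3} implies \textbf{A.}\ref{assu:2}, so Proposition \ref{prop:geo} guarantees that for $n$ large enough $\tK_{n,\eps}$ is ergodic with invariant law $\tpi_n$; letting $i\to\infty$ in \eqref{eq:bound2}, using $\tK_{n,\eps}^i(\theta_0,\cdot)\to\tpi_n$ and $K^i(\theta_0,\cdot)\to\pi$ in total variation, delivers \eqref{eq:bound3}.

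The main obstacle is the one-step estimate: producing a bound on $\|K(\theta,\cdot)-\tK_{n,\eps}(\theta,\cdot)\|$ that is simultaneously uniform in $\theta$ and of the precise product form $A_n\sup_{(\theta,U)}B_n(\theta,U)$. The favourable exchange $\sum_u\nu_{n,\eps}(u)/\phi_u(\theta)\le A_n$ is harmless, but the finiteness (and the behaviour as $n\to N$) of $A_n$ rests entirely on Assumption \textbf{A.}\ref{assu:4} and on the quadratic form of $\log\nu_{n,\eps}$; when $\Theta$ is non-compact this is the fragile step. The zero-mass contraction lemma and the telescoping bookkeeping of the second part are by contrast standard, essentially \citet{alquier2014noisy}, with the stability input ensuring the existence of $\tpi_n$ borrowed from \cite{medina2016stability} via Proposition \ref{prop:geo}.
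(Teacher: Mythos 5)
Your proposal is correct and follows essentially the same route as the paper: the same one-step bound $\sup_\theta\|K(\theta,\cdot)-\tK_{n,\eps}(\theta,\cdot)\|\le A_n\sup_{(\theta,U)}B_n(\theta,U)$ obtained from the Lipschitz property of $x\mapsto 1\wedge x$ and the identity $\ta=a\,\phi_U(\theta')/\phi_U(\theta)$, followed by propagation under uniform ergodicity (the paper simply cites \citet[Corollary 3.1]{mitrophanov2005sensitivity} where you re-derive it via telescoping and Dobrushin coefficients, which is the same argument), the triangle inequality for \eqref{eq:bound2_bis}, and Proposition \ref{prop:geo} for the existence of $\tpi_n$ in \eqref{eq:bound3}. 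Note that, like the paper's own proof, your argument actually establishes \eqref{eq:bound2_bis} as an upper bound rather than the stated equality, which appears to be a typo in the proposition.
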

The proof of Proposition \ref{prop:bound} is postponed to \ref{app4}. Note that an extension of this result to the case where \textbf{A.}\ref{assu:1} does not hold is presented at \ref{app4_bis}.

Since for any two measures $(\mu,\mu')$, $\|\mu-\mu'\|\leq 1$, the upper bounds of Proposition \ref{prop:bound} are only informative if there are smaller than 1. Those bounds are a product of two expectations. We now show how those two expectations are controlled respectively through the choice of proposal kernel and the choice of summary statistics.

\subsubsection{Choice of the proposal kernel}
\label{sec:alg:prop}
Assuming a Gaussian random walk proposal with covariance matrix $\Sigma\T\Sigma$, $B_n$ can be expressed as $B_n(\theta)=\sup_{U\in\Uset_n}D_1(U,\theta)$ where $D_1$ is defined as
\begin{equation}
\label{eq:boundB}
D_1(U,\theta):=\int \Phi_{d}(\rmd \zeta)\frac{\pi(\theta+\Sigma\zeta)}{\pi(\theta)}|\phi_U(\theta)-\phi_U(\theta+\Sigma\zeta)|\,,
\end{equation}
where $\Phi_d$ is the standard Gaussian distribution in dimension $d=\text{dim}(\Theta)$. When $N\gg1$, the Bernstein-von Mises theorem states that, under conditions on the likelihood function, the posterior distribution can be approximated by a Gaussian with mean set as the maximum likelihood estimator $\thetaMLE$ and covariance $I(\theta_0)^{-1}/N$ where $I$ is the Fisher information matrix and $\theta_0\in\Theta$ some parameter. Since ISS-MCMC aims at sampling from an approximation of $\pi$, setting $\Sigma=(1/\sqrt{N}) M$ where $M\T M$ is an approximation of $I(\theta_0)^{-1}$ is a reasonable choice. Proposition \ref{prop:bound2} shows that $D_1$ can, in this scenario, be arbitrarily brought down close to 0.

\begin{prop}
\label{prop:bound2}
Under the assumption that the proposal kernel $Q$ is a Gaussian Random Walk with covariance matrix $\Sigma=(1/\sqrt{N}) M$, we have
\begin{multline}
D_1(U,\theta)\leq \frac{\|\grad_\theta\phi_U(\theta)\|}{\sqrt{N}}\left\{\sqrt{\frac{2}{\pi}}\|M\|_1+
\frac{\|M\|_2^2\|\grad_\theta\log\pi(\theta)\|}{\sqrt{N}}\right\}\\
+\frac{d}{2N}\tn M\T\grad_\theta^2\phi_U(\theta)M\tn+\esp\{R(\|M\zeta\|/\sqrt{N})\}\,,
\end{multline}
where $R(x)=_{x\to 0}o(x)$ and for any square matrix $M$ of dimension $\rset^d$, we have set $\|M\|_1:=\sum_{1\leq i,j\leq d}|M_{i,j}|$, $\|M\|_2:=\{\sum_{1\leq i,j\leq d}M_{i,j}^2\}^{1/2}$ and $\tn \,\cdot\,\tn$ is the operator norm.
\end{prop}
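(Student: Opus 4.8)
The statement is a Taylor-expansion estimate, so the plan is to expand the integrand of $D_1(U,\theta)$ around the point $\theta$ to second order in the step $\Sigma\zeta = M\zeta/\sqrt{N}$ and then integrate term by term against the Gaussian distribution $\Phi_d$. First I would write $\pi(\theta+\Sigma\zeta)/\pi(\theta) = 1 + \pscal{\grad_\theta\log\pi(\theta)}{\Sigma\zeta} + o(\|\Sigma\zeta\|)$ using the smoothness of $\log\pi$, and simultaneously expand $\phi_U(\theta+\Sigma\zeta)-\phi_U(\theta) = \pscal{\grad_\theta\phi_U(\theta)}{\Sigma\zeta} + \tfrac12 (\Sigma\zeta)\T\grad_\theta^2\phi_U(\theta)(\Sigma\zeta) + o(\|\Sigma\zeta\|^2)$. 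Substituting these into \eqref{eq:boundB}, the leading contribution to $D_1$ comes from $\int\Phi_d(\rmd\zeta)\,|\pscal{\grad_\theta\phi_U(\theta)}{M\zeta}|/\sqrt{N}$; by the triangle inequality $|\pscal{\grad_\theta\phi_U(\theta)}{M\zeta}| \leq \|\grad_\theta\phi_U(\theta)\|\,\|M\zeta\|_1$ (or a Cauchy--Schwarz variant), and using $\esp_{\Phi_d}|\zeta_j| = \sqrt{2/\pi}$ componentwise yields the first term $\sqrt{2/\pi}\,\|M\|_1\,\|\grad_\theta\phi_U(\theta)\|/\sqrt{N}$.

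Next, the cross term between the first-order expansion of $\pi(\theta+\Sigma\zeta)/\pi(\theta)$ and the first-order expansion of $\phi_U(\theta+\Sigma\zeta)-\phi_U(\theta)$ contributes a term of order $1/N$; bounding $|\pscal{\grad\log\pi}{M\zeta}|\cdot|\pscal{\grad\phi_U}{M\zeta}| \leq \|\grad\log\pi(\theta)\|\,\|\grad\phi_U(\theta)\|\,\|M\zeta\|_2^2$ and integrating against $\Phi_d$ (using $\esp_{\Phi_d}\zeta_i\zeta_j = \delta_{ij}$, so $\esp\|M\zeta\|_2^2$ is controlled by $\|M\|_2^2$) gives the $\|M\|_2^2\|\grad_\theta\log\pi(\theta)\|\,\|\grad_\theta\phi_U(\theta)\|/N$ contribution, which is the second summand inside the braces. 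The second-order term in the $\phi_U$ expansion — the Hessian term $\tfrac12(\Sigma\zeta)\T\grad^2\phi_U(\theta)(\Sigma\zeta)$, multiplied by the leading factor $1$ from $\pi(\theta+\Sigma\zeta)/\pi(\theta)\approx 1$ — integrates against $\Phi_d$ to give $\tfrac{1}{2N}\esp_{\Phi_d}\{(M\zeta)\T\grad^2\phi_U(\theta)(M\zeta)\}$, which is bounded by $\tfrac{d}{2N}\tn M\T\grad_\theta^2\phi_U(\theta)M\tn$ since $\esp_{\Phi_d}\{\zeta\T A\zeta\} = \tra(A) \leq d\,\tn A\tn$ for any symmetric $A$. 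All remaining cross terms and higher-order remainders are collected into $\esp\{R(\|M\zeta\|/\sqrt{N})\}$ where $R(x) = o(x)$ as $x\to 0$.

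The main obstacle I expect is bookkeeping: making sure each of the three explicit terms in the bound picks up exactly the claimed norm ($\|M\|_1$ vs.\ $\|M\|_2$ vs.\ the operator norm), which requires being careful about whether to bound a linear form $\pscal{v}{M\zeta}$ via $\|v\|\,\|M\zeta\|_1$ or via Cauchy--Schwarz, and about the fact that one must take $\sup_{U\in\Uset_n}$ at the end so all bounds must be uniform in $U$ (this is automatic since the gradient and Hessian of $\phi_U$ appear explicitly on the right-hand side). A secondary subtlety is justifying that the remainder terms are genuinely $o(\|M\zeta\|/\sqrt{N})$ uniformly enough to be integrable against $\Phi_d$ — this needs the $C^2$ (or $C^2$ with locally bounded third derivative) regularity of $\pi$ and $\phi_U$, which is implicitly assumed via Assumption \textbf{A.}\ref{assu:4} and the smoothness of the likelihood; strictly one should note that $\phi_U(\theta) = f(Y_U\,|\,\theta)^{N/n}/f(Y\,|\,\theta)$ is smooth whenever the likelihood is, so the Taylor expansion with integral remainder is valid and dominated convergence applies. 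The rest is the routine algebra of collecting the Gaussian moments, which I would not write out in full.
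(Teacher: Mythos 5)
Your proposal is correct and follows essentially the same route as the paper: a second-order Taylor expansion of $\phi_U$ and a first-order expansion of $\pi(\theta+\Sigma\zeta)/\pi(\theta)$, followed by Cauchy--Schwarz/H\"older and the Gaussian moment identities $\esp|\zeta_j|=\sqrt{2/\pi}$, $\esp\|M\zeta\|^2=\|M\|_2^2$, $\esp\|\zeta\|^2=d$. The only small imprecision is in the Hessian term, where the integrand carries an absolute value, so one should use the pointwise bound $|\zeta\T M\T\grad_\theta^2\phi_U(\theta)M\zeta|\leq\tn M\T\grad_\theta^2\phi_U(\theta)M\tn\,\|\zeta\|^2$ rather than the trace identity $\esp\{\zeta\T A\zeta\}=\tra(A)$; this is exactly how the paper closes that step.
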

The proof is postponed to \ref{app5}. Under regularity assumptions on the likelihood model, the gradient of $\log\pi$ and $\phi_U$ and the Hessian of $\phi_U$ are bounded and the upper bound of $D_1(U,\theta)$ can be brought down arbitrarily to $0$, uniformly in $(U,\theta)$, through $M$ when $N\gg 1$.

\subsubsection{Choice of the summary statistics}
\label{sec:alg:sumstat}
In Eq. \eqref{eq:d2}, the likelihood of each subsample is raised at the power $N/n$ (\ie typically several orders of magnitude) and therefore subsamples unlikely under $f(\,\cdot\,|\,\theta)$ will contribute to make $A_n$ very large. Ideally the choice of $S$ would guarantee that subsamples $Y_U$ having a very small likelihood $f(Y_U\,|\,\theta)$ are assigned to a weight $\nu_{n,\eps}(U)\approx 0$ to limit their contribution. In other words, $S$ should be specified in a way that prevents $f(Y_U\,|\,\theta)$ to go to 0 at a rate faster than $\nu_{n,\eps}(U)$. This is ensured if Assumption \textbf{A.}\ref{assu:4} holds. Indeed, in such a case
\begin{multline*}
A_n=\sum_{U\in\mathsf{E}_n(\theta)}\nu_{n,\eps}(U)\sup_{\theta\in\Theta}\frac{f(Y\,|\,\theta)}{f(Y_U\,|\,\theta)^{N/n}}+\sum_{U\in\Uset_n\backslash \mathsf{E}_n(\theta)}\nu_{n,\eps}(U)\sup_{\theta\in\Theta}\frac{f(Y\,|\,\theta)}{f(Y_U\,|\,\theta)^{N/n}}\\
\leq \nu_{n,\eps}\left(\mathsf{E}_n(\theta)\right)+\sum_{U\in\Uset_{n}\backslash \mathsf{E}_n(\theta)}\exp\{-\eps\|\Delta_n(U)\|^2+\gamma_n\|\Delta_n(U)\|-\log Z_n(\epsilon)\}\,,
\end{multline*}
where we have defined $\mathsf{E}_n(\theta):=\{U\in\Uset_n,\,\sup_{\theta\in\Theta}f(Y\,|\,\theta)/f(Y_U\,|\,\theta)^{N/n}<1\}$ and $Z_n(\epsilon)=\sum_{U\in\Uset_n}\exp\{-\eps\|\Delta_n(U)\|^2\}$. Clearly, if $\eps$ has the same order of magnitude as $\gamma_n$, each term of the sum remains bounded when $\|\Delta_n(U)\|\to\infty$. Conversely, setting $\eps=0$ is equivalent to choosing $\nu_{n,\eps}$ as the uniform distribution on $\Uset_n$ and may not allow to bound $A_n$, see Figure \ref{fig:upper} related to the probit example.

Potential summary statistics can be empirically validated by checking that they satisfy Assumption \textbf{A.}\ref{assu:4}. This validation can be performed graphically, by repeating the following operations for a number of parameters $\theta_k\sim_{\text{i.i.d}} p$:
 \begin{enumerate}[(i)]
\item draw subsets $U_1,U_2,\ldots$ uniformly at random in $\Uset_n$,
\item plot the points with coordinates
$$
(x_{k,i},y_{k,i})=(\|\Delta_n(U_i)\|,\log f(Y\,|\,\theta_k)-(N/n)\log f(Y_{U_i}\,|\,\theta_k))\,.
$$
 \end{enumerate}
The statistics are validated if there exists $\gamma_n<\infty$ such that the points $(x_{k,i},y_{k,i})$ satisfy $|y_{k,i}\slash x_{k,i}|\leq \gamma_n$, as illustrated at Figure \ref{fig:val}.

In situations where the maximum likelihood estimator $\thetaMLE(Y_{1:n})$ is easy and quick to evaluate numerically, we recommend setting $\bS(Y_{1:n})=\thetaMLE(Y_{1:n})$. In the case of independent observations of a well-specified model, setting the summary statistics as the maximum likelihood estimate is justified by the following Proposition which implies that Assumption \textbf{A.}\ref{assu:4} holds, asymptotically, up to a constant.
\begin{prop}
\label{propMLE}
We assume that the whole dataset comprises $N=\rho n$ independent observations and there exists some $\theta_0\in\Theta$ such that $Y_i\sim f(\,\cdot\,|\,\theta_0)$. Let $\thetaMLE$ be the MLE of $Y_1,\ldots,Y_{N}$ and $\thetaMLE_U$ be the MLE of the subsample $Y_U$ ($U\in\Uset_n$).
Then, there exists a constant $\beta$, a metric $\|\cdot\|_{\theta_0}$ on $\Theta$ and a non-decreasing subsequence $\{\sigma_n\}_{n\in\nset}$, ($\sigma_n\in\nset$) such that for all $U\subset\{1,2,\ldots,\rho\sigma_n\}$ with $|U|=\sigma_n$, we have for $p$-almost all $\theta$ in an neighborhood of $\theta_0$:
\begin{equation}
\label{eq0}
\log f(Y_{1:\rho \sigma_n}\,|\,\theta)-\rho\log f(Y_U\,|\,\theta)\leq
H_n(Y,\theta)+\beta+\frac{\rho n}{2}\|\thetaMLE_U-\thetaMLE\|_{\theta_0}\,,
\end{equation}
where
$$
\underset{n\to \infty}{\plim}\quad H_n(Y,\theta)\overset{\proba_{\theta_{0}}}{=} 0\,.
$$
\end{prop}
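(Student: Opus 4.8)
The plan is to localise the claim around $\theta_0$ and then exploit a cancellation between the observed information of the whole sample and that of the up-scaled subsample. Under the usual regularity conditions giving consistency and asymptotic normality of $M$-estimators one has $\thetaMLE\to\theta_0$ and, since $|U|=\sigma_n$, $\thetaMLE_U\to\theta_0$, with $\thetaMLE-\theta_0=\bigO_\proba(\sigma_n^{-1/2})$ and $\thetaMLE_U-\theta_0=\bigO_\proba(\sigma_n^{-1/2})$. First I would fix a small convex neighbourhood $V_0$ of $\theta_0$ on which $\theta\mapsto\log f(y\mid\theta)$ is three times continuously differentiable with integrable derivatives, and argue that it suffices to prove the bound for $\theta\in V_0$ (this is the ``neighbourhood of $\theta_0$'' of the statement), with $\thetaMLE,\thetaMLE_U\in V_0$ eventually.

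Write $\ell_N(\theta)=\log f(Y_{1:\rho\sigma_n}\mid\theta)$ and $\ell_U(\theta)=\log f(Y_U\mid\theta)$. A second-order Taylor expansion of $\ell_N$ around $\thetaMLE$ and of $\ell_U$ around $\thetaMLE_U$ --- the gradients vanishing at these interior maximisers --- gives
\[
\ell_N(\theta)-\rho\ell_U(\theta)=\big[\ell_N(\thetaMLE)-\rho\ell_U(\thetaMLE_U)\big]+\tfrac12(\theta-\thetaMLE)^{T}\grad^2\ell_N(\xi_N)(\theta-\thetaMLE)-\tfrac\rho2(\theta-\thetaMLE_U)^{T}\grad^2\ell_U(\xi_U)(\theta-\thetaMLE_U)
\]
for suitable $\xi_N,\xi_U$ on the relevant segments. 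The key point is that $-\tfrac1N\grad^2\ell_N(\xi_N)\to I(\theta_0)$ and $-\tfrac1{\sigma_n}\grad^2\ell_U(\xi_U)\to I(\theta_0)$ (law of large numbers plus Hessian continuity), while $\rho\sigma_n=N$, so $\tfrac12\grad^2\ell_N(\xi_N)$ and $\tfrac\rho2\grad^2\ell_U(\xi_U)$ are both $\approx-\tfrac N2 I(\theta_0)$ and the two quadratic forms cancel up to the cross term $N(\theta-\thetaMLE)^{T}I(\theta_0)(\thetaMLE-\thetaMLE_U)$, a bounded term $\tfrac N2\|\thetaMLE-\thetaMLE_U\|_{I(\theta_0)}^2=\bigO_\proba(1)$, and errors from the two approximations. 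Choosing $\|\cdot\|_{\theta_0}$ to be the Mahalanobis norm associated with $I(\theta_0)$ (rescaled by a constant if necessary), Cauchy--Schwarz together with $\|\theta-\thetaMLE\|_{\theta_0}\le\tfrac12$ on a once-and-for-all shrunk $V_0$ bounds the cross term by $\tfrac{\rho\sigma_n}2\|\thetaMLE_U-\thetaMLE\|_{\theta_0}$, which is exactly the main term of the claim.

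Everything else has to be placed in $\beta$ and $H_n$. A Wilks-type expansion gives $\ell_N(\thetaMLE)-\rho\ell_U(\thetaMLE_U)=\ell_N(\theta_0)-\rho\ell_U(\theta_0)+\bigO_\proba(1)$; the third-order Taylor remainders on $V_0$ are $\bigO_\proba(1)$ times a power of the radius of $V_0$; and the errors in the two Hessian approximations again cancel to leading order (because $\rho\sigma_n=N$), leaving lower-order residuals. I would put the genuinely bounded deterministic pieces into the constant $\beta$ and the pieces that tend to $0$ into $H_n(Y,\theta)$, verifying $\plim H_n=0$ from the uniform LLN for the per-observation Hessian on $V_0$ and the concentration of $\thetaMLE,\thetaMLE_U$ at $\theta_0$; the one residual that is neither bounded nor vanishing, namely $\ell_N(\theta_0)-\rho\ell_U(\theta_0)$ (a centred weighted sum, hence $\bigO_\proba(\sqrt{\sigma_n})$), must be shown to be dominated by the main term $\tfrac{\rho\sigma_n}2\|\thetaMLE_U-\thetaMLE\|_{\theta_0}$, which is of the same order.

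The main obstacle --- and the reason the statement is formulated along a subsequence $\{\sigma_n\}$ --- is that all of these $\bigO_\proba$ and $o_\proba$ controls must hold \emph{uniformly over the $\binom{\rho\sigma_n}{\sigma_n}$ subsets $U$}. For a fixed $U$, the deviations of $\sigma_n^{-1}\grad^2\ell_U$ from $I(\theta_0)$, of $\thetaMLE_U$ from $\theta_0$, and of the subsample remainders admit sub-Gaussian / Bernstein tail bounds; I would take a union bound over the $\binom{\rho\sigma_n}{\sigma_n}$ subsets and then choose $\{\sigma_n\}$ growing fast enough that the resulting probabilities are summable, so that Borel--Cantelli delivers the inequality almost surely along the subsequence. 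The hard part of the whole argument is this uniform control, together with the bookkeeping that forces the $\bigO_\proba(\sqrt{\sigma_n})$ fluctuation term above into the main term rather than into $H_n$ (which must vanish) or $\beta$ (which must stay bounded).
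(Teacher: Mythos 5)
Your Taylor-expansion/LLN route is genuinely different from the paper's (which puts a uniform prior on a compact set, invokes Bernstein--von Mises in $L_1$ for the full-sample and subsample posteriors, extracts the subsequence $\{\sigma_n\}$ solely to upgrade $L_1$ convergence to pointwise convergence of the log-densities for $p$-almost every $\theta$ --- not to handle uniformity over subsets --- and then compares the two Gaussian approximations directly, the main term coming from the reverse triangle inequality and $\beta=(\rho-1)\log(2\pi)^{d/2}$ from the normalisations). Your quadratic-form cancellation plus Cauchy--Schwarz reproduces the same main term, so the two arguments agree on that point. But two steps of your plan fail as stated. The first is exactly the term you flag as ``the hard part'': in your notation, $\ell_N(\theta_0)-\rho\ell_U(\theta_0)$ is indeed a centred weighted sum of the $\log f(Y_i\,|\,\theta_0)$ of order $\bigO_{\proba}(\sqrt{\sigma_n})$, but ``same stochastic order'' does not yield domination by $\tfrac{\rho\sigma_n}{2}\|\thetaMLE_U-\thetaMLE\|_{\theta_0}$. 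The fluctuation term is driven by the per-observation log-likelihoods at $\theta_0$ while $\thetaMLE_U-\thetaMLE$ is driven by the scores; in a Gaussian location model these are uncorrelated and asymptotically independent, so with probability bounded away from zero the former exceeds $c\sqrt{\sigma_n}$ while the latter makes the main term $o(\sqrt{\sigma_n})$. The term therefore cannot be absorbed into the main term, nor into $\beta$ (it is unbounded) nor into $H_n$ (it does not vanish). For what it is worth, your instinct located the real difficulty: the paper's own proof carries the analogous quantity as $\log\{Z_{\rho\sigma_n}/Z_{\sigma_n}(U)^{\rho}\}$ and likewise never assigns it to any of the three buckets.

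The second gap is the union-bound/Borel--Cantelli device for uniformity over subsets. There are $\binom{\rho\sigma_n}{\sigma_n}$ subsets, whose logarithm grows linearly in $\sigma_n$, while the sub-Gaussian tails for $\|\thetaMLE_U-\theta_0\|>t$ decay like $\exp(-c\sigma_n t^2)$; both exponents are linear in $\sigma_n$, so letting $\sigma_n$ grow faster does not change the balance, and the union bound is only summable for $t$ bounded away from zero. Indeed the worst subset of size $\sigma_n$ (e.g.\ the $\sigma_n$ largest observations in a location model) has an MLE at a constant distance from $\theta_0$, so the uniform rates $\thetaMLE_U-\theta_0=\bigO_{\proba}(\sigma_n^{-1/2})$ and the uniform Hessian control you rely on are false over the full collection of subsets. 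The quantifier ``for all $U$'' has to be obtained by another mechanism than a union bound at the parametric rate.
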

The proof is detailed in \ref{proof:propMLE} and follows from a careful application of a Bernstein-von Mises theorem. Note that an extension of Proposition \ref{propMLE} to cases where the observations are not independent may exist provided that a Bernstein-von Mises theorem holds for the model at hand, which is the case for dependent observations if the likelihood model satisfies local asymptotic normality conditions \cite{le1953some,le2012asymptotic}.

Note that since Assumption \textbf{A.}\ref{assu:4} is mostly used in Propositions \ref{prop:geo} and \ref{prop:bound} to guarantee that the log-likelihood ratio between the likelihood and the scaled likelihood of a subsample is bounded, the constant $\beta$ in Proposition \ref{propMLE} is not a major concern. In addition, it is straightforward to see that this constant vanishes when the subset size grows faster than the full dataset, \ie $\rho\downarrow 1$, once in the asymptotic regime of Eq. \eqref{eq0}.

We remark that Proposition \ref{propMLE} is in line with the results regarding optimal summary statistics for ABC established in \cite{fearnhead2012constructing}. The authors show that the quadratic error loss between the ABC estimate based on $\hpi_{\text{ABC}}$ (Eq. \ref{eq:abc1}) and the true parameter is minimized when setting the summary statistics as the posterior mean, a choice which asymptotically coincides with the maximum likelihood estimator.

Finally, we note that, similarly to any approximate MCMC method, ISS-MCMC does not guarantee Law of Large number for $\pi$-integrable functionals. However, assume that the MH chain $K$ is geometrically ergodic, it is straightforward to establish that for a large enough $n$,
\begin{equation}
\lim_{i\to\infty}\left|\frac{1}{i}\sum_{j=1}^i f(\ttheta_j)-\pi f\right|\leq 2\|f\|\|\pi-\tpi_{n,\eps}\|\,\qquad \text{a.s}\,,
\end{equation}
where $\pi f:=\int f\rmd\pi$ and $\|f\|=\sup_{\theta\in\Theta} |f(\theta)|$. If in addition, $K$ is uniformly ergodic Proposition \ref{prop:bound} can help to bound the asymptotic error that typically arises in MCMC estimation of $\pi f$. 

\section{Illustrations}
\label{sec:sim}
We evaluate the efficiency of ISS-MCMC on three different applications: inferring a time series observed at $N=10^6$ contiguous time steps, a logistic regression with $N=10^6$ observations and a Gaussian binary classification problem based on $N=10^7$ data.

\subsection{Implementation details of Informed Sub-Sampling MCMC}
\label{sec:6_1}
Before illustrating the ISS-MCMC algorithm on the different examples, we address a few technical implementation details.

\begin{itemize}
\item On the subset size $n$: this parameter is essentially related to the computational budget available to the user. In the following examples we have used $n\propto N^{1/2}$ which achieves a substantial computational gain at a price of a negligible asymptotic bias.
\item On the sufficient statistics $S$: to reduce the bias resulting from the Metropolis-Hastings approximation, $S$ should be constructed so that Assumption \textbf{A.}\ref{assu:4} holds.  If the maximum likelihood estimator $\theta^\ast(Y)$ is quick to compute then Proposition \ref{propMLE} suggests that setting $S(Y)=\theta^\ast(Y)$ will theoretically satisfy \textbf{A.}\ref{assu:4}. Other sufficient statistics mapping can be used, typically those arising in the Approximate Bayesian Computation literature. In any case, we recommend checking \textbf{A.}\ref{assu:4} graphically (see Section \ref{sec:conv}).
\item On the bandwidth parameter $\eps$: the theory shows that when $\eps \approx \gamma_n $, the asymptotic bias is controlled ($\gamma_n$ is the constant in \textbf{A.}\ref{assu:4}). In practice, this may prove to be too large and could potentially cause the algorithm to get stuck on a very small number of subsets. To avoid such a situation, we suggest monitoring the refresh rate of subsamples that should occur with probability of at least 1\%.
\item On the initial subset $U_0$: in theory, one would run a preliminary Markov chain $\{U_1^{(0)},\ldots,U_L^{(0)}\}$ (for some $L>0$) targeting $\nu_{n,\eps}$, and set $U_0=U_L^{(0)}$ in order for the results of Section \ref{sec:alg:sumstat} to hold. In practice, a more efficient approach is to use a simulating annealing Metropolis-Hastings algorithm, see \cite{geyer1995annealing}. It introduces a sequence of tempered distributions $\nu_k:=\nu_{n,\eps_k}$, such that $\eps_k=t_k \eps$ $(k\in\{1,\ldots,L\})$ where $t_1=0$ and $t_L=1$. The transition kernel of the $k$-th iteration of the preliminary Markov chain is designed to be $\nu_k$ invariant. This technique facilitates sampling from a proxy of $\nu_{n,\eps}$ in a relative short time period as the successive tempered distributions help identifying those subsamples belonging to the high probability sets of $\nu_{n,\eps}$.
\end{itemize}

\subsection{Inference of an AR(2) model}

\begin{exa}
\label{ex:ar}
An autoregressive time series of order 2 AR(2) $\{Y_k,\,k\leq N\}$ is defined recursively by:
\begin{equation}
\label{eq:ar}
\left\{
\begin{array}{l}
(Y_0,Y_1)\sim\mu:= \mathcal{N}_2(\mathbf{0}_2,\theta_3^2\,\text{Id}_2)\\
\\
Y_{n}=\theta_1 Y_{n-1}+\theta_2 Y_{n-2} + Z_{n}\,,\quad Z_n\sim \mathcal{N}(0,\theta_3^2)\,,\qquad\forall\,n\geq 2,
\end{array}
\right.
\end{equation}
where $\param\in \Theta\subset \mathbb{R}^3$. The likelihood of an observed time series for this model writes
\begin{equation}
\label{eq:arma_lkhd}
f(Y_{0:N}\,|\,\param)=\mu(Y_{0:1})\prod_{k=2}^{N}g(Y_k\,|\,Y_{k-1},Y_{k-2},\theta)\,,
\end{equation}
such that for all $k\geq 1$,
\begin{equation}
\label{eq:arma_lkhd2}
g(Y_k\,|\,Y_{0:k-1},\param)=\Phi_1(Y_k;\,\theta_1 Y_{k-1}+\theta_2 Y_{k-2} ,\theta_3^2)
\end{equation}
where $x\to\Phi_1(x;\,m,v)$ is the pdf of the univariate Gaussian distribution with mean $m$ and variance $v$.
\end{exa}

This model has been used in \citet[Section 7.2]{chib1995understanding} to showcase the Metropolis-Hastings (M-H) algorithm. We follow the same setup here and in particular we use the same true parameter $\paramst=(1,-.5,1)$, same prior distribution and proposal kernel $Q$ ; see \citet{chib1995understanding} for more details. We sampled a time series $\{Y_k,\,k\leq N\}$ according to \eqref{eq:ar}, with $N=10^6$ under $\paramst$. Of course, in such a setup, M-H is prohibitively slow to be used in practice to sample from $\pi$ as it involves evaluating the likelihood of the whole time series at each iteration. We nevertheless use M-H to obtain a ground truth of $\pi$.

For simplicity, we restrict the set of subsamples to $n$ contiguous observations:
$$
\left\{Y_{0:n-1},Y_{1:n},\ldots,Y_{N-n+1:N}\right\}\,.
$$
This induces a set of subset $\Vset_n\subset\Uset_n$ defined such that a subset $U\in\Vset_n$ is identified with its starting index, \ie for all $i\leq |\Vset_n|$, $\Vset_n\ni U_i:=\{i,i+1,\ldots,i+n-1\}$. Indeed, using such subsamples yields a tractable likelihood \eqref{eq:arma_lkhd} as otherwise, missing variables need to be integrated out, hence loosing the simplicity of our approach.

With some abuse of notation, the proposal kernel $R$ can be written as a transition kernel on the alphabet $\{0,\ldots,N-n+1\}$. It is defined in this example as:
\begin{equation}
\label{eq:def_R}
R(i;j)=\1_{i\neq j}\left\{\omega \frac{\exp{\left(-\lambda|j-i|\right)}}{\sum_{j\leq |\bar{\Uset}_n|,\, j\neq i}\exp{\left(-\lambda|j-i|\right)}}+(1-\omega)
\frac{1}{|\bar{\Uset}_n|-1}\right\}\,.
\end{equation}
The rationale is to propose a new subset through a mixture of two distributions: the first gives higher weight to local moves and the latter allows jumps to remote sections of the time series. In this example, we have used $\omega=0.9$ and $\lambda=0.1$. We study the efficiency of ISS-MCMC in function of $n$, $\eps$ and $S$.

For any subsample $Y_U$, $U\in\Vset_n$, we have set the summary statistics $\bS(Y_U)$ to the solution of the AR(2) Yule-Walker equations for the dataset $Y_U$. As shown in Figure \ref{fig:val}, this choice of summary statistics satisfies (graphically) \textbf{A.}\ref{assu:4} with $\gamma\approx 5\,10^6$. We therefore set $\epsilon=5.0\,10^6$ to make sure that $A_n$ \eqref{eq:d2} is bounded. Theoretically, Proposition \ref{prop:bound} guarantee that the bias is controlled. This is illustrated graphically in Figure \ref{fig:ar1} where $\pi$ is compared to $\tpi_i$ ($i=50,000$). We also report the distribution of the Informed Sub-Sampling chain when the $\epsilon=0$, \ie when the subsampling is actually uninformed and all subsamples have the same weight. In the latter case, $A_n$ is not bounded which explains why the bias on $\lim_{i\to\infty}\|\pi-\tpi_i\|$ is not controlled. Figures \ref{fig:ar1} and \ref{fig:ar2} illustrate the distribution of $\{\tilde{\theta}_i,\,i\in\nset\}$ for some runs of ISS-MCMC with $\epsilon=0$ and $\epsilon=5.0\,10^6$. Finally, Figure \ref{fig:cpu3} gives a hint at the computational efficiency of ISS-MCMC. Metropolis-Hastings was compared to ISS-MCMC with $n\in\{1,000;5,000;10,000\}$ and $\epsilon\in\{0;1;5.0\,10^6\}$. The performance indicator is defined as the average of the marginals Total Variation distance, \ie
$$
\TV(t)=\frac{1}{d}\sum_{j=1}^d\|\pi^{(j)}-\tpi_t^{(j)}\|\,,
$$
where $\pi^{(j)}$ and $\tpi_t^{(j)}$ are respectively the true $j$-th marginal and the $j$-th marginal of the chain distribution after a runtime of $t$ seconds. The true marginals were estimated from a long Metropolis-Hastings chain, at stationarity. $\tpi_t^{(j)}$ was estimated using 500 independent chains starting from the prior. The Matlab function \texttt{ksdensity} in default settings was applied to estimate $\|\pi^{(j)}-\tpi_t^{(j)}\|_\TV$ from the chains samples, hence the variability. On the one hand, when $\epsilon=0$, there is no informed search for subsamples which makes the algorithm much faster than the other setups but yields a significant bias (larger than $0.5$). On the other hand, setting $\epsilon>0$ adds to the computational burden but allows to reduce the bias. In fact, for $n=5,000$ and a computational budget of $t=1,000$ seconds, the bias of ISS-MCMC is similar to that of Metropolis Hastings but converges 100 times as fast. Finally, note that as expected by the theory, when the unrepresentative subsamples are not penalized enough (\eg by setting $\epsilon=1$), ISS-MCMC yields a significant bias which hardly improves on uninformed subsampling when $n=10,000$. Following Assumption \textbf{A.}\ref{assu:4}, we see that setting $\epsilon=5.0\,10^6$ significantly reduces the bias. Note that setting $\epsilon>5.0\,10^6$ could potentially reduce further the bias but may fail the algorithm: indeed, when $\epsilon$ is too large the chain $\{U_k,\,k\in\nset\}$ gets easily stuck on a set of the best subsamples (for this choice of summary statistics) and may considerably slow down the convergence of the algorithm.

Other choices of summary statistics can be considered. Since $Y$ is modelled as an autoregressive time series, an option would be to set the summary statistics as the empirical autocorrelation function. Figure \ref{fig:badSS} shows that it is not a recommended choice. The left panel suggests that Assumption \textbf{A.}\ref{assu:4} does not hold for this type of summary statistics: good subsets yield a large value for $\log f(Y\,|\,\theta)-(N/n)\log f(Y_U\,|\,\theta)$ and conversely for bad subsets, hence generating a bias (see Eq. \eqref{eq:bound2_bis}). As a consequence, the right panel which shows a clear mismatch between $\pi^{(3)}$ and the Informed Sub-Sampling third marginal is not surprising.

\begin{figure}
\centering
\includegraphics[scale=0.8]{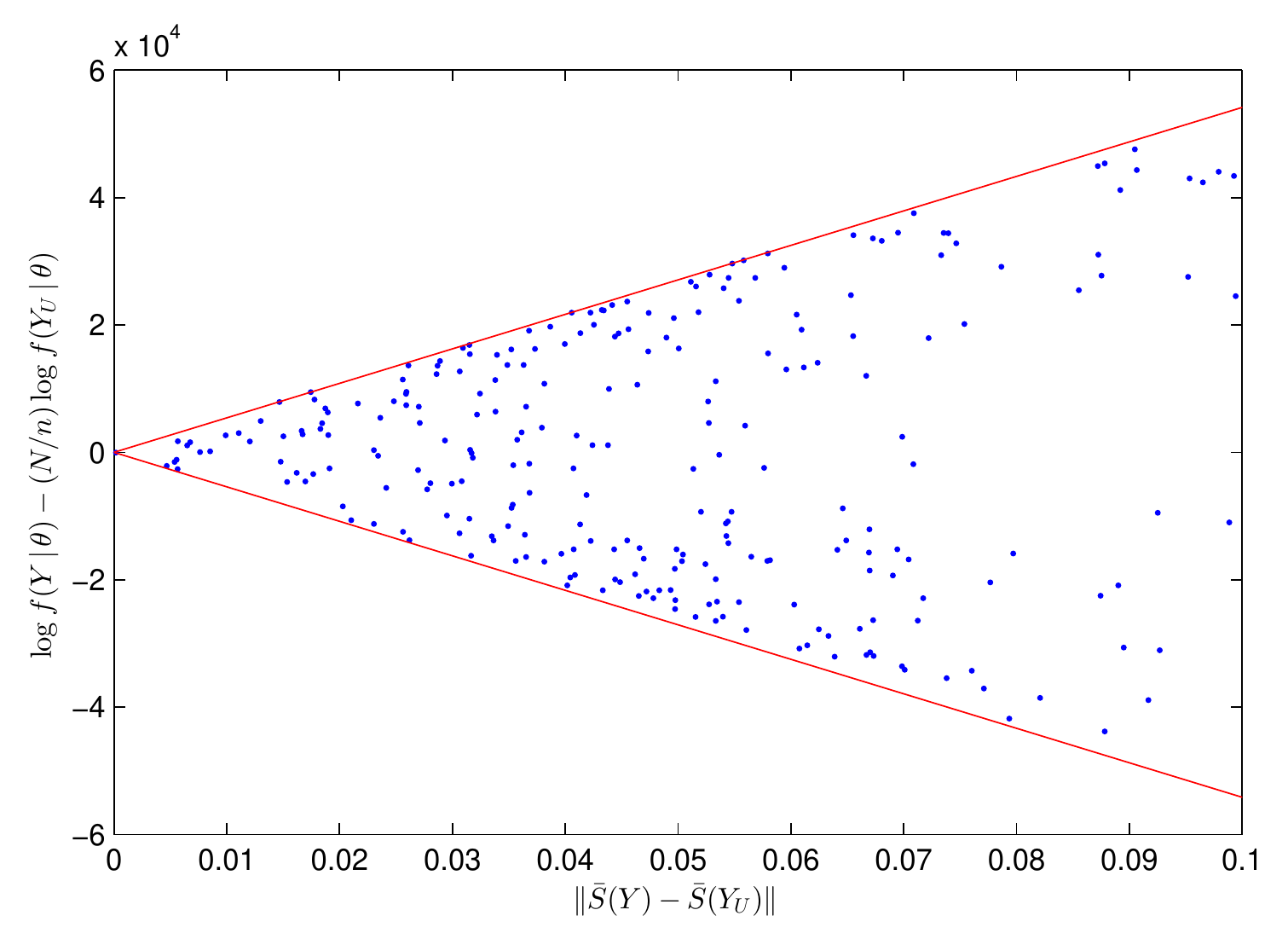}
\caption{(Example \ref{ex:ar}: Autoregressive time series) Validation of summary statistics set as the solution of Yule-Walker equation, with $n=1,000$. This choice of sufficient statistics satisfies Assumption \textbf{A.}\ref{assu:4} with $\gamma_n\approx 5\,10^{6}$. Each dot corresponds to a point $(\log f(Y\,|\,\theta)-(N/n)\log f(Y_U\,|\,\theta),\|\bar{\Delta}_n(U)\|)$ where $\theta$ and $U$ were respectively drawn from the prior $p$ and uniformly at random in $\bar{\Uset}_n$. The red lines allow to estimate the parameter $\gamma_n$ of \textbf{A.}\ref{assu:4}. \label{fig:val}}
\end{figure}

\begin{figure}
\centering

\includegraphics[scale=.8]{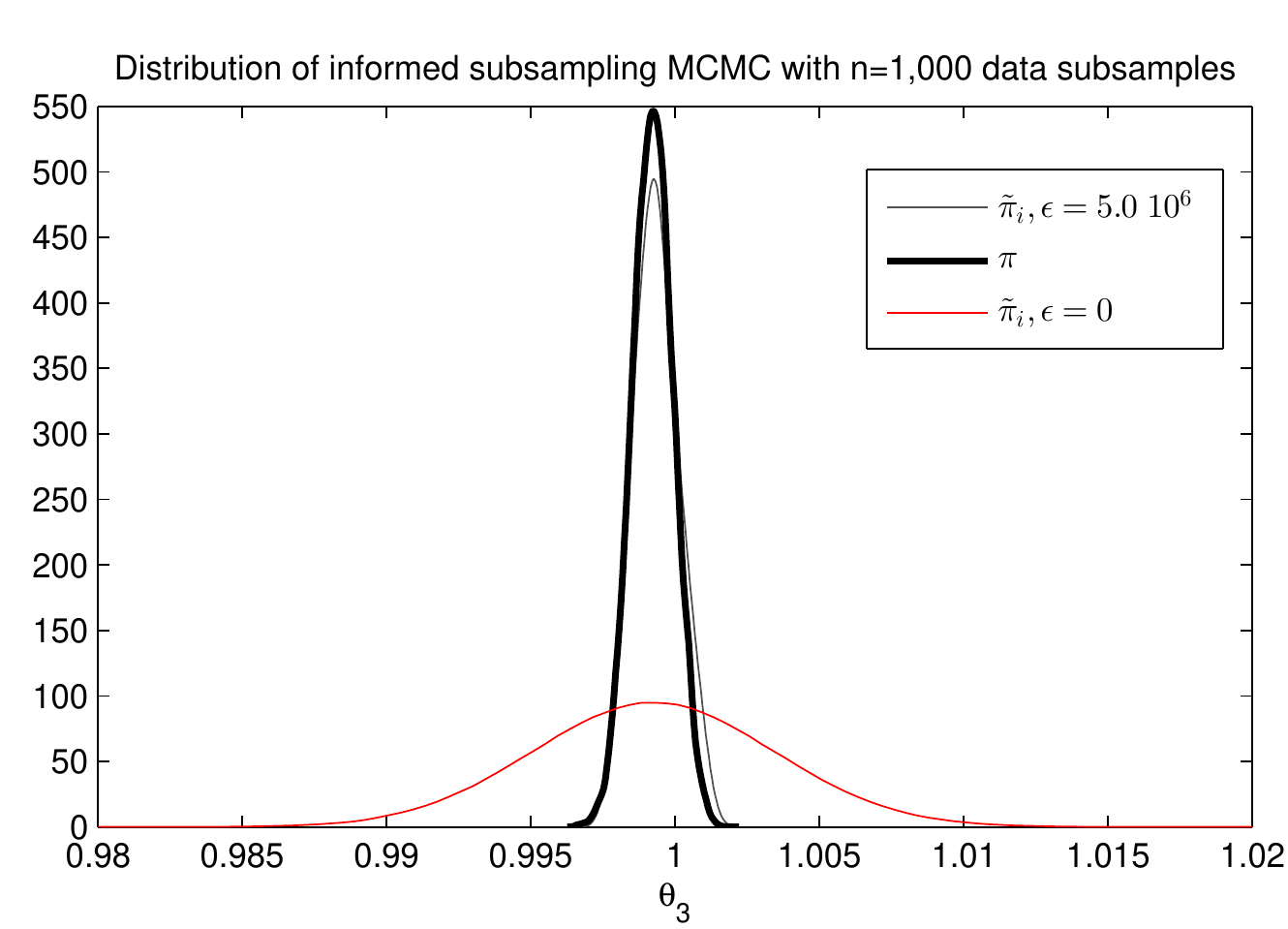}
\caption{(Example \ref{ex:ar}: Autoregressive time series)
Inference of the noise parameter with ISS-MCMC, using subsets comprising of $n=1,000$ contiguous time steps of a $N=10^6$ time-series. The plot represents the distributions $\tpi_i$  ($i=50,000$) of the Informed Sub-Sampling Markov chain for two different values of $\epsilon\in\{0, 5.0\;10^{6}\}$. These distributions where obtained from the replication of $1,000$ independent chains.\label{fig:ar1}}
\end{figure}

\begin{figure}
\centering
\includegraphics[scale=0.8]{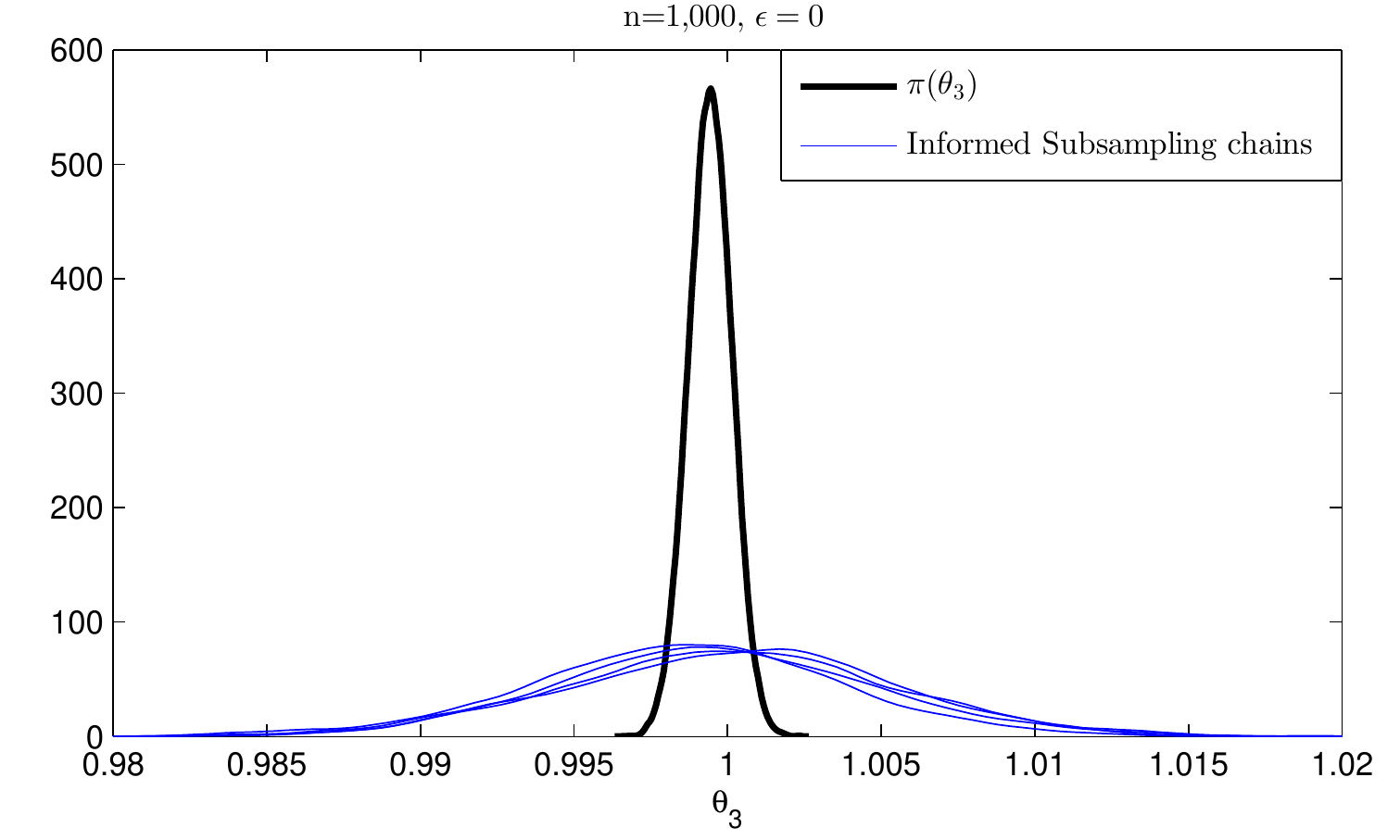}
\caption{(Example \ref{ex:ar}: Autoregressive time series) Marginal distribution of $\theta_3$ and distribution of $\{\ttheta_i,\,i\in\nset\}$ for four independent Informed Sub-Sampling Markov chains with $\eps=0$ and $n=1,000$.\label{fig:ar2}}
\end{figure}

\begin{figure}
\centering
\includegraphics[scale=0.8]{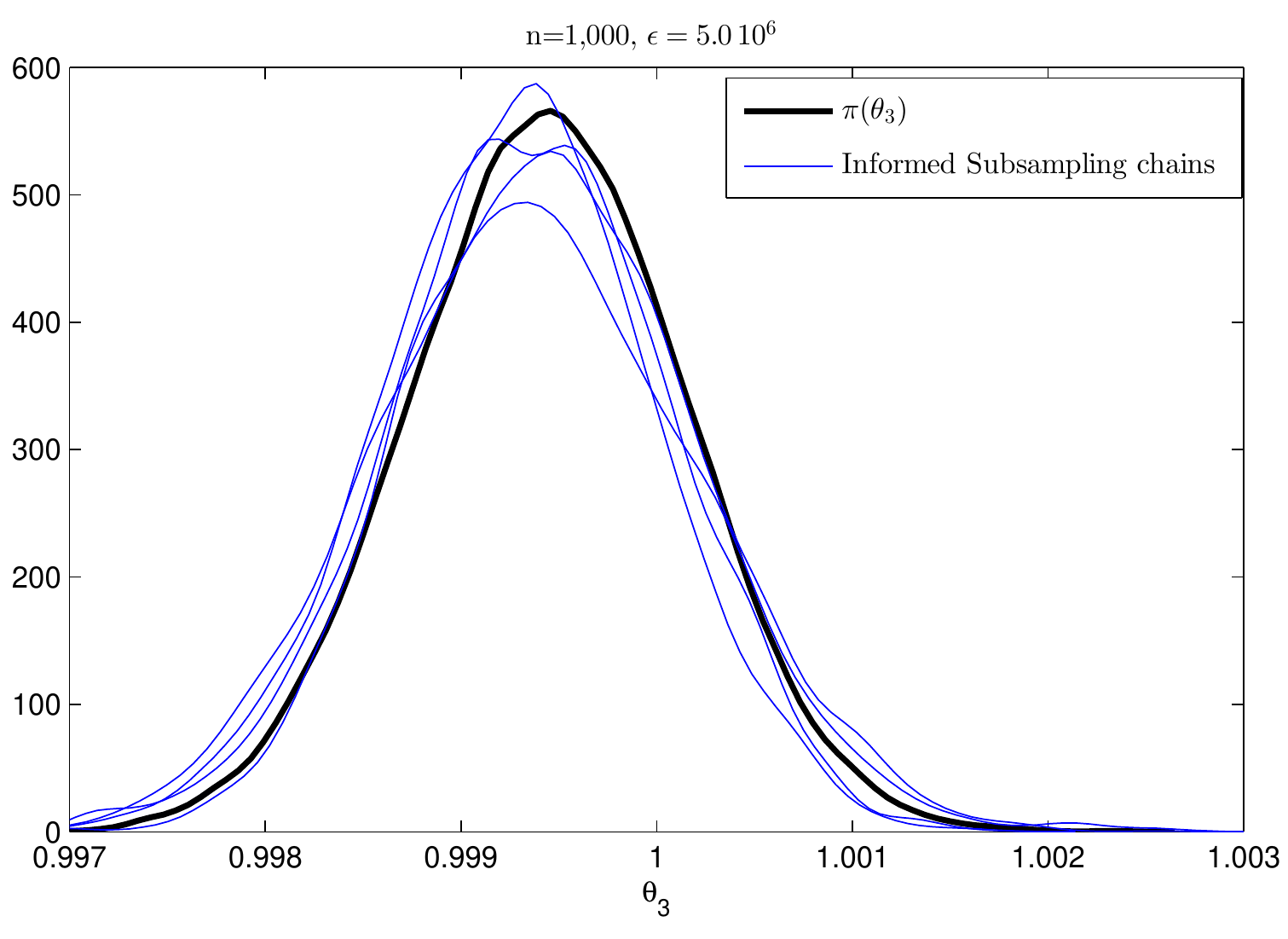}
\caption{(Example \ref{ex:ar}: Autoregressive time series) Marginal distribution of $\theta_3$ and distribution of $\{\ttheta_i,\,i\in\nset\}$ for four independent Informed Sub-Sampling Markov chains with $\eps=5.0\,10^{6}$ and $n=1,000$.}
\end{figure}

\begin{figure}
\centering
\includegraphics[scale=.52]{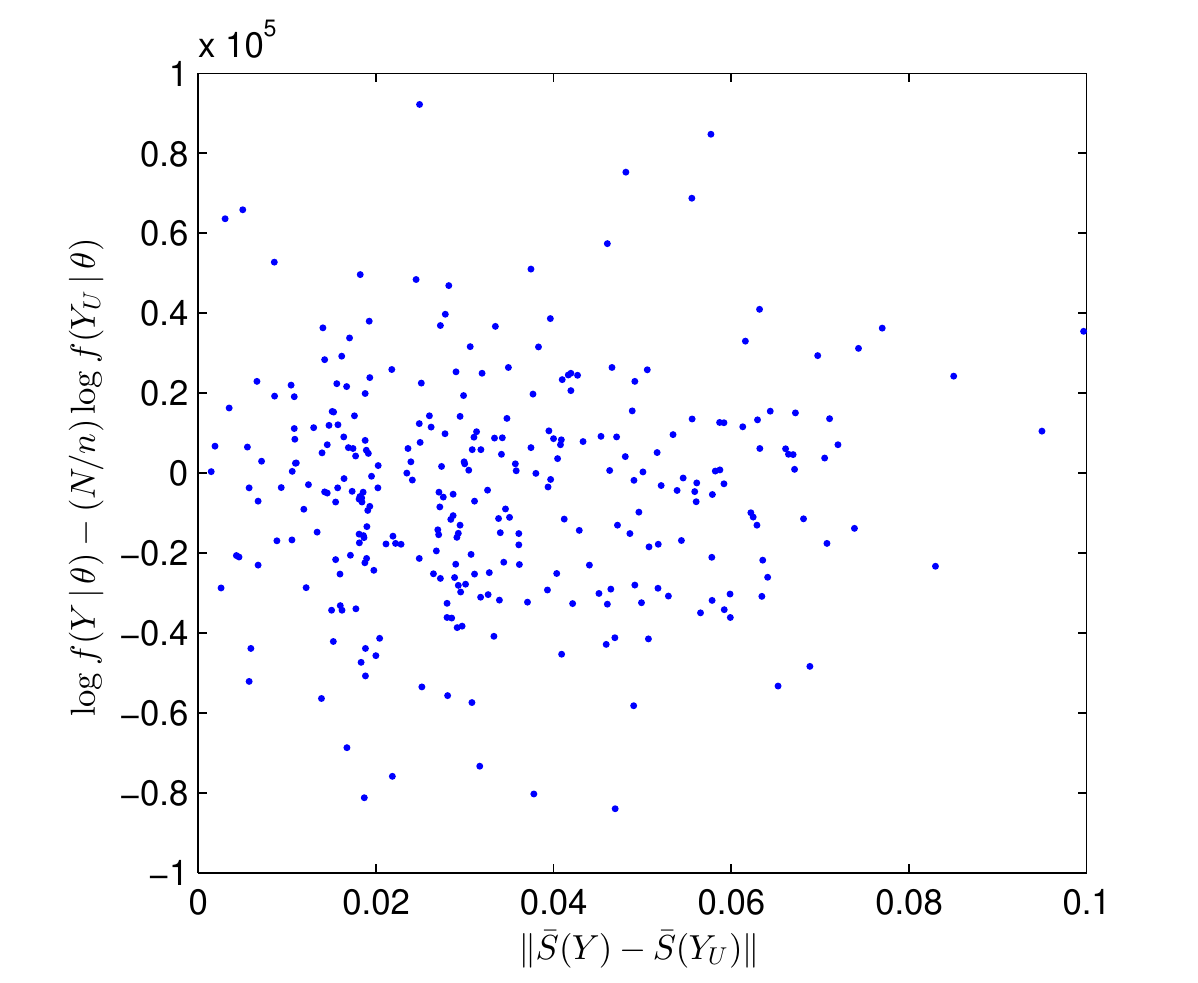}
\includegraphics[scale=.52]{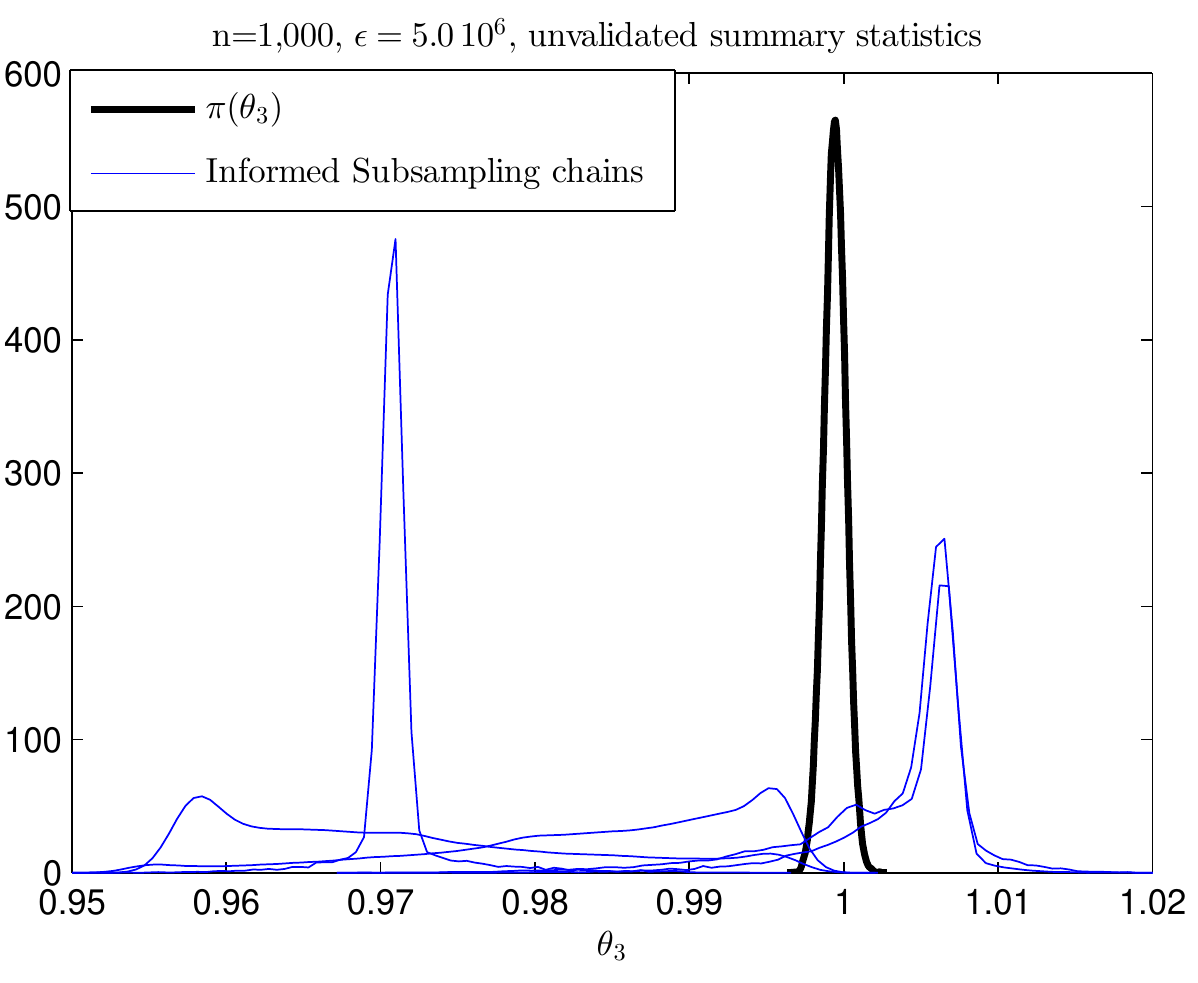}
\caption{(Example \ref{ex:ar}: Autoregressive time series) In this case, the summary statistics were defined as the first 5 empirical autocorrelation coefficients. The left panel shows that this is not a recommended choice and the right panel illustrates the distribution of $\{\ttheta_i,\,i\in\nset\}$ for four independent Informed Sub-Sampling Markov chains ($\eps=5.0\,10^{6}$,  $n=1,000$ and this choice of summary statistics), yielding an obvious mismatch.\label{fig:badSS}}
\end{figure}

\begin{figure}
\centering
\includegraphics[scale=.5]{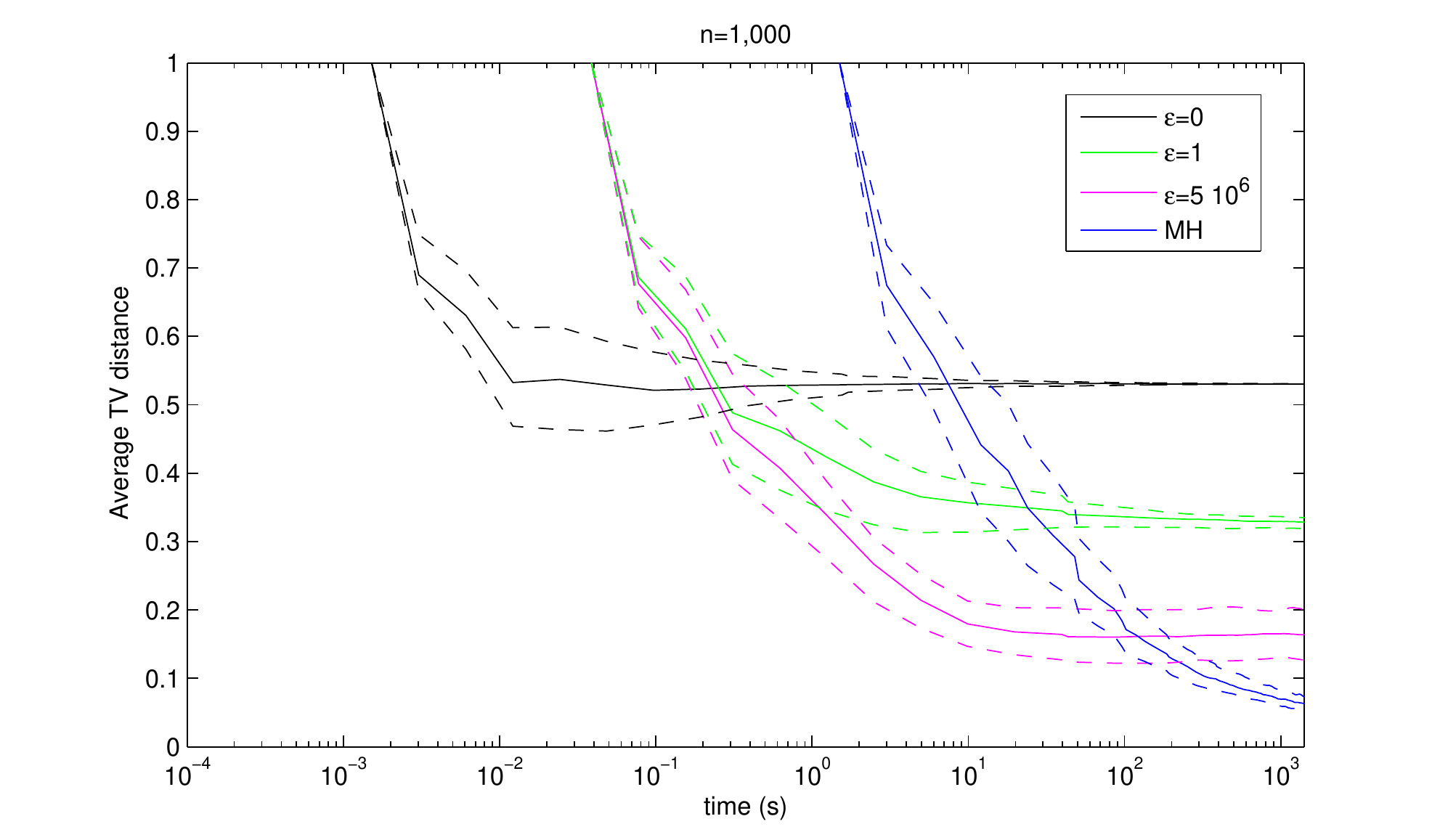}
\includegraphics[scale=.5]{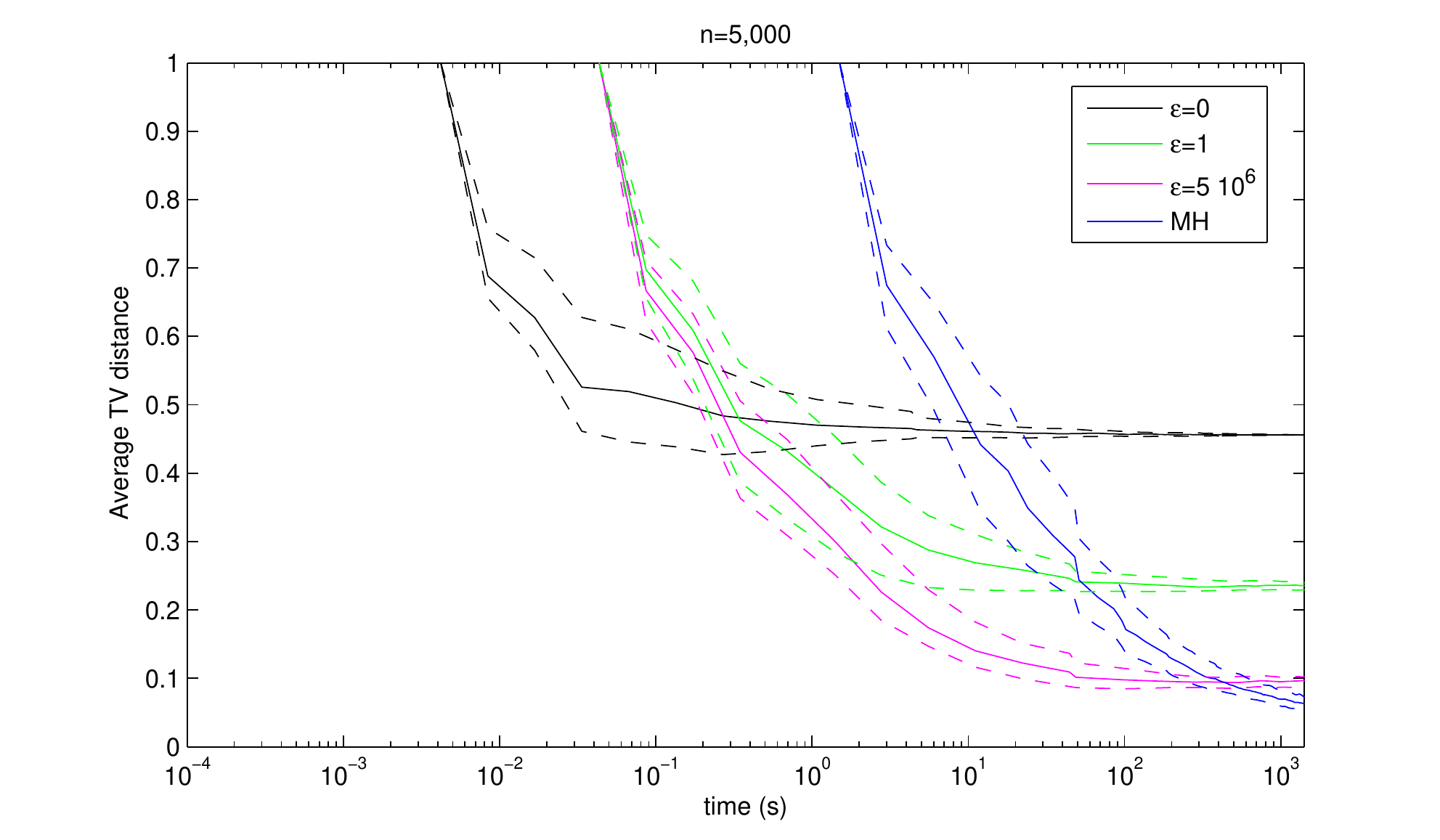}

\includegraphics[scale=.5]{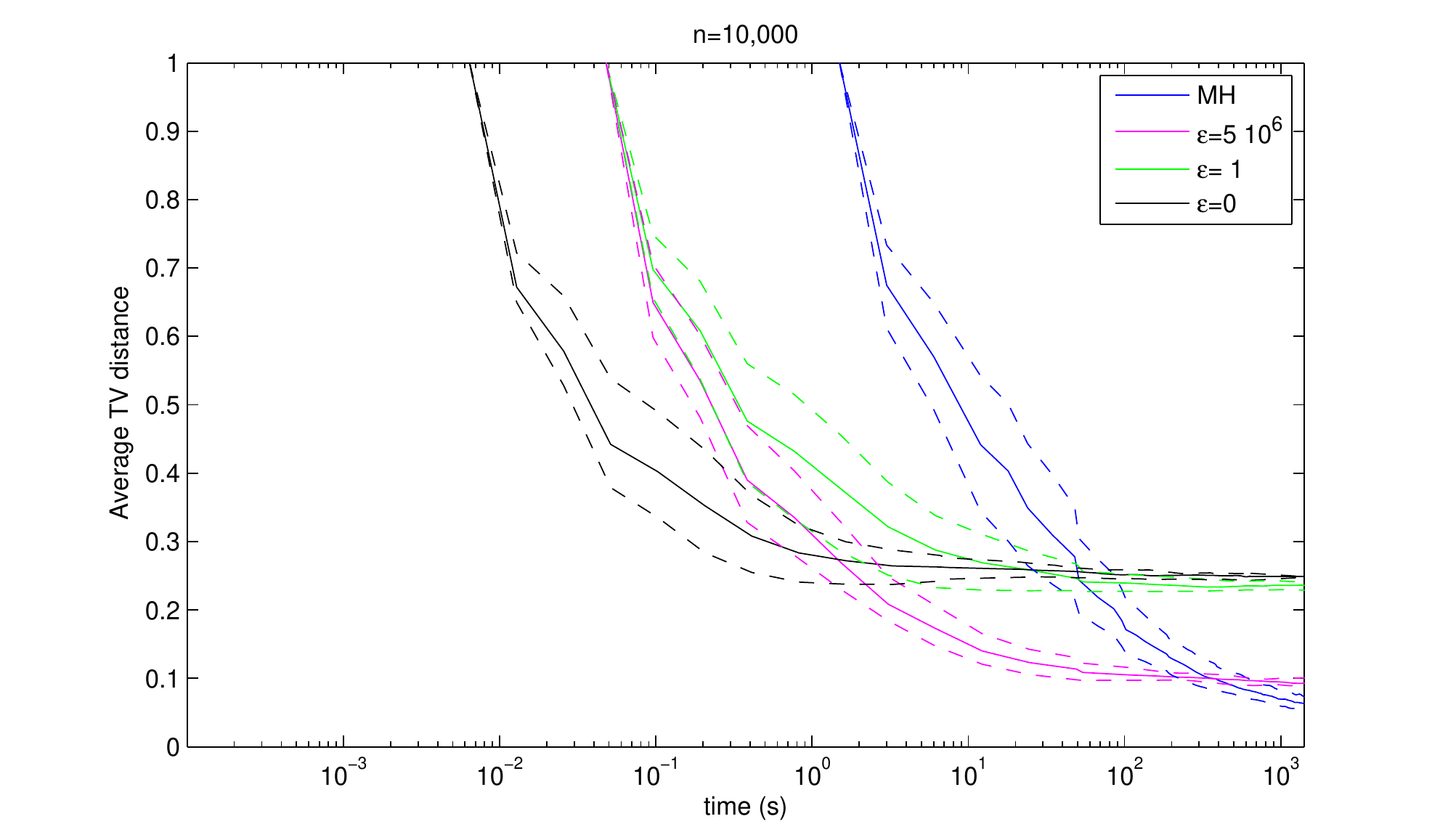}

\caption{(Example \ref{ex:ar}: Autoregressive time series) Average Total variation distance over the three marginals between $\pi$ and $\tilde{\pi}_t$ in function of the simulation time $t$. The dashed lines represent the first and third quartiles. Scenario $n=1,000$ (top), $n=5,000$ (middle) and $n=10,000$ (bottom) with three different $\epsilon$. Note that in all the three plots, the MH curves are identical and are just reported for comparison purpose.\label{fig:cpu3}}
\end{figure}

\subsection{Logistic Regression Example}
\begin{exa}
\label{ex:logreg}
A $d$-dimensional logistic regression model is parameterized by a vector $\theta=(\theta_1,\ldots,\theta_d)\in\Theta\subset\rset^d$. Observations are realizations of the following model:
\begin{itemize}
\item simulate covariates $X_i=(X_{i,1},\ldots,X_{i,d})\sim\norm(0,(1/d)^2)$
\item simulate $Y_i$ given $\theta$ and $X_i$ as
\begin{equation}
Y_i=
\left\{
\begin{array}{ll}
1&\text{w.p.}\;1\slash\left(1+e^{-\theta X\T}\right)\,,\\
0&\text{otherwise}\,.
\end{array}
\right.
\end{equation}
We have simulated $N=10^6$ observations $Y_1,Y_2,\ldots$ under the true parameter $\theta^{\ast}=(1, 2, -1)$ $(d=3)$.
\end{itemize}

\end{exa}

The summary statistics were set as the maximum likelihood estimator returned by the Matlab routine \texttt{glmfit} and were graphically validated, as in Figure \ref{fig:val}. The tolerance parameter was consequently set $\epsilon=5.0\,10^{6}$. We study the influence of $n$ on the Informed Sub-Sampling chain marginal distributions in Figure \ref{fig:logistic}. We note that as soon as $n\geq 5,000$, the bias vanishes and that when random subsampling is used (\ie $\eps=0$), the bias is much larger. Of course, Figure \ref{fig:logistic} only gives information about the marginal distributions. To complement the study, we consider estimating the probability $\pi(D)$ where $D$ is the domain defined as:
$$
D\subset\Theta=\left\{\theta_1\in(0.98, 1.00)\,;\,\theta_2\in (1.98, 2.01)\,;\,\theta_3\in(-0.98, -0.95)\right\}\,,
$$
in order to check that the joint distribution $\pi$ is reasonably inferred. Numerical integration using a long Metropolis-Hastings algorithm, gave the ground truth $\pi(D)=0.1$. The top panel of Figure \ref{fig:var} illustrates the Monte Carlo estimation of $\pi(D)$ based on $i=10,000$ iterations of ISS-MCMC implemented with $n\in\{1,000\,;\,5,000\,;\,10,000\}$ and compares it to Metropolis-Hastings. As expected ISS-MCMC has a negligible bias and the variance of the estimator decreases when $n$ increases. Indeed, when $n$ increases, the Informed Sub-Sampling process is less likely to pick irrelevant subsets, which in turns lower the variability of the chain. The Monte-Carlo estimation based on ISS-MCMC with $n=10,000$ and Metropolis-Hastings are very similar. However, when we normalize the experiment by the CPU time, Metropolis-Hastings is clearly outperformed by ISS-MCMC. The lower panel of Figure \ref{fig:var} assumes that only $t=500$ seconds of computation are available. All the chains are started from $\theta^\ast$. Table \ref{tab:var} reports the quantitative details of this experiment. In such a situation, one should clearly opt for the Informed Sub-Sampling approach as the Metropolis-Hastings algorithm only achieves 50 iterations for this amount of computation and as such fails to reach stationarity.

We also compare ISS-MCMC with two other algorithms that approximate the Metro\allowbreak polis-Hastings algorithm by using subset of data, drawn, unlike ISS-MCMC, uniformly at random. More precisely, we have implemented the Stochastic Gradient Langevin Dynamic (SGLD) from \cite{welling2011bayesian} and the Subsampled likelihoods MH (SubLhd1) algorithm from \cite{bardenet2014towards} along with an improved version of this algorithm that makes use of control variates, referred to as the Improved Confidence sampler in \cite{bardenet2015markov} but abbreviated here as SubLhd2 for simplicity. Those algorithms have been implemented in their default version, following the parameterization prescribed in their original article. All those methods are inexact and we are interested in comparing the bias/variance tradeoff per CPU time unit. Results in terms of convergence in distribution and Monte Carlo estimation are reported respectively in Figure \ref{fig:logistic} and Table \ref{tab:var}. For this model, SGLD and SubLhd1 show a larger bias than ISS-MCMC and SubLhd2: they need larger subset size $n$ to achieve a similar precision than ISS-MCMC or SubLhd2, see Figure \ref{fig:logistic}. SubLhd2 seems to outperform ISS-MCMC when $n$ is low in terms of distribution bias but the two methods perform equally good when $n\geq 5,000$. Quantitatively, the Monte Carlo estimation of $\pi(D)$ appears better with SubLhd2 than any other method for any subset size, as indicated by the RMSE reported at Table \ref{tab:var}. However, looking at the comparative boxplot representing the distribution of the Monte Carlo estimator of $\pi(D)$ in time normalized experiments (Figure \ref{fig:var_app}), one can see that when $n$ is larger than $5,000$, estimators from ISS-MCMC and SubLhd2 are quite similar confirming the qualitative impression of Figure \ref{fig:logistic} and perhaps moderating the RMSE-based assessment made at Table \ref{tab:var}.

\begin{figure}
\centering
\includegraphics[scale=.6]{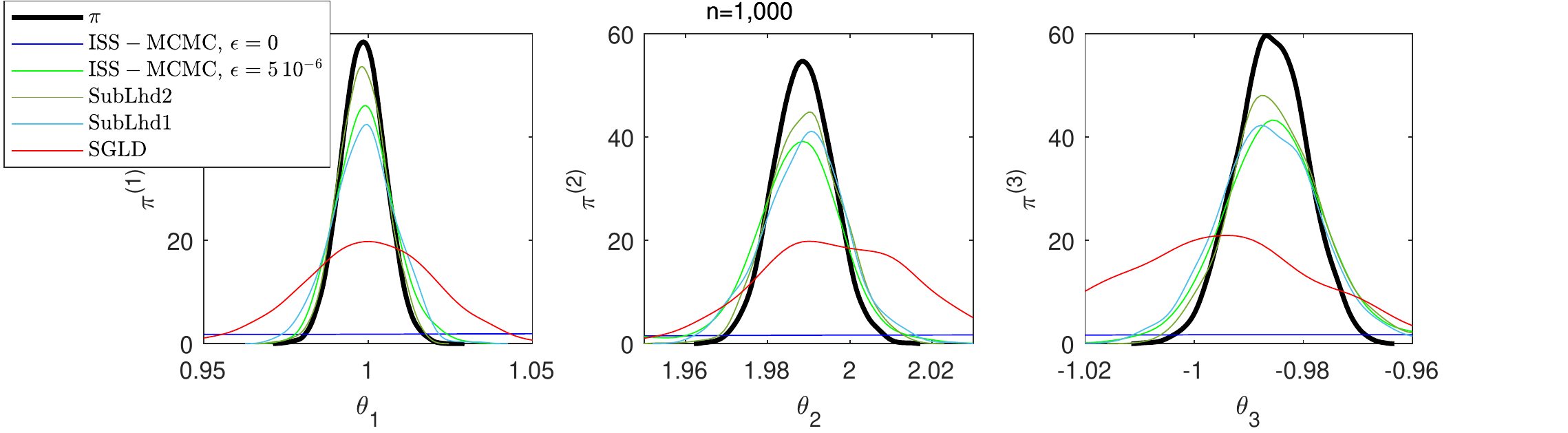}

\includegraphics[scale=.6]{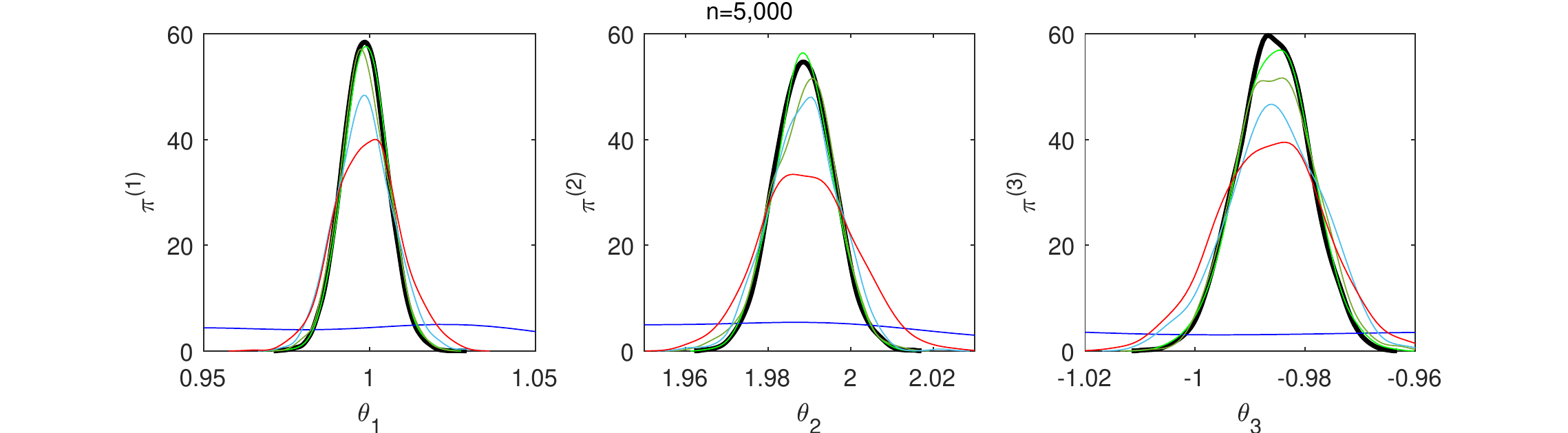}

\includegraphics[scale=.6]{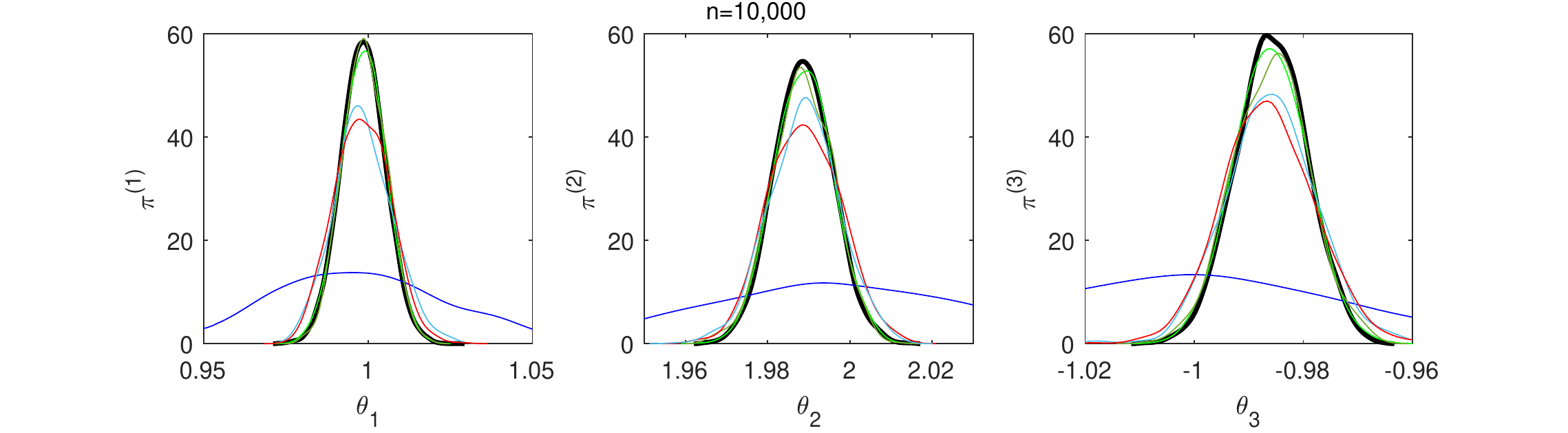}
\caption{(Example \ref{ex:logreg}: Logistic regression) Stationary marginal distributions of several algorithms approximating Metropolis-Hastings using subsamples: ISS-MCMC with $\epsilon=\{0\,;\,5.0\,10^4\}$, the MH Sublikelihood algorithm \cite{bardenet2014towards} (SubLhd1) (and its improved version denoted SubLhd2, see \cite{bardenet2015markov}) and the Stochastic Gradient Langevin Dynamic (SGLD) \cite{welling2011bayesian}. The plots represent the marginal distribution $\tpi_i$, (after $i=1,000$ iterations) and different subset sizes $n\in\{1,000\,;\,5,000\,;\, 10,000\}$. The true marginal $\pi$ is in black. $\tpi_i$ was estimated by simulating 1,000 iid copies of the Markov chain generated by the five algorithms. \label{fig:logistic}}
\end{figure}

\begin{figure}
\centering

\includegraphics[scale=.6]{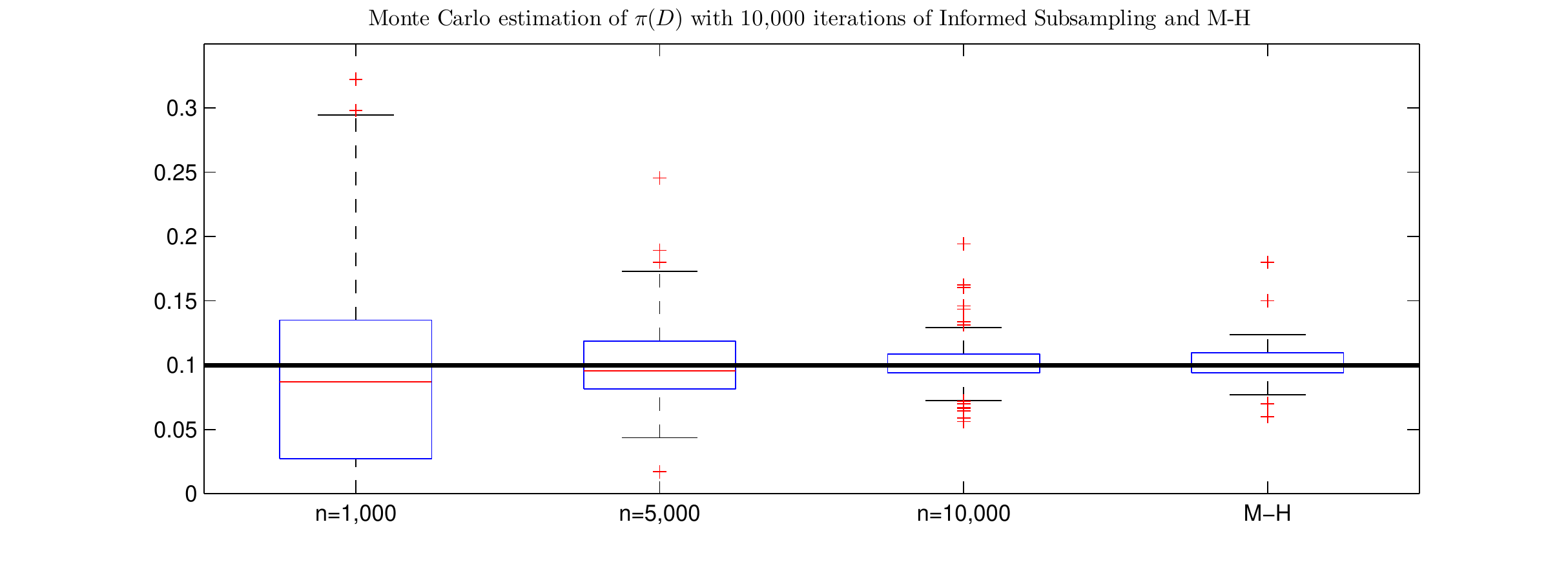}

\includegraphics[scale=.6]{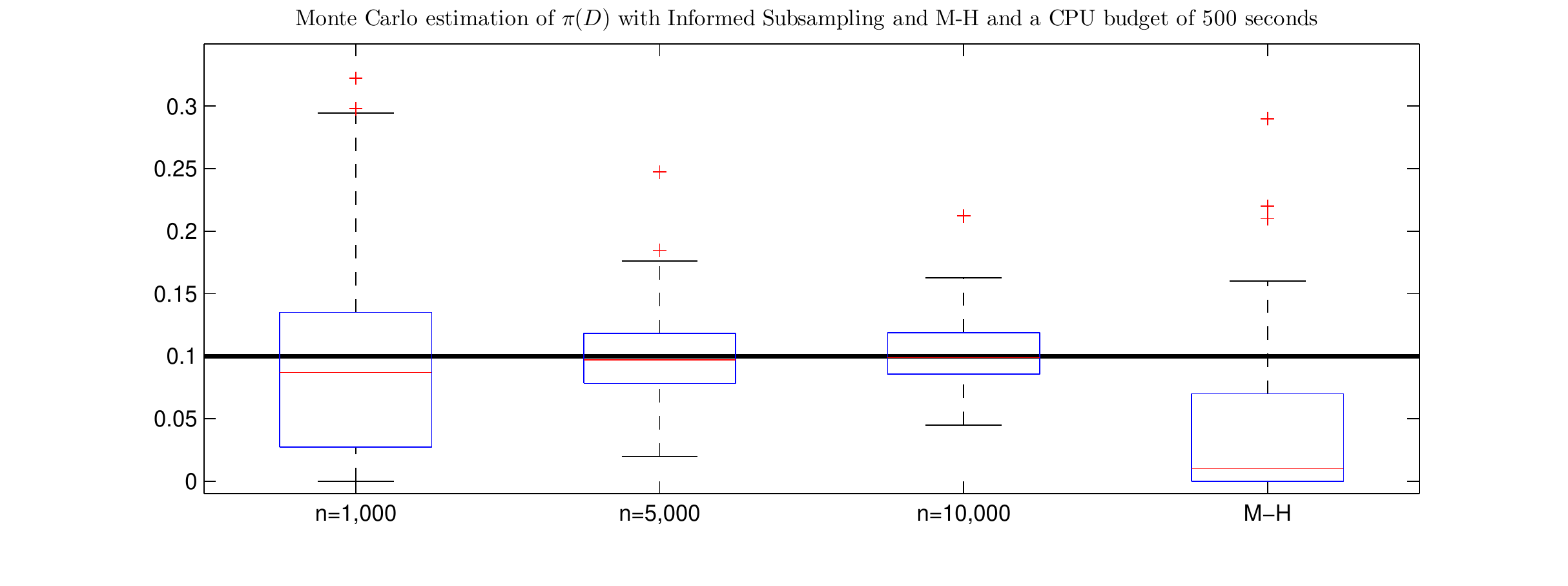}
\caption{(Example \ref{ex:logreg}: Logistic regression) Estimation of $\pi(D)$ based on ISS-MCMC implemented with $n\in\{1,000\,;\,5,000\,;\, 10,000\}$ and Metropolis-Hastings. Top: the experiment is iteration-normalized, \ie the chains run for $10,000$ iterations. Bottom: the experiment is time-normalized, \ie the chains run for $500$ seconds. Each chain was replicated 100 times and started from $\theta^\ast$. \label{fig:var}}
\end{figure}

\begin{table}[h]
\centering
\begin{tabular}{ccccc}
algorithm & time/iter.(s) & iter. completed & RMSE & $\var\{\widehat{\pi(D)}\}$\\
\hline
\hline
M-H & $10$ & $50$ & $0.1417$ & $0.004$\\
\hline
ISS-MCMC, $n=1,000$ & $0.05$ & $10,000$ & $0.1016$ & $0.0104$\\
ISS-MCMC, $n=5,000$ & $0.08$ & $6,250$ & $0.0351$ & $0.0012$\\
ISS-MCMC, $n=10,000$ & $0.13$ & $3,840$ & $0.0267$ & $0.0007$\\
\hline
SGLD, $n=1,000$ & $0.08$ & $6,000$ & $0.1370$ & $0.0157$\\
SGLD, $n=5,000$ & $0.11$ & $5,250$ & $0.0996$ & $0.0100$\\
SGLD, $n=10,000$ & $0.12$ & $4,500$ & $0.0326$ & $0.0011$\\
\hline
SubLhd1, $n=1,000$ & $1.45$ & $350$ & $0.0762$ & $0.0042$\\
SubLhd1, $n=5,000$ & $1.56$ & $323$ & $0.0680$ & $0.0046$\\
SubLhd1, $n=10,000$ & $2.24$ & $223$ & $0.0656$ & $0.0044$\\
\hline
SubLhd2, $n=1,000$ & $0.10$ & $5,046$ & $0.0304$ & $0.0002$\\
SubLhd2, $n=5,000$ & $0.14$ & $3,581$ & $0.0260$ & $0.0006$\\
SubLhd2, $n=10,000$ & $0.19$ & $2,631$ & $0.0195$ & $0.0002$
\end{tabular}
\caption{(Example \ref{ex:logreg}: Logistic regression) Tradeoff Bias-Variance of the Monte Carlo estimator from Metropolis-Hastings, ISS-MCMC, Stochastic Gradient Langevin Dynamics (SGLD) \cite{welling2011bayesian}, the Subsampled likelihoods (SubLhd1) \cite{bardenet2014towards} and the improved Confidence Sampler (SubLhd2) \cite{bardenet2015markov} for a fixed computational budget of 500 seconds. Those results were replicated using 100 replications of each algorithm. Note that for SubLhd1 and SubLhd2, $n$ corresponds to the initial subset size and not to the actual subsample size that was actually used in each iteration, a parameter which is chosen by the algorithms. \label{tab:var}}
\end{table}

\begin{figure}
\centering
\includegraphics[scale=0.6]{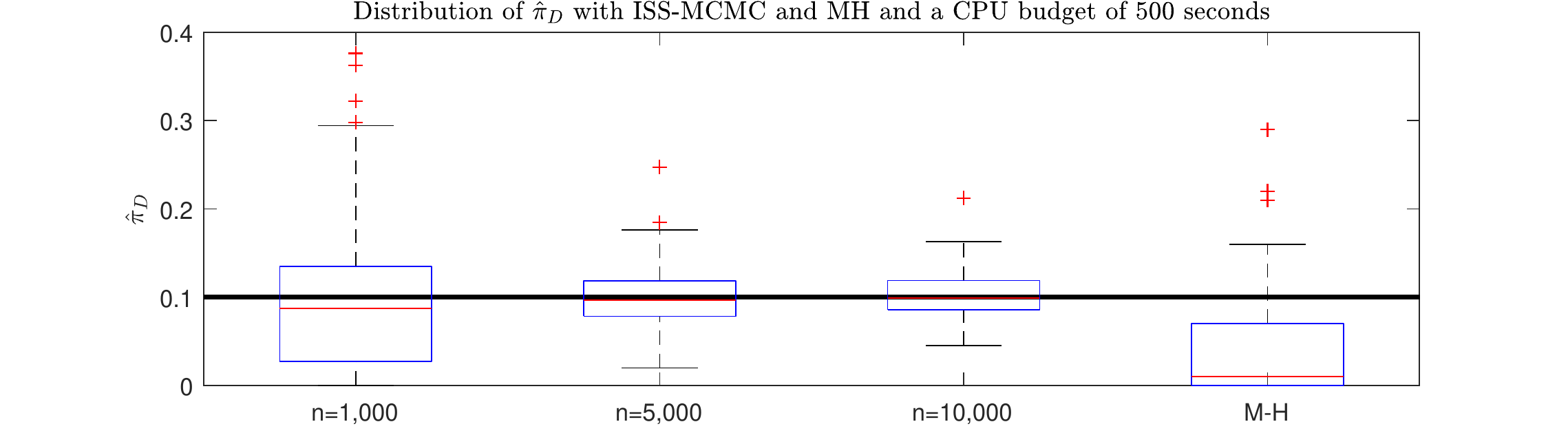}

\includegraphics[scale=0.6]{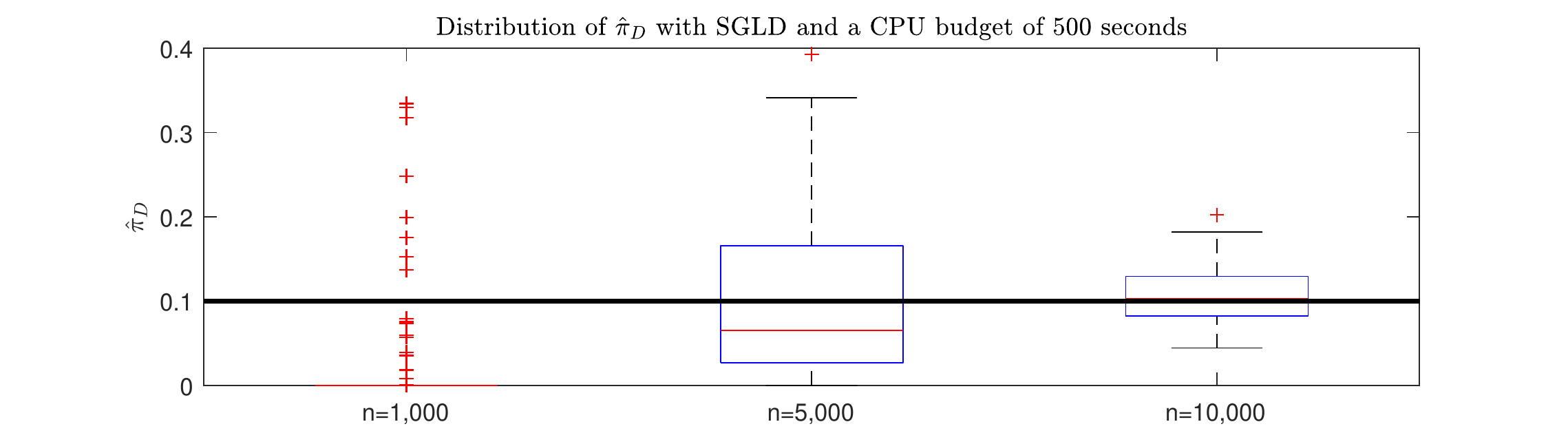}

\includegraphics[scale=0.6]{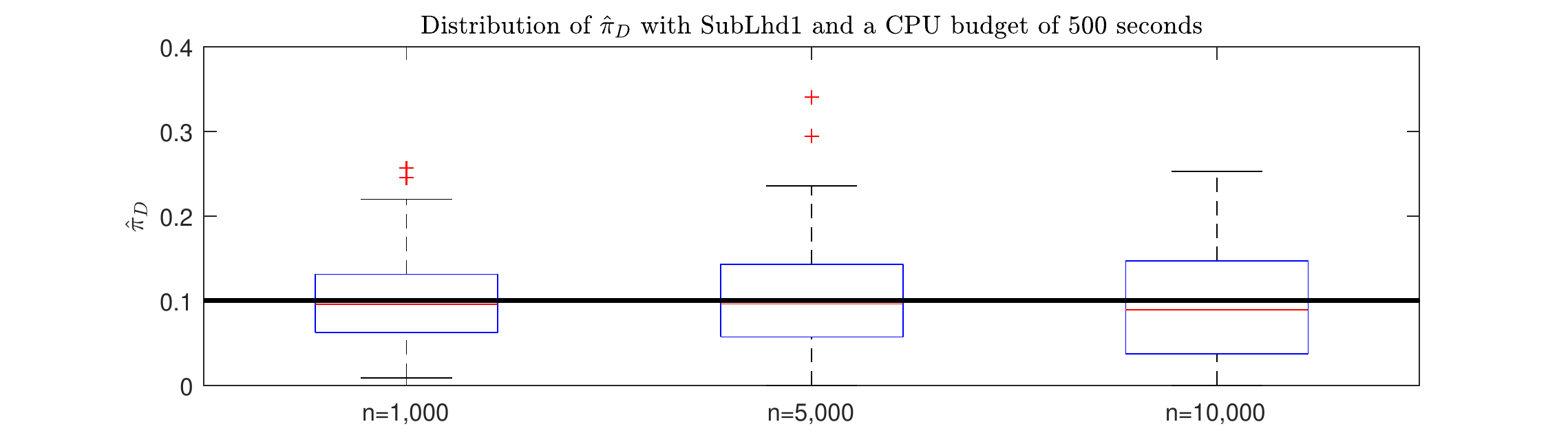}

\includegraphics[scale=0.6]{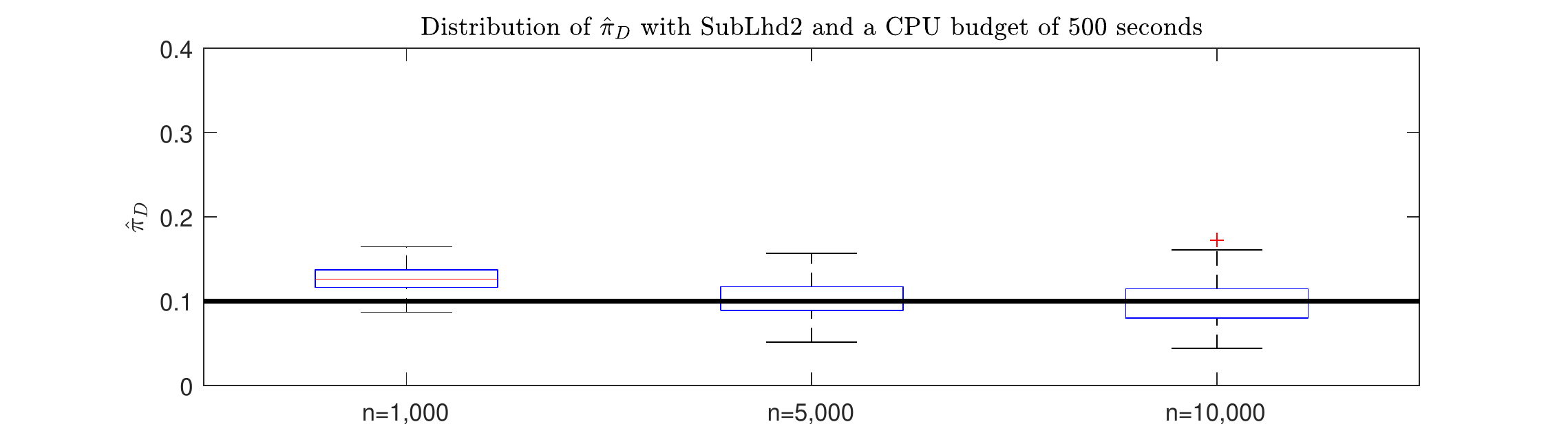}

\caption{(Example \ref{ex:logreg}: Logistic regression) Estimation of $\pi(D)$ based on ISS-MCMC, MH, SGLD, SubLhd1 and SubLhd2 implemented with $n\in\{1,000\,;\,5,000\,;\, 10,000\}$. The Monte Carlo estimation of $\pi(D)$ was carried out using those algorithms for $500$ seconds. Each estimation was replicated 100 times and started from $\theta^\ast$ for each algorithm. Note that for SubLhd1 and SubLhd2, $n$ corresponds to the initial subset size and not to the actual subsample size that was actually used in each iteration, a parameter which is adaptively tuned by the algorithms.  \label{fig:var_app}}
\end{figure}

\subsection{Binary Classification}
\begin{exa}
\label{ex:bincla}
A training dataset consisting of $N=10^7$ labeled observations $Y=\{Y_k,\,k\leq N\}$ from a 2 dimensional Gaussian mixture model is simulated with
$$
Y_k\,|\,I_k=i\sim \norm(\mu_i,\Gamma_i)\,,\qquad I_k\sim\ber(1/2)\,,
$$
where $\mu_1=[\theta_1\,,\,0]$, $\mu_2=[\theta_2\,,\,0]$, $\Gamma_1=\diag([\theta_3/2\,,\,\theta_3])$ and $\Gamma_2=\diag([\theta_4/2\,,\,\theta_4])$. We define $\theta=(\theta_1,\theta_2)$ with $\theta_i\in\rset\times \rset_*^{+}$ for each model $i\in\{1,2\}$. A prior distribution $(\theta_1,\theta_2)\sim_{\iid} p:=\norm(0,1/2)\otimes\Gamma(1,2)$ ($\Gamma(a,b)$ is the Gamma distribution with shape $a$ and rate $b$) is assigned to $\theta$. Consider an algorithm $\A$ that simulates a Markov chain
$$
\theta_\A:=\left\{\left(\theta_{1,k}^{(\A)},\theta_{2,k}^{(\A)}\right),\,k\in\nset\right\}
$$
targeting the posterior distribution of $\theta$ given $Y$, perhaps approximately. We consider the real-time supervised classifier $I_{\A}^\ast(t)$, driven by $\theta_\A$, for the test dataset $Y^{\ast}=\{Y_k^{\ast},\,k\leq \Ntest\}$ ($\Ntest=10^4$) and defined as:
\begin{equation}
\label{eq:class}
I_{\A}^\ast(t)=(I_{\A,1}^\ast(t),\ldots,I_{\A,\Ntest}^\ast(t))\,,\qquad I_{\A,k}^\ast(t)=\arg\max_{i\in\{1,2\}}f(Y_k^\ast\,|\,\bar{\theta}_{i,\kappa_\A(t)}^{(\A)})\,,
\end{equation}
where $\kappa_\A(t)=\sup_{k\in\nset}\{\tau_k^\A\leq t\}$ and $\bar{\theta}_{i,k}^{(\A)}=(1/k)\sum_{\ell=1}^k\theta_{i,\ell}^{(\A)}$. We have defined $\tau_k^\A$ as the wall clock time to generate $k$ iterations of algorithm $\A$. We define the live classification error rate as $\eps_\A(t)=\|I_\A^\ast(t)-I^\ast\|_1$ where $I_\A^{\ast}=(I_{\A,1}^{\ast},\ldots,I_{\A,N}^{\ast})$ and $I_k^{\ast}$ is the true class of $Y_k^{\ast}$. We compare $\epsilon_\A$ for three different algorithms $\A$: ISS-MCMC, Metropolis-Hastings and Subsampled Likelihoods \citep{bardenet2014towards}.
\end{exa}

In this simulation example, we have used the true value $\theta^\ast=(-1, 1/2 , 1 , 1/2)$ and simulated $Y$ such that it contains the same number of observations from model 1 and model 2 \ie $N/2$. The three algorithms were implemented with the same proposal kernel, namely a single site random walk with adaptive variance that guarantees an acceptance rate between 0.40 and 0.50, see \cite{roberts2001optimal,haario2001adaptive}. ISS-MCMC was implemented with parameters $n=1,000$ and $\eps=10^{7}$. The summary statistics were taken as $S(Y_U)=0$ if $\sum_{k\in U}\1_{\{I_k=1\}}\neq \sum_{k\in U}\1_{\{I_k=2\}}$ and
\begin{multline}
\label{eq:ex4_ss}
\bS(Y_U)=\bigg[(2/n)\sum_{k=1}^{n/2}Y_k\1_{\{I_k=1\}},\;\tra(\cov(Y_k,\,k\in U,\, I_k=1))\,,\\
(2/n)\sum_{k=1}^{n/2}Y_k\1_{\{I_k=2\}},\;\tra(\cov(Y_k,\,k\in U,\, I_k=2))\bigg]
\end{multline}
otherwise. This choice allows to keep the right proportion of data from the two models in any subsample used for the inference. The statistics in \eqref{eq:ex4_ss} are sufficient for each model, taken separately. Subsampled Likelihoods was implemented with the default parameters prescribed in the introduction of Section 4 in \cite{bardenet2014towards}.

Figure \ref{fig:binary} compares the live classification error rate achieved by the three algorithms. We also report the optimal Bayes classifier which achieves $\epsilon_B(t)=0.0812$ classifying $Y_k^{\ast}$ in class 1 if $Y_{k,1}^{\ast}<0$ and in class 2 alternatively.
Unsurprisingly, Metropolis-Hastings is penalized because it evaluates the norm of a $N=10^7$ dimensional vector at each iteration. Subsampled Likelihoods does slightly better than M-H but suffers from the fact that close to stationary regime, the algorithm ends up drawing the quasi-entire dataset with high probability, a fact which was explained in \cite{bardenet2014towards}.

\begin{figure}
\centering

\includegraphics[scale=0.55]{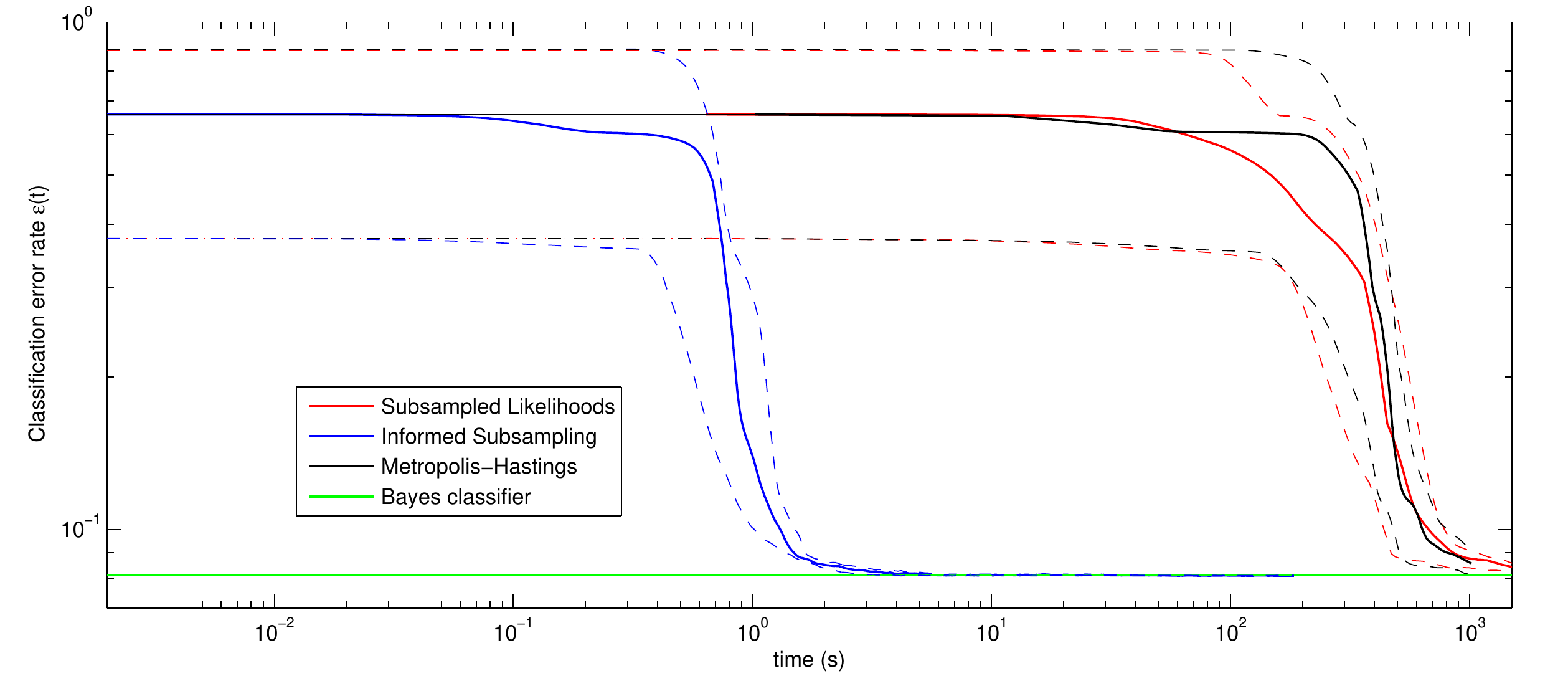}
\caption{(Example \ref{ex:bincla}: Binary classification) Live classification error rate for three algorithms. This plot was generated by classifying the same test dataset $Y^{\ast}$ using the same training dataset $Y$ for the three algorithms. The variability arises from the initial state of the Markov chains. We have used 30 different initial states for the three algorithms and report the median (plain line) and the two quartiles (dashed lines).\label{fig:binary}}
\end{figure}

\subsection{Additional details for the handwritten digit inference (Example \ref{ex1})}

In the handwritten digit example (Example \ref{ex1}), we have used batches of $n=100$ data. Since the initial dataset comprises $2,000$ observations per digit, the summary statistics were defined in a way that any subsample contains $20$ observations from each class. More precisely, we have set for any subsample $Y_U$, $S(Y_U)=0$ if for at least one class $i\in\{1,\ldots,5\}$, $(1/n)\sum_{k\in U}\1_{\{J(k)=i\}}\neq 20$ and
$$
\bS(Y_U)=\left\{\sum_{k\in U} \phi(\theta_{J(k)})\1_{\{J(k)=i\}}\bigg\slash \sum_{k\in U}\1_{\{J(k)=i\}}\right\}_{i=1}^5
$$
otherwise. We set the bandwidth to $\eps=10^{5}$. The proposal kernels of the Informed Sub-Sampling chain and M-H were defined as the same Random Walk kernel. In particular, at each iteration only a bloc of the template parameter of one of the five classes is updated. The variance parameter of the Random Walk is adapted according to the past trajectory of the chain, so as to maintain an acceptance rate of $.25$.

Figure \ref{fig:template2} reports the empirical marginal distribution of one component for each vector $\theta_1,\ldots,\theta_5$ obtained from ISS-MCMC and from M-H. Those distributions are estimated from $50,000$ iterations of both algorithms, in stationary regime. This shows that the distribution of those parameters are in line with each other.

\begin{figure}
\centering
\begin{tabular}{c}
\includegraphics[scale=.7]{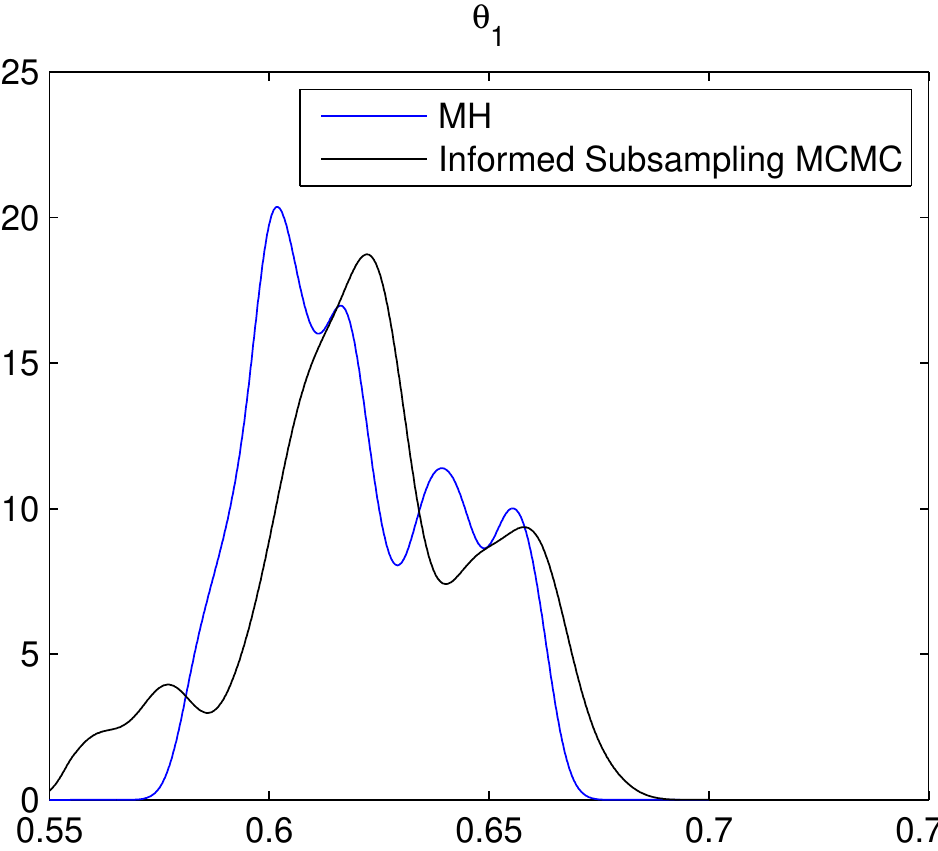}
\includegraphics[scale=.7]{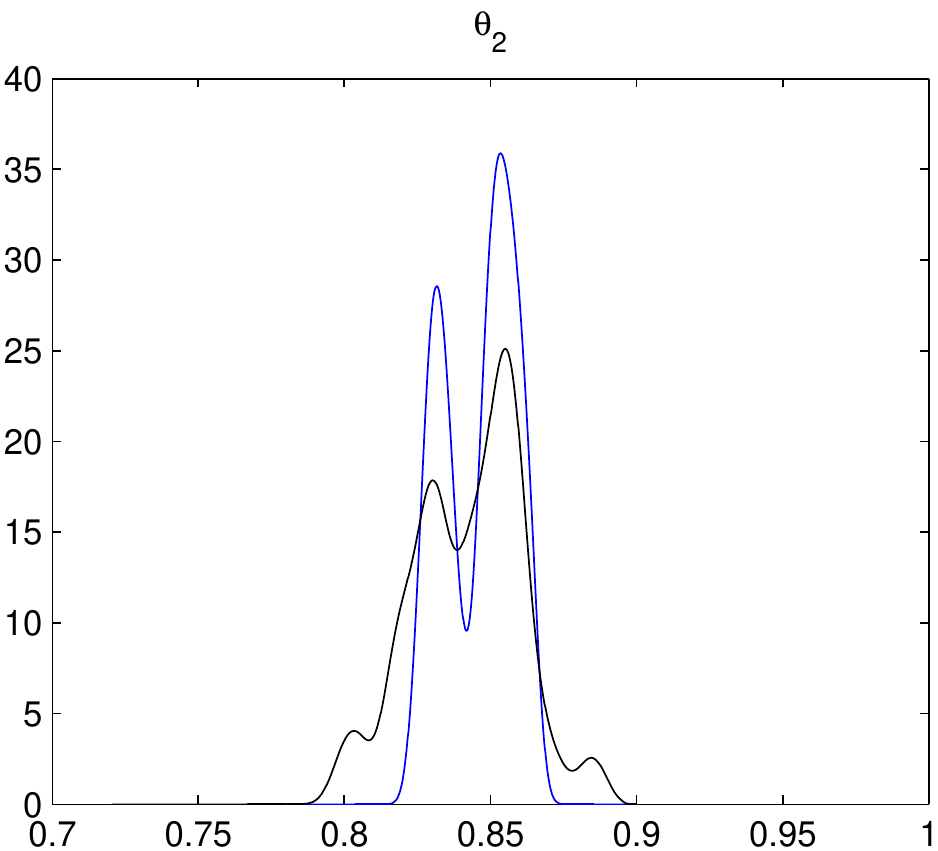}
\end{tabular}

\begin{tabular}{c}
\includegraphics[scale=.7]{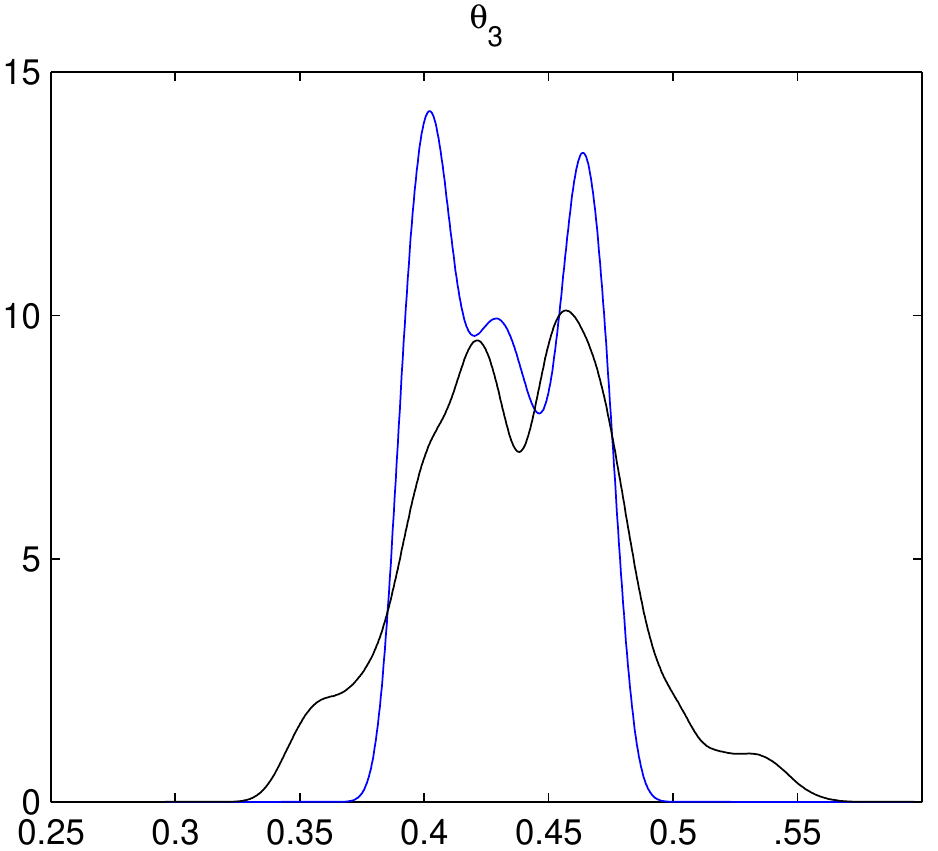}
\includegraphics[scale=.7]{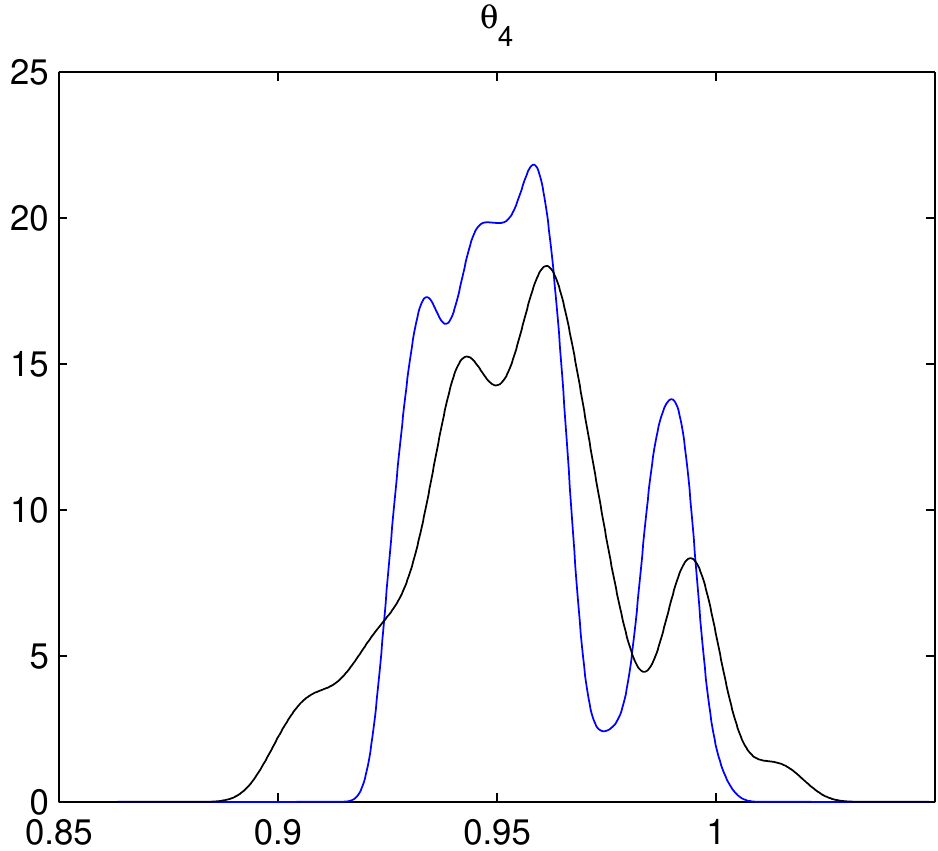}
\end{tabular}

\begin{tabular}{c}
\includegraphics[scale=.7]{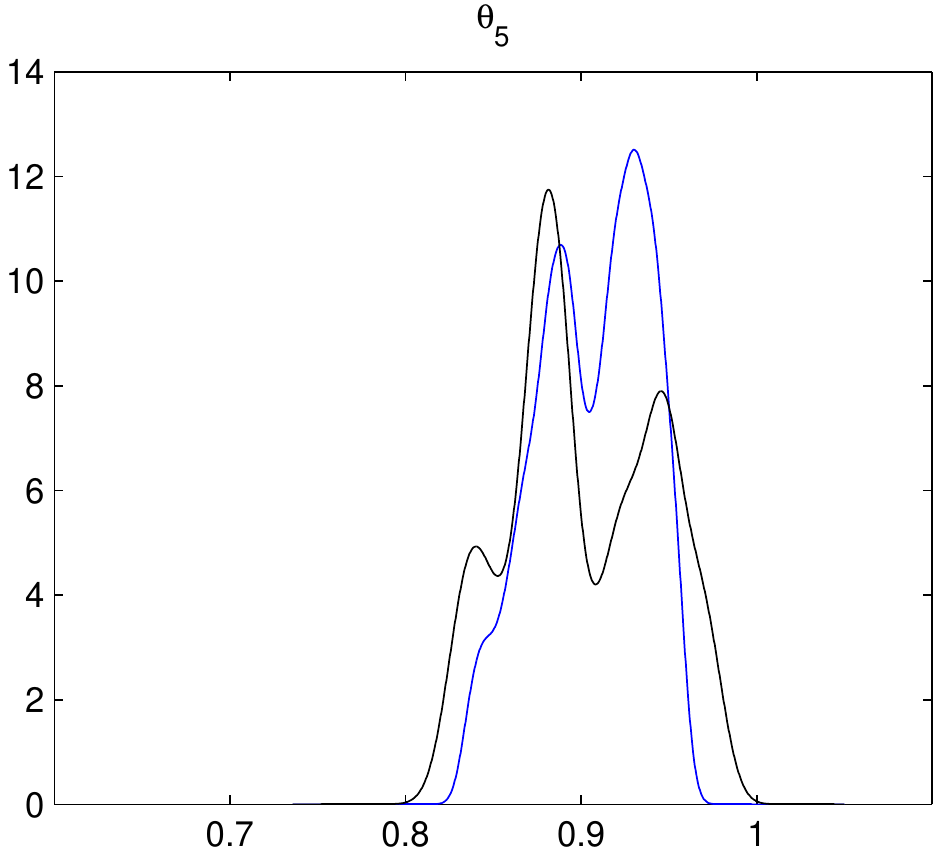}
\end{tabular}

\caption{(Example \ref{ex1}: Handwritten digits) Empirical marginal distribution of one component of the vectors $\theta_1,\ldots,\theta_5$ using the Metropolis-Hastings chain (blue) and the Informed Sub-Sampling chain (black), estimated from $10,000$ transitions at stationary regime.
\label{fig:template2}}

\end{figure} 

\section{Conclusion}
When the available computational budget is limited, inferring a statistical model based on a tall dataset in the Bayesian paradigm using the Metropolis-Hastings (M-H) algorithm is not computationally efficient. Several variants of the M-H algorithm have been proposed to address this computational issue \citep{bardenet2014towards,banterle2015accelerating,korattikara2013austerity,maclaurin2014firefly}. However, (i) they often lose the original simplicity of M-H, (ii) they are only applicable in situations where the data are independent and (iii) the computational cost of one iteration is stochastic which can potentially compromise any computational saving. Informed Sub-Sampling MCMC pushes the approximation one step forward: the computational cost of one iteration is deterministic and is controlled through a user specified parameter, the size of the subsamples.

The aforementioned methods rely on subsampling the whole dataset uniformly at random, at each iteration. Using such subsamples in lieu of the whole dataset in the original M-H chain leads to an algorithm whose 
approximation of $\pi$ comes with no guarantee when the subsamples size is fixed. Our main contribution is to show that assigning a distribution to the subsamples that reflects their fidelity to the whole dataset 
allows one to control the L1 approximation error, even when the subsample size is fixed, which is of computational interest.

The algorithm we propose to achieve this task, Informed Sub-Sampling MCMC, offers an alternative to situations where other scalable Metropolis-Hastings variants cannot be implemented (because the model does not satisfy the assumptions \eg independence of the data, existence of a concentration inequality for the model or a cheap lower bound of the likelihood) or are inefficient (because the method ends up using nearly the whole dataset at each iteration). However, in scaling up Metropolis-Hastings there is no free lunch neither. In particular, our method replaces the uniform subsampling approach by a more sophisticated subsampling mechanism involving summary statistics. In this regard, even though our method is in principle widely applicable, it will only be useful in situations where a cheap summary statistics function satisfying Assumption \ref{assu:4} is available. In particular, we have shown that our method will give meaningful results when the maximum likelihood estimator is cheap to compute, which somewhat correlates with the optimality of summary statistics in ABC established in \cite{fearnhead2012constructing}. We note that it is possible to construct a counterpart to ISS-MCMC in the Sequential Monte Carlo framework, as a sequence of scaled subposteriors naturally arises in our algorithm. We leave the design and evaluation of such a method for future research.

Similarly to the other noisy subsampling methods that approximate MH (\eg \cite{welling2011bayesian}, \cite{korattikara2013austerity}, \cite{bardenet2015markov}, etc.), our theoretical results address the convergence in distribution of the ISS-MCMC chain to the posterior distribution. Questions of interest to practitioners encompass establishing Law of Large Numbers and Central limit theorems for those types of algorithms. Addressing those questions would allow a better understanding of approximate MCMC applied to large dataset contexts, from a practical perspective. In the noisy Monte Carlo literature, the closest result is perhaps Theorem 2.5 of \cite{johndrow2015approximations} which comes in the form of a non-asymptotic error bound in the L2 norm between $\pi f$ and its MCMC estimate. The assumptions are however quite strong which prevent applying it to ISS-MCMC and we will study those questions in a future work.  Finally, recent developments in the understanding of approximate Markov chains have been carried out in \cite{rudolf2015perturbation}. These authors provide explicit convergence bounds in the Wasserstein metric under the Wasserstein ergodicity assumption, a notion which is closely related to geometric ergodicity in V-norm, hence milder than uniform ergodicity. We regard this contribution as a very promising research avenue that unveils new ways of deriving quantitative error bounds for the practical problem of approximate Metropolis-Hastings algorithms that make use of subsets of data, under more realistic assumptions. 

\appendix
\section{Proofs}
\subsection{Proof of Proposition \ref{prop:KL}}
\label{app1}
\begin{proof}
For notational simplicity and without loss of generality, we take here $g$ as the identity on $\Theta$. Let $n<N$ and $U$ be a subset of $\{1,\ldots,N\}$ with cardinal $n$. Consider the power likelihood:
\begin{equation*}
\tf_n(Y_U\,|\,\theta)=f(Y_U\,|\,\theta)^{N/n}=\left\{\prod_{k\in U}f(Y_k\,|\,\theta)\right\}^{N/n}=\frac{\exp\left\{(N/n)\sum_{k\in U}S(Y_k)\right\}\T \theta}{L(\theta)^N}\,,
\end{equation*}
and the corresponding power posterior:
\begin{equation*}
\ttarg_n(\theta\,|\,Y_U)=\frac{\exp\left\{(N/n)\sum_{k\in U}S(Y_k)\right\}\T \theta}{L(\theta)^N}p(\theta)\bigg\slash
\tZ_n(Y_U)\,,
\end{equation*}
where
$$
\tZ_n(Y_U)=\int p(\rmd\theta)\frac{\exp\left\{(N/n)\sum_{k\in U}S(Y_k)\right\}\T \theta}{L(\theta)^N}\,.
$$
For any $\theta$ such that $p(\theta)\neq 0$, write:
\begin{equation}
\log\frac{\targ(\theta\,|\,Y_{1:N})}{\ttarg_n(\theta\,|\,Y_{U})}=\left\{\sum_{k=1}^N S(Y_k)-(N/n)\sum_{k\in U}S(Y_k)\right\}\T\theta+\log \frac{\tZ_n(Y_U)}{Z(Y_{1:N})}\,.
\end{equation}
and the KL divergence between $\pi(\,\cdot\,|\,Y_{1:N})$ and $\ttarg(\,\cdot\,|\,Y_U)$, denoted $\text{KL}_n(U)$, simply writes
\begin{equation}
\label{eq3a}
\text{KL}_n(U)=\Delta_n(U)\T\esp_\pi(\theta)+\log \frac{\tZ_n(Y_U)}{Z(Y_{1:N})}\,,
\end{equation}
where $\Delta_n(U)=\sum_{k=1}^N S(Y_k)-(N/n)\sum_{k\in U}S(Y_k)$. Now, note that
\begin{multline}
\label{eq4a}
\tZ_n(Y_U)=\int p(\rmd\theta)\frac{\exp\left\{(N/n)\sum_{k\in U}S(Y_k)\right\}\T \theta}{L(\theta)^N}
=\int p(\rmd\theta)\frac{\exp\left\{\sum_{k=1}^N S(Y_k)-\Delta_n(U)\right\}\T \theta}{L(\theta)^N}\\
=\int p(\rmd\theta)f(Y_{1:N}\,|\,\theta){\exp\left\{-\Delta_n(U)\T \theta\right\}}=Z(Y_{1:N})\esp_\pi\left\{\exp\left(-\Delta_n(U)\T \theta\right)\right\}\,.
\end{multline}
Plugging \eqref{eq4a} into \eqref{eq3a} yields:
\begin{multline}
\text{KL}_n(U)=\Delta_n(U)\T\esp_\pi(\theta)+\log \esp_\pi\left\{\exp\left(-\Delta_n(U)\T \theta\right)\right\}\,,
\\
=\log \frac{\esp_\pi\left\{\exp\left(-\Delta_n(U)\T \theta\right)\right\}}{\exp(-\Delta_n(U)\T \esp_\pi(\theta))}
=\log\esp_\pi\exp\left[\left\{\esp_\pi(\theta)-\theta\right\}\T \Delta_n(U)\right]\,.
\end{multline}
Finally, Cauchy-Schwartz inequality provides the following upper bound for $\text{KL}_n(U)$:
\begin{equation}
\label{eq5}
\text{KL}_n(U)\leq \log\esp_\pi\exp\left\{\left\|\esp_\pi(\theta)-\theta\right\|\|\Delta_n(U)\|\right\}\,.
\end{equation}
\end{proof}

\subsection{Proof of Proposition \ref{prop:BvM}}
\label{app2}
\begin{proof}
Under some weak assumptions, Bernstein-von Mises theorem states that $\pi(\,\cdot\,|\,Y_{1:N})$ is asymptotically (in $N$) a Gaussian distribution with the maximum likelihood $\thetaMLE$ as mean and $\Gamma_N=I^{-1}(\thetaMLE)/N$ as covariance matrix, where $I(\theta)$ is the Fisher information matrix at $\theta$. Let us denote by $\Phi$ the pdf of $\norm(\thetaMLE,\Gamma_N)$. Under this approximation, $\esp_\pi(\theta)=\thetaMLE$ and from \eqref{eq4a}, we write:
\begin{multline}
\label{eq6a}
\exp\text{KL}_n(U)\approx \int \Phi(\rmd\param)\exp\left[\{\thetaMLE-\theta\}\T\Delta_n(U)\right]=\int \Phi(\thetaMLE-\theta)\exp\{\theta\T \Delta_n(U)\}\rmd\theta\\
=\int \frac{1}{{(2\pi)}^{(d/2)}|\Gamma_N|^{(1/2)}}\exp\left\{-(1/2)\theta\T \Gamma_N^{-1}\theta +\theta\T \Delta_n(U)\right\}\rmd\theta\,,\\
=\frac{1}{{(2\pi)}^{(d/2)}|\Gamma_N|^{(1/2)}}\int \exp\left[-(1/2)\left\{\theta\T \Gamma_N^{-1}\theta -2\theta\T\Gamma_N^{-1}\Gamma_N \Delta_n(U)\right\}\right]\rmd\theta\,,\\
=\exp\{(1/2)\Delta_n(U)\T\Gamma_N \Delta_n(U)\}\,,
\end{multline}
by integration of a multivariate Gaussian density function. Eventually, \eqref{eq6a} yields the following approximation:
\begin{equation}
\label{eq7}
\text{KL}_n(U)\approx\wKL_n(U)=(1/2)\Delta_n(U)\T\Gamma_N \Delta_n(U)\,.
\end{equation}
\end{proof}

\subsection{Proof of Proposition \ref{prop:esp}}
\label{sec:proof:prop3}
\begin{proof}
Let $\Uset_n\supset A_n(\theta):=\left\{U\in\Uset_n,\;g(\theta)\T\Delta_n(U)\leq 0\right\}$ and remark that using Cauchy-Schwartz inequality, we have:
$$
\esp\left\{\frac{f(Y\,|\,\theta)}{f(Y_U\,|\,\theta)^{N/n}}\right\}\leq \nu_{n,\eps}\left\{A_n(\theta)\right\}+\sum_{U\in \Uset_n\backslash A_n(\theta)}\nu_{n,\eps}(U)\exp\{
\|g(\theta)\|\| \Delta_n(U)\|\}\,.
$$

Now, define $\bDelta_n(U):=\bS(Y)-\bS(Y_U)$ where $\bS$ is the normalized summary statistics vector, \ie if $U\in\Uset_n$, $\bS(Y_U)=S(Y_U)/n$. Clearly, when $N\to\infty$, some terms
$$
\exp\{\|g(\theta)\|\| \Delta_n(U)\|\}=\exp\{N\|g(\theta)\|\|\bDelta_n(U)\|\}
$$
will have a large contribution to the sum. More precisely, any mismatch between summary statistics of some subsamples $\{Y_U,\,U\in\Uset_n\backslash A_n(\theta)\}$ with respect to the full dataset will be amplified by the factor $N$, whereby exponentially inflating the upper bound. However, assigning the distribution $\nu_{n,\eps}$ \eqref{eq:nu} to the subsamples $\{Y_U,\,U\in\Uset_n\}$, allows to balance out this effect. Indeed, note that
$$
\esp\left\{\frac{f(Y\,|\,\theta)}{f(Y_U\,|\,\theta)^{N/n}}\right\}
\leq
\nu_{n,\eps}\{A_n(\theta)\}+\sum_{U\in \Uset_n\backslash A_n(\theta)}\exp\{-\eps\|\Delta_n(U)\|^2+\|g(\theta)\|\|\Delta_n(U)\|\}\slash Z(\eps)\,,
$$
where $Z(\eps)=\sum_{U\in\Uset_n}\exp\{-\eps\|\Delta_n(U)\|^2\}$ and we have, for a fixed $n$ and when $N\to\infty$, that
$$
\nu_{n,\eps}(U)\frac{f(Y\,|\,\theta)}{f(Y_U\,|\,\theta)}\to_{\|\Delta_n(U)\|\to\infty} 0\,.
$$
Since $g$ is bounded, then $\esp\left\{{f(Y\,|\,\theta)}\slash{f(Y_U\,|\,\theta)^{N/n}}\right\}$ is bounded too.
\end{proof}

\subsection{Proof of Proposition \ref{prop:geo}}
\label{proof:geo}
We preface the proof Proposition \ref{prop:geo} with five Lemmas, some of which are inspired from \cite{medina2016stability}. For notational simplicity, the dependence on $(n,\eps)$ of any ISS-MCMC related quantities is implicit. For all $(\theta,U)\in\Theta\times\Uset_n$, we denote by $\phi_U(\theta)=f(y_U\,|\,\theta)^{N/n}/f(y\,|\,\theta)$ and recall that $a(\theta,\theta')$ is the (exact) MH acceptance ratio so that $\alpha(\theta,\theta')=1\wedge a(\theta,\theta')$. Unless stated otherwise, $\esp$ is the expectation taken under $\nu_{n,\eps}$. For simplicity, $\tK_{n,\eps}$ is written as $\tK_n$.

\begin{lemma}
\label{lem:reply:1}
For any $(\theta,\theta')\in\Theta^2$, we have
$$
\talpha(\theta,\theta')\leq\alpha(\theta,\theta')\left\{1\vee\esp\frac{\phi_U(\theta')}{\phi_U(\theta)}\right\}\,.
$$
\end{lemma}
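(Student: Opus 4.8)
The plan is to make the marginal acceptance probability of the ISS-MCMC chain explicit and then reduce the claim to an elementary scalar inequality. Under Assumption \textbf{A.}\ref{assu:1} the subsets are \iid from $\nu_{n,\eps}$, so by the kernel decomposition \eqref{eq:8} the marginal chain on $\Theta$ is a Metropolis--Hastings-type chain with proposal $Q$ whose probability of accepting a proposed move $\theta\to\theta'$ is
\[
\talpha(\theta,\theta')=\esp\left\{1\wedge\ta(\theta,\theta'\,|\,U)\right\},\qquad
\ta(\theta,\theta'\,|\,U)=\frac{\ttarg_n(\theta'\,|\,Y_U)\,Q(\theta',\theta)}{\ttarg_n(\theta\,|\,Y_U)\,Q(\theta,\theta')},
\]
where $\esp$ is the expectation under $\nu_{n,\eps}$; this is read off directly from \eqref{eq:8} and the standard form of the MH kernel targeting $\ttarg_n(\,\cdot\,|\,Y_U)$.

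Next I would rewrite $\ta(\theta,\theta'\,|\,U)$ in terms of the exact MH ratio $a(\theta,\theta')$ and of $\phi_U$. Since $\ttarg_n(\,\cdot\,|\,Y_U)\propto p(\,\cdot\,)f(Y_U\,|\,\cdot\,)^{N/n}$ (Eq.~\eqref{eq:subposterior}) and $\pi(\,\cdot\,|\,Y)\propto p(\,\cdot\,)f(Y\,|\,\cdot\,)$, the prior factors and the proposal factors $Q(\theta',\theta)/Q(\theta,\theta')$ cancel in the ratio $\ta(\theta,\theta'\,|\,U)/a(\theta,\theta')$, leaving precisely $\phi_U(\theta')/\phi_U(\theta)$ by the definition $\phi_U(\theta)=f(Y_U\,|\,\theta)^{N/n}/f(Y\,|\,\theta)$. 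Hence $\talpha(\theta,\theta')=\esp\{1\wedge a(\theta,\theta')\,\phi_U(\theta')/\phi_U(\theta)\}$, and it only remains to bound this expectation.

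For the last step, write $t_U:=\phi_U(\theta')/\phi_U(\theta)\geq 0$ and split on the sign of $a(\theta,\theta')-1$. If $a(\theta,\theta')\leq 1$ then $\alpha(\theta,\theta')=a(\theta,\theta')$ and, using $1\wedge a(\theta,\theta')t_U\leq a(\theta,\theta')t_U$ and taking expectations, $\talpha(\theta,\theta')\leq a(\theta,\theta')\esp\{t_U\}=\alpha(\theta,\theta')\esp\{t_U\}$. If $a(\theta,\theta')>1$ then $\alpha(\theta,\theta')=1$ and, using $1\wedge a(\theta,\theta')t_U\leq 1$, $\talpha(\theta,\theta')\leq 1=\alpha(\theta,\theta')$. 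In both cases $\talpha(\theta,\theta')\leq\alpha(\theta,\theta')\{1\vee\esp\{t_U\}\}$, which is the assertion. The only subtlety — and the point where one must resist an over-hasty argument — is that the one-line bound $1\wedge(xy)\leq(1\wedge x)(1\vee y)$ would only deliver the weaker estimate $\alpha(\theta,\theta')\,\esp\{1\vee t_U\}$; keeping the expectation \emph{inside} the maximum, as the case split does, is essential since $\esp\{1\vee t_U\}$ may strictly exceed $1\vee\esp\{t_U\}$.
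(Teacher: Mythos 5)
Your proof is correct, and the setup (reading $\talpha(\theta,\theta')=\esp\{1\wedge\ta(\theta,\theta'\,|\,U)\}$ off the mixture kernel \eqref{eq:8}, then cancelling prior and proposal factors to get $\ta = a\cdot\phi_U(\theta')/\phi_U(\theta)$) matches the paper exactly. Where you diverge is in the final estimate: the paper first applies Jensen's inequality to the concave map $x\mapsto 1\wedge x$, obtaining $\talpha(\theta,\theta')\leq 1\wedge\{a(\theta,\theta')\,\esp(\phi_U(\theta')/\phi_U(\theta))\}$, and then invokes the scalar inequality $1\wedge ab\leq(1\wedge a)b$ valid for $b\geq 1$ with $b=1\vee\esp\{\phi_U(\theta')/\phi_U(\theta)\}$. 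You instead split on whether $a(\theta,\theta')\leq 1$ or $a(\theta,\theta')>1$ and take expectations of the elementary bounds $1\wedge at_U\leq at_U$ and $1\wedge at_U\leq 1$ respectively; this dispenses with Jensen altogether and is, if anything, more elementary. Your closing observation is also on point: the naive pointwise bound $1\wedge(at_U)\leq(1\wedge a)(1\vee t_U)$ followed by expectation only yields $\alpha\,\esp\{1\vee t_U\}$, which is weaker than the stated $\alpha\{1\vee\esp t_U\}$ since $x\mapsto 1\vee x$ is convex — this is exactly the gap that the paper's Jensen step (and your case split) is designed to close. Both arguments are valid and deliver the same bound.
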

\begin{proof}
This follows from a slight adaptation of Lemma 3.3 in \cite{medina2016stability}:
\begin{multline*}
\talpha(\theta,\theta')=\esp\left\{1\wedge\frac{f(Y_U\,|\,\theta')^{N/n}p(\theta')Q(\theta',\theta)}{f(Y_U\,|\,\theta)^{N/n}p(\theta)Q(\theta,\theta')}\frac{f(Y\,|\,\theta)f(Y\,|\,\theta')}{f(Y\,|\,\theta)f(Y\,|\,\theta')}\right\}\\
1\wedge \left\{a(\theta,\theta')\esp\frac{\phi_U(\theta')}{\phi_U(\theta)}\right\}
\leq 1\wedge \left[a(\theta,\theta')\left\{\esp\frac{\phi_U(\theta')}{\phi_U(\theta)}\vee 1\right\}\right]
\leq \alpha(\theta,\theta')\left\{\esp\frac{\phi_U(\theta')}{\phi_U(\theta)}\vee 1\right\}\,,
\end{multline*}
where we have used Jensen's inequality and the fact that the inequality $1\wedge ab\leq (1\wedge a)b$ holds for $a>0$ and $b\geq 1$.
\end{proof}
\begin{lemma}
\label{lem:reply:2}
For any $\theta\in\Theta$ and all $\delta>0$, we have
$$
\trho(\theta)-\rho(\theta)\leq\delta+2\sup_{\theta\in\Theta}\proba\left\{\left|\phi_U(\theta)-1\right|\geq \frac{\delta}{2}\right\}\,.
$$
\end{lemma}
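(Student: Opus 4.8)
The plan is to express both $\rho$ and $\trho$ (the rejection probabilities of the exact kernel $K$ and of $\tK_{n,\eps}$) in terms of acceptance probabilities and to bound their difference pointwise in the proposed state. Since $\rho(\theta)=1-\int Q(\theta,\rmd\theta')\alpha(\theta,\theta')$ and $\trho(\theta)=1-\int Q(\theta,\rmd\theta')\talpha(\theta,\theta')$, with $\alpha(\theta,\theta')=1\wedge a(\theta,\theta')$ and, by definition of $\tK_{n,\eps}$ in \eqref{eq:8}, $\talpha(\theta,\theta')=\esp\{1\wedge a(\theta,\theta')\phi_U(\theta')/\phi_U(\theta)\}$, one has $\trho(\theta)-\rho(\theta)=\int Q(\theta,\rmd\theta')\{\alpha(\theta,\theta')-\talpha(\theta,\theta')\}$. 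Because $Q(\theta,\cdot)$ is a probability measure, it is enough to bound $\alpha(\theta,\theta')-\talpha(\theta,\theta')=\esp\big[(1\wedge a(\theta,\theta'))-(1\wedge a(\theta,\theta')\phi_U(\theta')/\phi_U(\theta))\big]$ uniformly over $\theta'$.

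The first step I would carry out is an elementary inequality: for every $a>0$ and $r>0$, $(1\wedge a)-(1\wedge ar)\leq (1-r)^{+}$. For $r\geq 1$ the right-hand side is $0$ and $1\wedge ar\geq 1\wedge a$, so the inequality holds; for $r<1$ one checks $1\wedge ar\geq r(1\wedge a)$ (treating $a\leq 1$, where equality holds, and $a>1$ separately), hence $(1\wedge a)-(1\wedge ar)\leq(1\wedge a)(1-r)\leq 1-r$. Applying this with $r=\phi_U(\theta')/\phi_U(\theta)>0$ inside the expectation gives $\alpha(\theta,\theta')-\talpha(\theta,\theta')\leq\esp\big[\big(1-\phi_U(\theta')/\phi_U(\theta)\big)^{+}\big]$.

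Next I would split this expectation according to the event $E:=\{|\phi_U(\theta)-1|<\delta/2\}\cap\{|\phi_U(\theta')-1|<\delta/2\}$, assuming $\delta<2$ (the case $\delta\geq 2$ being trivial since $\trho(\theta)-\rho(\theta)\leq 1\leq\delta$). On $E$ one has $\phi_U(\theta')/\phi_U(\theta)>(1-\delta/2)/(1+\delta/2)$, so the integrand is at most $\delta/(1+\delta/2)<\delta$; on $E^{c}$ the integrand is at most $1$ because $\phi_U>0$. A union bound yields $\proba(E^{c})\leq\proba\{|\phi_U(\theta)-1|\geq\delta/2\}+\proba\{|\phi_U(\theta')-1|\geq\delta/2\}\leq 2\sup_{\theta\in\Theta}\proba\{|\phi_U(\theta)-1|\geq\delta/2\}$, whence $\alpha(\theta,\theta')-\talpha(\theta,\theta')\leq\delta+2\sup_{\theta\in\Theta}\proba\{|\phi_U(\theta)-1|\geq\delta/2\}$ uniformly in $(\theta,\theta')$. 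Integrating this against $Q(\theta,\cdot)$ gives the claimed bound.

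The hard part is the inequality $(1\wedge a)-(1\wedge ar)\leq(1-r)^{+}$: the naive Lipschitz estimate $|1\wedge a-1\wedge ar|\leq a|1-r|$ is useless here because $a=a(\theta,\theta')$ is unbounded over $\Theta^{2}$, so the structure of the truncation at $1$ (concretely $1\wedge ar\geq r(1\wedge a)$ for $r\leq1$) must be used to eliminate the $a$ factor. The remaining ingredients --- rewriting the rejection-probability gap, the event decomposition, and the union bound --- are routine.
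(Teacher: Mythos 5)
Your proof is correct and follows essentially the same route as the paper's, which simply defers to Lemmas 3.1--3.2 of Medina-Aguayo et al.\ (2016): isolate the event that both $\phi_U(\theta)$ and $\phi_U(\theta')$ lie within $\delta/2$ of $1$, bound the acceptance-probability gap by $\delta$ there, and control the complement by a union bound (which, as the paper stresses, needs no independence of $\phi_U(\theta)$ and $\phi_U(\theta')$). Your pointwise inequality $(1\wedge a)-(1\wedge ar)\le(1-r)^{+}$ packages in one step what the cited argument does via $(1\wedge a)-(1\wedge a(1-\delta))\le\delta$ together with the tail event $\{\phi_U(\theta')/\phi_U(\theta)\le 1-\delta\}$, but the substance is identical.
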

\begin{proof}
The proof is identical to proof of Lemma 3.2 in \cite{medina2016stability} by noting that Lemma 3.1 in the same reference holds for two random variables $\phi_U(\theta)$ and $\phi_{U}(\theta')$ that are not independent, \ie for all $(\theta,\theta')\in\Theta^2$ any $U\in\Uset_n$ and all $\delta\in(0,1)$
$$
\proba\left\{\frac{\phi_U(\theta)}{\phi_U(\theta')}\leq 1-\delta\right\}\leq 2\sup_{\theta\in\Theta}\proba\left\{|\phi_U(\theta)-1|\geq \delta/2\right\}\,.
$$
\end{proof}

\begin{lemma}
\label{lem:reply:3}
Assume that \textbf{A.}\ref{assu:4} holds. Then we have
$$
\sup_{(\theta,\theta')\in\Theta^2}1\vee \esp\bigg\{\frac{\phi_U(\theta)}{\phi_U(\theta')}\bigg\}\leq \esp\left\{e^{2\gamma\|\Delta_n(u)\|}\right\}\,.
$$
\end{lemma}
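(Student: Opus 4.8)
The plan is to extract a deterministic, two-sided bound on $\phi_U(\theta)$ directly from Assumption \textbf{A.}\ref{assu:4} and then integrate it against $\nu_{n,\eps}$; once this bound is in place the rest is a one-line computation. Throughout, $\gamma_n$ denotes the constant of \textbf{A.}\ref{assu:4} (written $\gamma$ in the statement).

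First I would rewrite, for any $(\theta,U)\in\Theta\times\Uset_n$,
$$
\log\phi_U(\theta)=(N/n)\log f(Y_U\,|\,\theta)-\log f(Y\,|\,\theta)\,,
$$
so that $|\log\phi_U(\theta)|=|\log f(Y\,|\,\theta)-(N/n)\log f(Y_U\,|\,\theta)|$. Assumption \textbf{A.}\ref{assu:4} bounds this quantity by $\gamma_n N\|\bS(Y)-\bS(Y_U)\|=\gamma_n\|\Delta_n(U)\|$, using $\|\Delta_n(U)\|=N\|\bS(Y)-\bS(Y_U)\|$ (which follows from $\bar\Delta_n(U)=\bS(Y)-\bS(Y_U)$ and $\Delta_n(U)=N\bar\Delta_n(U)$). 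Hence, uniformly in $\theta\in\Theta$,
$$
e^{-\gamma_n\|\Delta_n(U)\|}\leq\phi_U(\theta)\leq e^{\gamma_n\|\Delta_n(U)\|}\,.
$$

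Next, fix $U\in\Uset_n$ and an arbitrary pair $(\theta,\theta')\in\Theta^2$; combining the upper bound for $\phi_U(\theta)$ with the lower bound for $\phi_U(\theta')$ yields the pointwise-in-$U$ estimate $\phi_U(\theta)/\phi_U(\theta')\leq e^{2\gamma_n\|\Delta_n(U)\|}$. Since this holds for every $U\in\Uset_n$, monotonicity of the expectation gives $\esp\{\phi_U(\theta)/\phi_U(\theta')\}\leq\esp\{e^{2\gamma_n\|\Delta_n(U)\|}\}$ for every $(\theta,\theta')$. Finally, $e^{2\gamma_n\|\Delta_n(U)\|}\geq1$ pointwise (as $\gamma_n\geq0$), hence $\esp\{e^{2\gamma_n\|\Delta_n(U)\|}\}\geq1$, so $1\vee\esp\{\phi_U(\theta)/\phi_U(\theta')\}\leq\esp\{e^{2\gamma_n\|\Delta_n(U)\|}\}$; the right-hand side being independent of $(\theta,\theta')$, taking the supremum over $\Theta^2$ preserves the inequality, which is the claim.

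I do not anticipate a genuine obstacle here: the argument is an essentially immediate consequence of \textbf{A.}\ref{assu:4}. The only two points meriting a line of care are the identification $\gamma_n N\|\bS(Y)-\bS(Y_U)\|=\gamma_n\|\Delta_n(U)\|$ linking the scaled summary-statistic discrepancy appearing in \textbf{A.}\ref{assu:4} with $\|\Delta_n(U)\|$, and the (cost-free) enlargement of the bound to accommodate the $1\vee\,\cdot\,$, which is licit precisely because the exponential term is bounded below by $1$.
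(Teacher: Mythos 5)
Your proof is correct, and it is in fact more direct than the paper's. The paper first applies the Cauchy--Schwarz inequality to write $\esp\{\phi_U(\theta')/\phi_U(\theta)\}\leq[\esp\{(\phi_U(\theta'))^2\}]^{1/2}[\esp\{(1/\phi_U(\theta))^2\}]^{1/2}$, and then bounds each second moment separately by splitting $\Uset_n$ into the event $\event_\theta=\{U:\,f(Y\,|\,\theta)\leq f(Y_U\,|\,\theta)^{N/n}\}$ and its complement before invoking \textbf{A.}\ref{assu:4} on each piece. You instead observe that \textbf{A.}\ref{assu:4} bounds the \emph{absolute value} of $\log\phi_U(\theta)$, hence yields the two-sided deterministic bound $e^{-\gamma_n\|\Delta_n(U)\|}\leq\phi_U(\theta)\leq e^{\gamma_n\|\Delta_n(U)\|}$ uniformly in $\theta$ for each fixed $U$; the ratio bound $\phi_U(\theta)/\phi_U(\theta')\leq e^{2\gamma_n\|\Delta_n(U)\|}$ then holds pointwise in $U$ and the conclusion follows by monotonicity of the expectation. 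This avoids both Cauchy--Schwarz and the event splitting (the latter is essentially redundant in the paper's own argument, since the absolute-value form of \textbf{A.}\ref{assu:4} covers both cases at once). Your two points of care are both handled correctly: the identification $\gamma_n N\|\bS(Y)-\bS(Y_U)\|=\gamma_n\|\Delta_n(U)\|$ is exactly how the paper itself uses the assumption, and the absorption of the $1\vee\,\cdot\,$ via $\esp\{e^{2\gamma_n\|\Delta_n(U)\|}\}\geq1$ matches the paper's closing remark.
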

\begin{proof}
Using Cauchy-Schwartz inequality, we write that for all $(\theta,\theta')\in\Theta^2$,
\begin{multline}
\label{eq:2}
\esp\left\{\frac{\phi_U(\theta')}{\phi_U(\theta)}\right\}=\esp\left\{\frac{f(Y_U\,|\,\theta')^{N/n}}{f(Y\,|\,\theta')}\frac{f(Y\,|\,\theta)}{f(Y_U\,|\,\theta)^{N/n}}\right\}\\
\leq \left[\esp\left\{\frac{f(Y_U\,|\,\theta')^{N/n}}{f(Y\,|\,\theta')}\right\}^2\right]^{1/2}
\left[\esp\left\{\frac{f(Y\,|\,\theta)}{f(Y_U\,|\,\theta)^{N/n}}\right\}^2\right]^{1/2}\,.
\end{multline}
Now for all $\theta\in\Theta$, we define the event $\event_\theta:=\{U\in\Uset_n\,,\;f(Y\,|\,\theta)\leq f(Y_U\,|\,\theta)^{N/n}\}$ so that
$$
\esp\left\{\frac{f(Y_U\,|\,\theta)^{N/n}}{f(Y\,|\,\theta)}\right\}^2=\esp\left\{\frac{f(Y_U\,|\,\theta)^{N/n}}{f(Y\,|\,\theta)}\1_{\event_\theta}(U)\right\}^2+
\esp\left\{\frac{f(Y_U\,|\,\theta)^{N/n}}{f(Y\,|\,\theta)}\1_{\overline{\event_\theta}}(U)\right\}^2
$$
and we note that for all $(\theta,U)\in\Theta\times\Uset_n$, Eq. \eqref{eq:assu4} writes
$$
\left\{\frac{f(Y_U\,|\,\theta)^{N/n}}{f(Y\,|\,\theta)}\right\}^2\1_{\event_\theta}(U)\leq e^{2\gamma\|\Delta_n(U)\|}\1_{\event_\theta}(U)\,,
$$
but also
$$
\left\{\frac{f(Y_U\,|\,\theta)^{N/n}}{f(Y\,|\,\theta)}\right\}^2\1_{\overline{\event_\theta}}(U)\leq e^{2\gamma\|\Delta_n(U)\|}\1_{\overline{\event_\theta}}(U)\,,
$$
so that
\begin{multline*}
\esp\left\{\frac{f(Y_U\,|\,\theta)^{N/n}}{f(Y\,|\,\theta)}\1_{\event_\theta}(U)\right\}^2+
\esp\left\{\frac{f(Y_U\,|\,\theta)^{N/n}}{f(Y\,|\,\theta)}\1_{\overline{\event_\theta}}(U)\right\}^2\\
\leq
\esp\left\{e^{2\gamma\|\Delta_n(U)\|}\1_{\event_\theta}(U)\right\}+
\esp\left\{e^{2\gamma\|\Delta_n(U)\|}\1_{\overline{\event_\theta}}(U)\right\}
=\esp\left\{ e^{2\gamma\|\Delta_n(U)\|}\right\}\,.
\end{multline*}
A similar argument gives the same upper bound for $\esp\left\{{f(Y\,|\,\theta)}\slash{f(Y_U\,|\,\theta)^{N/n}}\right\}^2$ so that Eq. \eqref{eq:2} yields
$$
\esp\left\{\frac{\phi_U(\theta')}{\phi_U(\theta)}\right\}\leq \esp\left\{ e^{2\gamma\|\Delta_n(U)\|}\right\}\,.
$$
The proof is completed by noting that for three numbers $a,b$ and $c$,  $c>b\Rightarrow a\vee b\leq a\vee c$ and $\gamma\|\Delta_n(U)\|>0$.
\end{proof}

\begin{lemma}
Assume that \textbf{A.}\ref{assu:4} holds. Then we have for all $\theta\in\Theta$ and $\delta>0$
$$
\proba\left\{\left|\phi_U(\theta)-1\right|\geq \delta/2\right\}\leq \frac{2\gamma}{\log(1+\delta/2)}\esp\{\|\Delta_n(U)\|\}\,.
$$
\end{lemma}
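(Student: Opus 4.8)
The plan is to transfer the event on $|\phi_U(\theta)-1|$ to an event on $|\log\phi_U(\theta)|$, which Assumption \textbf{A.}\ref{assu:4} bounds deterministically, and then conclude with Markov's inequality under $\nu_{n,\eps}$.

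First I would restate \textbf{A.}\ref{assu:4} in the notation of this proof. Since $\log\phi_U(\theta)=(N/n)\log f(Y_U\,|\,\theta)-\log f(Y\,|\,\theta)$ and $\|\bS(Y)-\bS(Y_U)\|=\|\Delta_n(U)\|/N$, inequality \eqref{eq:assu4} reads exactly $|\log\phi_U(\theta)|\le \gamma\|\Delta_n(U)\|$ for every $(\theta,U)\in\Theta\times\Uset_n$; in particular $\esp\{|\log\phi_U(\theta)|\}\le\gamma\,\esp\{\|\Delta_n(U)\|\}$, which is finite because $\Uset_n$ is finite.

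Next I would write $\proba\{|\phi_U(\theta)-1|\ge\delta/2\}\le\proba\{\phi_U(\theta)\ge 1+\delta/2\}+\proba\{\phi_U(\theta)\le 1-\delta/2\}$ and show that each of the two one-sided events is contained in $\{|\log\phi_U(\theta)|\ge\log(1+\delta/2)\}$. On the first event $\log\phi_U(\theta)\ge\log(1+\delta/2)>0$, so the inclusion is immediate. On the second event, when $\delta<2$ one has $\log\phi_U(\theta)\le\log(1-\delta/2)<0$, and the elementary inequality $-\log(1-x)\ge\log(1+x)$ for $x\in(0,1)$ (which follows from $(1-x)(1+x)\le 1$) gives $|\log\phi_U(\theta)|\ge\log(1+\delta/2)$; when $\delta\ge 2$ the second event is empty since $\phi_U(\theta)>0$.

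Finally, applying Markov's inequality to the nonnegative variable $|\log\phi_U(\theta)|$ bounds each one-sided term by $\esp\{|\log\phi_U(\theta)|\}/\log(1+\delta/2)\le \gamma\,\esp\{\|\Delta_n(U)\|\}/\log(1+\delta/2)$, and summing the two contributions produces the factor $2$ and yields the claimed bound. I do not anticipate any genuine difficulty here; the only mild subtlety is the logarithmic comparison used to threshold both tails at the common level $\log(1+\delta/2)$, together with the separate (trivial) treatment of $\delta\ge 2$.
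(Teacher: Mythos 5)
Your proof is correct and follows essentially the same route as the paper's: both split $\{|\phi_U(\theta)-1|\geq\delta/2\}$ into the two tails, use the inequality $-\log(1-x)\geq\log(1+x)$ to bring both tails to the common threshold $\log(1+\delta/2)$, invoke Assumption \textbf{A.}\ref{assu:4} to bound $|\log\phi_U(\theta)|$ by $\gamma\|\Delta_n(U)\|$, and finish with Markov's inequality. Your explicit handling of the case $\delta\geq 2$ is a minor point the paper leaves implicit, but the argument is otherwise identical.
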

\begin{proof}
With the same notations as in proof of Lemma \ref{lem:reply:3} and roughly with the same reasoning we have for all $\theta\in\Theta$ and all $\delta>0$
\begin{multline*}
\proba\left\{\left|\phi_U(\theta)-1\right|\geq \delta/2\right\}=\proba\left\{\left|\phi_U(\theta)-1\right|\geq \delta/2\cap \event_\theta\right\}+
\proba\left\{\left|\phi_U(\theta)-1\right|\geq \delta/2\cap \overline{\event_\theta}\right\}\\
=
\proba\left\{\frac{f(Y_U\,|\,\theta)^{N/n}}{f(Y\,|\,\theta)}\geq 1+\delta/2\cap \event_\theta\right\}+
\proba\left\{\frac{f(Y_U\,|\,\theta)^{N/n}}{f(Y\,|\,\theta)}\leq 1-\delta/2\cap \overline{\event_\theta}\right\}\\
\leq
\proba\left\{e^{\gamma\|\Delta_n(U)\|}\geq 1+\delta/2\cap \event_\theta\right\}+
\proba\left\{e^{-\gamma\|\Delta_n(U)\|}\leq 1-\delta/2\cap \overline{\event_\theta}\right\}\\
\leq
\proba\left\{\gamma\|\Delta_n(U)\|\geq \log(1+\delta/2)\right\}+
\proba\left\{\gamma\|\Delta_n(U)\|\geq -\log(1-\delta/2)\right\}\,,
\end{multline*}
where the first inequality follows by inclusion (on $\event_\theta$) of
$$
\left\{\frac{f(Y_U\,|\,\theta)^{N/n}}{f(Y\,|\,\theta)}\geq 1+\delta/2\right\}\subset \left\{e^{\gamma\|\Delta_n(U)\|}\geq 1+\delta/2\right\}
$$
and similarly for the second term. Now, note that for all $x>0$, $\log(1+x)<-\log(1-x)$ so that
$$
\proba\left\{\left|\phi_U(\theta)-1\right|\geq \delta/2\right\}\leq 2\proba\left\{\gamma\|\Delta_n(U)\|\geq \log(1+\delta/2)\right\}\leq
\frac{2\gamma}{\log(1+\delta/2)}\esp\left\{\|\Delta_n(U)\|\right\}\,,
$$
where the last inequality follows from Markov inequality.
\end{proof}
We study the limiting case where $N$ is fixed and $n\to N$.
\begin{lemma}
\label{lem:reply:4}
Assume $N$ is fixed and let $n\to N$. Then,
$$
\esp\{\|\Delta_n(U)\|\}\to 0\quad\text{and}\quad
\esp\left\{\exp{2\gamma\|\Delta_n(U)\|}\right\}\to 1\,.
$$
\end{lemma}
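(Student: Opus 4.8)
The plan is to derive a deterministic bound on $\|\Delta_n(U)\|$ that is uniform over $U\in\Uset_n$ and vanishes as $n\to N$; both limits then follow at once, because $\esp$ is the expectation under the probability measure $\nu_{n,\eps}$ on the finite set $\Uset_n$, hence a convex combination that is bounded above by the supremum over $U\in\Uset_n$ of the (nonnegative) integrand.

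First I would rewrite $\Delta_n(U)$ so as to bring in the complement $U^{c}:=\{1,\dots,N\}\setminus U$, which has cardinality $N-n$. Writing $S_U:=\sum_{k\in U}S(Y_k)$ and using $\sum_{k=1}^{N}S(Y_k)=S_U+\sum_{k\in U^{c}}S(Y_k)$, one gets from \eqref{eq:Delta}
\[
\Delta_n(U)=\sum_{k=1}^{N}S(Y_k)-\frac{N}{n}S_U=\sum_{k\in U^{c}}S(Y_k)-\frac{N-n}{n}\,S_U\,.
\]
Since the dataset $Y_{1:N}$ is fixed and $S$ is $\rset^{s}$-valued, $M:=\max_{1\le k\le N}\|S(Y_k)\|<\infty$, and the triangle inequality yields, uniformly in $U\in\Uset_n$,
\[
\|\Delta_n(U)\|\le\sum_{k\in U^{c}}\|S(Y_k)\|+\frac{N-n}{n}\sum_{k\in U}\|S(Y_k)\|\le (N-n)M+\frac{N-n}{n}\,nM=2(N-n)M\,.
\]

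Second, I would conclude. Since $n,N\in\nset$ with $n\le N$, letting $n\to N$ means $N-n$ decreases to $0$, so the bound above tends to $0$. Hence
\[
\esp\{\|\Delta_n(U)\|\}=\sum_{U\in\Uset_n}\nu_{n,\eps}(U)\,\|\Delta_n(U)\|\le 2(N-n)M\longrightarrow 0\,,
\]
and, since $x\mapsto e^{2\gamma x}$ is nondecreasing and $\gamma=\gamma_n$ is finite on the finite range $n\le N$ — indeed $\gamma_n\to 0$ as $n\to N$, as noted after Assumption \textbf{A.}\ref{assu:4} —
\[
1\le\esp\{\exp(2\gamma\|\Delta_n(U)\|)\}\le\exp\{4\gamma_n(N-n)M\}\longrightarrow 1\,.
\]
Equivalently, this is the degenerate endpoint of the bound: at $n=N$ the set $\Uset_N$ is the singleton $\{\{1,\dots,N\}\}$ and $\Delta_N(\{1,\dots,N\})=0$, so both expectations already equal their limiting values.

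I do not expect a genuine obstacle here; the only points requiring a word of care are the finiteness of $M$ (immediate because the data are fixed and $S$ is $\rset^s$-valued) and keeping $\gamma_n$ under control as $n\to N$, for which invoking the remark $\gamma_n\to 0$ following Assumption \textbf{A.}\ref{assu:4} is more than enough.
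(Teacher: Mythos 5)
Your proof is correct, and it takes a genuinely different route from the paper's. The paper disposes of this lemma in one line by asserting that, as $n\to N$, the measure $\nu_{n,\eps}$ "converges to the Dirac on $U^\dag=\{1,\ldots,N\}$", so that both expectations converge to the values of the integrands at $U^\dag$, where $\Delta_N(U^\dag)=0$. That argument is really only the degenerate endpoint observation you make in your last sentence (for $n<N$ the support $\Uset_n$ does not even contain $U^\dag$, so "convergence to the Dirac at $U^\dag$" has to be read loosely). What you do instead is produce the deterministic, uniform-in-$U$ bound $\|\Delta_n(U)\|\le 2(N-n)M$ via the complement decomposition $\Delta_n(U)=\sum_{k\in U^c}S(Y_k)-\frac{N-n}{n}S_U$, and then dominate both expectations by the supremum of the integrand. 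This buys you a quantitative rate ($O(N-n)$) valid for every $n<N$, not just a statement about the limit point, and it sidesteps any discussion of how $\nu_{n,\eps}$ behaves; the paper's version is shorter but rests on a convergence-of-measures claim that is not literally meaningful for $n<N$. Your handling of the constant $\gamma=\gamma_n$ (boundedness suffices, and the remark after Assumption \textbf{A.}\ref{assu:4} even gives $\gamma_n\to 0$) closes the only remaining gap. Both arguments reach the same conclusion; yours is the more self-contained of the two.
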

\begin{proof}
It follows from the fact that when $n\to N$, $\nu_{n,\eps}$ converges to the dirac on $U^\dag=\{1,\ldots,N\}$ and therefore,
$$
\esp\{\|\Delta_n(U)\|\}\to \|\Delta\bS(U^\dag)\|=0\quad\text{and}\quad
\esp\left\{\exp{2\gamma\|\Delta_n(U)\|}\right\}\to \exp{2\gamma\|\Delta\bar{S}(U^\dag)\|}=1\,.
$$
\end{proof}

We can now prove Proposition \ref{prop:geo}:
\paragraph{\textbf{Proposition}}
\textit{Assume that \textbf{A.}\ref{assu:1} and \textbf{A.}\ref{assu:4} hold. If the marginal MH chain $K$ is geometrically ergodic, \ie \textbf{A.}\ref{assu:2} holds, then there exists an $n_0\leq N$ such that for all $n>n_0$, $\tK_n$ is also geometrically ergodic.}

\begin{proof}
By \cite[Theorems 14.0.1 \& 15.0.1]{meyn2009markov}, there exists a function $V:\Xset\to[1,\infty[$, two constants $\lambda\in(0,1)$ and $b<\infty$ and a small set $S\subset \Xset$ such that $K$ satisfies a drift condition:
\begin{equation}
\label{eq:7}
KV\leq \lambda V+b\1_S\,.
\end{equation}

We now show how to use the previous Lemmas to establish the geometric ergodicity of $\tK_n$ for some $n$ sufficiently large. This reasoning is very similar to that presented in \cite[Theorem 3.2]{medina2016stability}.
\begin{multline}
\label{eq:4}
(\tK_n-K)V(\theta)=\int Q(\theta,\rmd\theta')\left(\talpha(\theta,\theta')-\alpha(\theta,\theta')\right)V(\theta')+\left(\trho(\theta)-\rho(\theta)\right)V(\theta)\\
\leq \left(\esp\left\{e^{2\gamma\|\Delta_n(u)\|}\right\}-1\right)\int Q(\theta,\rmd\theta')\alpha(\theta,\theta')V(\theta')+\\
\left(\delta+\frac{2\gamma}{\log(1+\delta/2)}\esp\left\{\|\Delta_n(U)\|\right\}\right)V(\theta)\\
\leq \left(\esp\left\{e^{2\gamma\|\Delta_n(u)\|}\right\}-1\right)\left(\lambda V(\theta)+b\1_S(\theta)-\rho(\theta)V(\theta)\right)+\\
\left(\delta+\frac{2\gamma}{\log(1+\delta/2)}\esp\left\{\|\Delta_n(U)\|\right\}\right)V(\theta)\\
\leq \esp\left\{e^{2\gamma\|\Delta_n(u)\|}\right\}b\1_S(\theta)+\\
\left(\lambda
\left(\esp\left\{e^{2\gamma\|\Delta_n(u)\|}\right\}-1\right)+\delta+\frac{2\gamma}{\log(1+\delta/2)}\esp\left\{\|\Delta_n(U)\|\right\}\right)V(\theta)
\end{multline}
Combining Eq. \eqref{eq:7} with Eq. \eqref{eq:4}, we have that
\begin{multline}
\tK_n V(\theta)\leq
\left\{1+\esp e^{2\gamma\|\Delta_n(u)\|}\right\}b\1_S(\theta)+\\
\left(\lambda
\esp\left\{e^{2\gamma\|\Delta_n(u)\|}\right\}+\delta+\frac{2\gamma}{\log(1+\delta/2)}\esp\left\{\|\Delta_n(U)\|\right\}\right)V(\theta)
\end{multline}
Fix $\eps>0$. From Lemma \ref{lem:reply:4}, there exists $(n_1,n_2)\in\nset^2$ such that
\begin{multline}
\label{eq:5}
n\geq n_1\Rightarrow \esp\exp\{2\gamma\|\Delta_n(U)\|\}-1\leq\eps\,, \\
n\geq n_2\Rightarrow \esp\|\Delta_n(U)\|\leq\eps\log(1+\eps/4)/4\gamma\,.
\end{multline}
Combining Eqs. \eqref{eq:4} and \eqref{eq:5} yields that for all $n\geq n_0:=\max(n_1,n_2)$, we have
\begin{equation}
\label{eq:6}
\tK_n V(\theta)\leq (\eps+1)b\1_S(\theta)+V(\theta)\left({\lambda(\eps+1)}+\delta+\frac{\eps\log(1+\eps/4)}{2\log(1+\delta/2)}\right)\,.
\end{equation}
Taking $\delta=\eps/2$ in Eq. \eqref{eq:6} gives
$$
\tK_n V(\theta)\leq (\eps+1)b\1_S(\theta)+V(\theta)\left\{\eps\left(\lambda+1\right)+\lambda\right\}\,.
$$
To show that $\tK_n$ (for $n>n_0$) satisfies a geometric drift condition, it is sufficient to take $\eps<(1-\lambda)/(1+\lambda)$ and to check that $S$ is also small for $\tK_n$. This is demonstrated exactly as in the proof of \citet[Theorem 3.2]{medina2016stability}.
\end{proof}

\subsection{Proof of Proposition \ref{prop:bound}}
\label{app4}
This proof borrows ideas from the perturbation analysis of uniformly ergodic Markov chains. First, note that by straightforward algebra we have that
\begin{multline}
\label{eq:app4_1}
\|K(\theta,\,\cdot\,)-\tK(\theta,\,\cdot\,)\|\leq \int Q(\theta,\rmd\theta')\esp\left|\alpha(\theta,\theta')-\tilde{\alpha}(\theta,\theta'\,|\,U)\right|\,,\\
\leq \int Q(\theta,\rmd\theta')\esp\left|a(\theta,\theta')-\ta(\theta,\theta'\,|\,U)\right|\,,\\
=\int Q(\theta,\rmd\theta')a(\theta,\theta')\esp\left|1-\frac{\phi_U(\theta')}{\phi_U(\theta)}\right|\,,\\
=\esp\left\{ \int Q(\theta,\rmd\theta')a(\theta,\theta')\left|\phi_U(\theta)-{\phi_U(\theta')}\right|\frac{f(Y\,|\,\theta)}{f(Y_U\,|\,\theta)^{N/n}}\right\}\,,\\
\leq \esp\left\{ \sup_{\theta\in\Theta}\frac{f(Y\,|\,\theta)}{f(Y_U\,|\,\theta)^{N/n}}\int Q(\theta,\rmd\theta')a(\theta,\theta')\left|\phi_U(\theta)-{\phi_U(\theta')}\right|\right\}\,,\\
\leq \esp\left\{ \sup_{\theta\in\Theta}\frac{f(Y\,|\,\theta)}{f(Y_U\,|\,\theta)^{N/n}}\right\}
\sup_{U\in\Uset_n}\int Q(\theta,\rmd\theta')a(\theta,\theta')\left|\phi_U(\theta)-{\phi_U(\theta')}\right|\,.
\end{multline}
Now, under \textbf{A}.\ref{assu:3} and using \citet[Corollary 3.1]{mitrophanov2005sensitivity} we have that for any starting point $\theta_0\in\Theta$,
\begin{equation}
\label{eq:app4_2}
\|K^i(\theta_0,\,\cdot\,)-\tK^i(\theta_0,\,\cdot\,)\|\leq \left(\lambda+\frac{C\rho^\lambda}{1-\rho}\right)\sup_{\theta\in\Theta}\|K(\theta,\,\cdot\,)-\tK(\theta,\,\cdot\,)\|\,,
\end{equation}
where $\lambda=\lceil\log(1/C)/\log \rho\rceil$. Combining Eqs \eqref{eq:app4_1} and \eqref{eq:app4_2} leads to Eq. \eqref{eq:bound2} with $\kappa=\lambda+{C\rho^\lambda}/{1-\rho}$. Moreover, note that using Eq. \eqref{eq:bound2} we have
\begin{multline*}
\sup_{\theta\in\Theta}\|\pi-\tK^i(\theta,\,\cdot\,)\|\leq \sup_{\theta\in\Theta}\|\pi-K^i(\theta,\,\cdot\,)\|+\sup_{\theta\in\Theta}\|K^i(\theta,\,\cdot\,)-\tK^i(\theta,\,\cdot\,)\|\,,\\
\leq
C\rho^i+\kappa A_n\sup_{(\theta,U)\in\Theta\times \Uset_n}B_n(\theta,U)
\end{multline*}
and taking the limit when $i\to\infty$ leads to Eq. \eqref{eq:bound2_bis}. Finally, for a large enough $n$, we know from Proposition \ref{prop:geo} that the marginal Markov chain $\{\ttheta_i\,,i\in\nset\}$ produced by ISS-MCMC is geometrically ergodic and we denote by $\tpi_n$ its stationary distibution. For such a $n$, we have for any $\theta_0\in\Theta$
\begin{multline*}
\|\pi-\tpi_n\|\leq \|K^i(\theta_0,\,\cdot\,)-\pi\|+\|\tK^i(\theta_0,\,\cdot\,)-\tpi_n\|+\|K^i(\theta_0,\,\cdot\,)-\tK^i(\theta_0,\,\cdot\,)\|\\
\leq \|K^i(\theta_0,\,\cdot\,)-\pi\|+\|\tK^i(\theta_0,\,\cdot\,)-\tpi_n\|+\kappa A_n\sup_{(\theta,U)\in\Theta\times \Uset_n}B_n(\theta,U)
\end{multline*}
and taking the limit as $i\to\infty$ yields Eq. \eqref{eq:bound3}.

\subsection{Extension of Proposition \ref{prop:bound} beyond the time homogeneous case}
\label{app4_bis}

We start with the two following remarks relative to the Informed Sub-Sampling Markov chain.

\begin{rem}
Assume $U_0\sim\nu_{n,\eps}$ and $\ttheta_0\sim\mu$ for some initial distribution $\mu$ on $(\Theta,\vartheta)$. The distribution of $U_i$ given $\ttheta_i$ is for some $u\in\Uset_n$,
\begin{equation*}
\proba(U_i=u\,|\,\ttheta_i)
\propto
\sum_{U_0\in\Uset_n}\int_{\ttheta_0\in\Theta} \nu_{n,\eps}(U_0)\mu(\rmd\ttheta_0)\bar{K}^{i}(\ttheta_0,U_0;\ttheta_i,u)\,,
\end{equation*}
where $\barK(\theta,U;\rmd\theta',U'):=K(\theta,\rmd\theta'\,|\,U)H(U,U')$ and $H$ is the transition kernel of the Markov chain $\{U_i,\,i\in\nset\}$. As a consequence $\proba(U_i\in\,\cdot\,|\,\ttheta)$ depends on $\ttheta$ and $i$.
\end{rem}

\begin{rem}
\label{rem1}
The marginal Markov chain $\{\ttheta_i,\,i\in\nset\}$ produced by ISS-MCMC algorithm is  time inhomogeneous since for all $A\in\Xalg$,
\begin{equation}
\tK(\theta_{i-1},A):=\proba(\ttheta_i\in A\,|\,\ttheta_{i-1})=\sum_{u\in\Uset_n}{K}(\ttheta_{i-1},\rmd\ttheta_i\,|\,U_i)\proba(U_i=u\,|\,\ttheta_i)\,,
\end{equation}
and $\proba(U_i=u\,|\,\ttheta_i)$ depends on $i$ (Remark 1). We thus denote by $\tilde{K}_i$ the marginal transition kernel $\ttheta_{i-1}\to\ttheta_i$. However, we observe that if the random variables  $\{U_i,\,i\in\nset\}$ are \iid with distribution $\nu_{n,\eps}$, $K_i$ becomes time homogeneous as $\proba(U_i=u\,|\,\theta_i)=\nu_{n,\eps}(u)$ for all $i$.
\end{rem}

A consequence of Remark 2 is that \citet[Theorem 3.1]{mitrophanov2005sensitivity} does not hold when Assumption \textbf{A}.\ref{assu:1} is not satisfied. Indeed, $\{\ttheta_i,\,i\in\nset\}$ is not a time homogeneous Markov chain in this case and we first need to generalize the result from Mitrophanov in order to apply it to our context. This is presented in Lemma \ref{lem1_tih}.

\begin{lemma}
\label{lem1_tih}
Let $K$ be the transition kernel of an uniformly ergodic Markov chain that admits $\pi$ as stationary distribution. Let $\tilde{K}_i$ be the $i$-th transition kernel of the ISS-MCMC Markov chain. In particular, let $p_i(\,\cdot\,|\,\theta):=\proba(U_i\in\,\cdot\,|\,\theta)$ be the distribution of the random variable $U_i$, used at iteration $i$ of the noisy Markov chain given $\theta$. We have:
\begin{equation}
\label{eq:noisyThm}
\lim_{i\to\infty}\|\pi-\tpi_i\|\leq\kappa\sup_{\theta\in\Theta}\sup_{i\in\nset}\int\delta_i(\theta,\theta')Q(\theta,\rmd\theta')\,,
\end{equation}
where $\delta_i:\Theta\times\Theta\to\rset^+$ is a function that satisfies
$$
\esp_i\left\{\left|a(\theta,\theta')-\ta(\theta,\theta'\,|\,U)\right|\right\}\leq \delta_i(\theta,\theta')
$$
and the expectation is under $p_i(\,\cdot\,|\,\theta)$.
\end{lemma}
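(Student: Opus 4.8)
The plan is to adapt the telescoping–sum argument of \citet[Theorem 3.1]{mitrophanov2005sensitivity} so that it tolerates a \emph{sequence} of perturbed kernels $\tilde K_1,\tilde K_2,\ldots$ indexed by the iteration, rather than a single fixed one. Write $\mu_0$ for the law of $\ttheta_0$, and note (cf. Remark \ref{rem1}) that, although the marginal process $\{\ttheta_i,\,i\in\nset\}$ need not itself be Markov, conditioning successively on $U_i$ and then on $\ttheta_{i-1}$ yields the exact recursion $\tpi_i=\tpi_{i-1}\tilde K_i$ for the marginal laws, hence $\tpi_i=\mu_0\tilde K_1\cdots\tilde K_i$ with $\tilde K_i(\theta,\,\cdot\,)=\int p_i(\rmd u\,|\,\theta)K(\theta,\,\cdot\,|\,u)$. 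Setting $\mu_j:=\mu_{j-1}\tilde K_j$ and introducing the interpolating measures $\nu_j:=\mu_jK^{i-j}$ for $j=0,\ldots,i$ (so $\nu_0=\mu_0K^i$ and $\nu_i=\tpi_i$), telescoping gives
\begin{equation*}
\tpi_i-\mu_0K^i=\sum_{j=1}^i(\nu_j-\nu_{j-1})=\sum_{j=1}^i\mu_{j-1}(\tilde K_j-K)K^{i-j}\,,
\end{equation*}
where each increment $\mu_{j-1}(\tilde K_j-K)$ is a signed measure of zero total mass since $\mu_{j-1}$, $\mu_{j-1}\tilde K_j$ and $\mu_{j-1}K$ are probability measures.

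The second ingredient is the contraction implied by the uniform ergodicity of $K$ (Assumption \textbf{A.}\ref{assu:3}): I would record that there is a summable envelope $g:\nset\to\rset^+$, with $\sum_{m\geq0}g(m)=:\kappa<\infty$, such that $\|\xi K^m\|\leq g(m)\|\xi\|$ for every zero-mean signed measure $\xi$. This follows by expressing $\xi$ as a positive multiple of a difference of two probability measures, using $\sup_\theta\|K^m(\theta,\,\cdot\,)-\pi\|\leq C\rho^m$ for $m$ large and the trivial bound $\|\xi K^m\|\leq\|\xi\|$ for $m$ small, exactly the bookkeeping behind \citet[Corollary 3.1]{mitrophanov2005sensitivity}. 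Applying this term by term to the telescoped identity, together with the convexity bound $\|\mu_{j-1}(\tilde K_j-K)\|\leq\sup_{\theta\in\Theta}\|\tilde K_j(\theta,\,\cdot\,)-K(\theta,\,\cdot\,)\|$, yields
\begin{equation*}
\|\tpi_i-\mu_0K^i\|\leq\sum_{j=1}^ig(i-j)\sup_{\theta\in\Theta}\|\tilde K_j(\theta,\,\cdot\,)-K(\theta,\,\cdot\,)\|\leq\kappa\,\sup_{i\in\nset}\sup_{\theta\in\Theta}\|\tilde K_i(\theta,\,\cdot\,)-K(\theta,\,\cdot\,)\|\,.
\end{equation*}

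It then remains to control the one-step discrepancy. Since $K$ and $\tilde K_i$ share the proposal $Q$ and differ only through their acceptance functions $\alpha(\theta,\theta')=1\wedge a(\theta,\theta')$ and $\tilde\alpha_i(\theta,\theta')=\esp_i\{1\wedge\ta(\theta,\theta'\,|\,U)\}$ (expectation under $p_i(\,\cdot\,|\,\theta)$), a short computation on the continuous and atomic parts of the two Metropolis--Hastings kernels gives $\|\tilde K_i(\theta,\,\cdot\,)-K(\theta,\,\cdot\,)\|\leq\int Q(\theta,\rmd\theta')|\alpha(\theta,\theta')-\tilde\alpha_i(\theta,\theta')|$, while by Jensen's inequality, the $1$-Lipschitz property of $x\mapsto1\wedge x$, and the defining property of $\delta_i$,
\begin{equation*}
|\alpha(\theta,\theta')-\tilde\alpha_i(\theta,\theta')|\leq\esp_i\{|a(\theta,\theta')-\ta(\theta,\theta'\,|\,U)|\}\leq\delta_i(\theta,\theta')\,.
\end{equation*}
Combining the displays gives $\|\tpi_i-\mu_0K^i\|\leq\kappa\sup_{i\in\nset}\sup_{\theta\in\Theta}\int\delta_i(\theta,\theta')Q(\theta,\rmd\theta')$, and then $\|\pi-\tpi_i\|\leq\|\pi-\mu_0K^i\|+\|\mu_0K^i-\tpi_i\|\leq C\rho^i+\kappa\sup_{i}\sup_{\theta}\int\delta_i(\theta,\theta')Q(\theta,\rmd\theta')$ by \textbf{A.}\ref{assu:3}; letting $i\to\infty$ delivers \eqref{eq:noisyThm}.

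The main obstacle — and the reason \citet[Theorem 3.1]{mitrophanov2005sensitivity} does not apply off the shelf — is exactly the time-inhomogeneity flagged in Remark \ref{rem1}: each $\tilde K_i$ carries its own subset law $p_i(\,\cdot\,|\,\theta)$. The telescoping device neutralises this because the \emph{reference} kernel $K$ is homogeneous and uniformly ergodic, so a single summable envelope $g$ governs every block $K^{i-j}$ and one only needs the one-step perturbations bounded \emph{uniformly in $i$}, which is precisely what the $\sup_{i\in\nset}$ in the statement supplies. Two points deserve care: justifying the marginal recursion $\tpi_i=\tpi_{i-1}\tilde K_i$ despite $\{\ttheta_i\}$ failing to be Markov (it holds by the tower property, conditioning on $U_i$ and then $\ttheta_{i-1}$), and checking that the intermediate laws $\mu_{j-1}$ are genuine probability measures so that the telescoped increments have zero mass and the envelope $g$ may be applied.
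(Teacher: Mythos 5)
Your proof is correct and follows essentially the same route as the paper: the telescoping decomposition of $\tpi_i-\mu_0K^i$ into one-step perturbations propagated through powers of the exact kernel, a summable contraction envelope supplied by uniform ergodicity (this is exactly Mitrophanov's ergodicity-coefficient bookkeeping, giving $\kappa=\lambda+C\rho^\lambda/(1-\rho)$), and the one-step bound $|\alpha-\tilde\alpha_i|\leq\esp_i|a-\ta|\leq\delta_i$ via the $1$-Lipschitz property of $x\mapsto1\wedge x$. The only cosmetic difference is that you make the marginal recursion $\tpi_i=\tpi_{i-1}\tilde K_i$ and the zero-mass property of the increments explicit, whereas the paper imports these steps by reference to \citet{mitrophanov2005sensitivity} and \citet{alquier2014noisy}.
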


\begin{proof}
In addition of the notations of Section \ref{sec:alg}, we define the following quantities for a Markov transition kernel regarded as an operator on $\meas$, the space of signed measures on $(\Theta,\borel(\Theta))$: $\tau(K):=\sup_{\pi\in\meas_{0,1}}\|\pi K\|$ is the ergodicity coefficient of $K$, $\|K\|:=\sup_{\pi\in\meas_{1}}\|\pi K\|$ is the operator norm of $K$ and $\meas_1:=\{\pi\in\meas,\,\|\pi\|=1\}$ and $\meas_{0,1}:=\{\pi\in\meas_1,\,\pi(\Theta)=0\}$.

Remarks 1 and 2 explain why, in general, $\{\ttheta_i,\,i\in\nset\}$ is a time-inhomogeneous Markov chain with transition kernel $\{\tK_i,\,i\in\nset\}$. For each $i\in\nset$, define $\pi_i$ as the distribution of $\theta_i$ produced by the Metropolis-Hastings algorithm (Alg. \ref{alg:mh}) with transition kernel $K$, referred to as the exact kernel hereafter. Our proof is based on the following identity:
\begin{multline}
\label{eq:app3_1}
K^i-\tK_1\tK_2\cdots\tK_i=(K-\tK_1)K^{i-1}+\tK_1(K-\tK_2)K^{i-2}+\tK_1\tK_2(K-\tK_3)K^{i-3}+\cdots\\
+\tK_1\cdots\tK_{i-1}(K-\tK_i)\,,
\end{multline}
for each $i\in\nset$. Equation \eqref{eq:app3_1} will help translating the proof of Theorem 3.1 in \cite{mitrophanov2005sensitivity} to the time-inhomogeneous setting and in particular, we have for each $i\in\nset$:
\begin{equation}
\label{eq:app3_2}
\pi_i-\tpi_i=(\pi_0-\tpi_0)K^i+\sum_{j=0}^{i-1}\tpi_j(K-\tK_{j+1})K^{i-j-1}\,.
\end{equation}
Following the proof of Theorem 3.1 in \cite{mitrophanov2005sensitivity}, we obtain
\begin{multline}
\label{eq:app3_3}
\|\pi_i-\tpi_i\|\leq \|\pi_0-\tpi_0\|\tau(K^i)+\sum_{j=0}^{i-1}\|K-\tK_{i-j}\|\tau(K^{j})\,,\\
\leq
\left\{
\begin{array}{lc}
\|\pi_0-\tpi_0\|+i\sup_{j\leq i}\|K-\tK_j\|&\text{if }i\leq \lambda\\
\|\pi_0-\tpi_0\|C\rho^i+\sup_{j\leq i}\|K-\tK_j\|\left\{\lambda+C\frac{\rho^\lambda-\rho^{i}}{1-\rho}\right\}&\text{else}\\
\end{array}
\right.
\end{multline}
where $\lambda=\left\lceil {\log_\rho(1/C)}\right\rceil$. Without loss of generality, we take $\pi_0=\tpi_0$ and since $\|\pi-\tpi_i\|\leq \|\pi-\pi_i\|+\|\pi_i-\tpi_i\|$ we have for all $i>\lambda$ that
\begin{equation}
\|\pi-\tpi_i\|\leq\left\{\lambda+C\frac{\rho^\lambda-\rho^{i}}{1-\rho}\right\} \sup_{j\leq i}\|K-\tK_j\|\,.
\end{equation}
Taking the limit as $i\to\infty$ leads to
\begin{equation}
\label{eq:app3_5}
\lim_{i\to\infty}\|\pi-\tpi_i\|\leq \left\{\lambda+C\frac{\rho^\lambda}{1-\rho}\right\} \sup_{i\in\nset}\|K-\tK_i\|\,.
\end{equation}

Using a similar derivation than in the proof of Corollary 2.3 in \cite{alquier2014noisy}, we obtain
$$
\|K-\tK_i\|\leq \sup_{\theta\in\Theta}\int Q(\theta,\rmd\theta')\esp_i\left|a(\theta,\theta')-\ta(\theta,\theta'\,|\,U_i)\right|\,,
$$
where the expectation is under $p_i(\,U\,|\,\theta)$ and which combined with \eqref{eq:app3_5} leads to
$$
\lim_{i\to\infty}\|\pi_i-\tpi_i\|\leq \left(\lambda+C\frac{\rho^\lambda}{1-\rho}\right)\sup_{\theta\in\Theta}\sup_{i\in\nset}\esp_i\left|a(\theta,\theta')-\ta(\theta,\theta'\,|\,U_i)\right|
$$
where the expectation is under $Q(\theta,\cdot)\otimes p_i(\,\cdot\,|\,\theta)$. Any upper bound $\delta_i(\theta,\theta')$ of the expectation on the right hand side yields \eqref{eq:noisyThm}.
\end{proof}

By straightforward algebra, we have:
\begin{equation}
\esp_i\left|a(\theta,\theta')-\ta(\theta,\theta'\,|\,U_i)\right|=a(\theta,\theta')\esp_i\left\{\frac{f(Y\,|\,\theta)}{f(Y_U\,|\,\theta)^{N/n}}\left|\phi_U(\theta)-\phi_U(\theta')\right|\right\}
\end{equation}
where we have defined $\phi_U(\theta)=f(Y_U\,|\,\theta)^{N/n}\slash f(Y\,|\,\theta)$. Using Lemma \ref{lem1_tih}, we have that
\begin{multline}
\label{eq:app4}
\lim_{i\to\infty}\|\pi-\tpi_i\|\leq \kappa\sup_{\theta\in\Theta}\sup_{i\in\nset}\esp_i\left\{\sup_{\theta\in\Theta}\frac{f(Y\,|\,\theta)}{f(Y_U\,|\,\theta)^{N/n}}\int Q(\theta,\rmd\theta')a(\theta,\theta')\left|\phi_U(\theta)-\phi_U(\theta')\right|\right\}\,,\\
\leq \kappa\sup_{\theta\in\Theta}\sup_{i\in\nset}\esp_i\left\{\sup_{\theta\in\Theta}\frac{f(Y\,|\,\theta)}{f(Y_U\,|\,\theta)^{N/n}}\right\}\sup_{(\theta,U)\in\Theta\times\Uset_n}\int Q(\theta,\rmd\theta')a(\theta,\theta')\left|\phi_U(\theta)-\phi_U(\theta')\right|\,.
\end{multline}
which is the counterpart of \eqref{eq:bound2_bis} when Assumption \textbf{A}.\ref{assu:1} does not hold. We note that the second supremum in Eq. \eqref{eq:app4} is in fact $B_n$ defined at Eq. \eqref{eq:B_n} and, as such, can be controlled as described in Section \ref{sec:alg:prop}.  However, this is not clearly the case for the first supremum in Eq. \eqref{eq:app4} which differs from $A_n$ defined at Eq. \eqref{eq:d2}:
\begin{equation}
    \label{eq:reply4}
    \tA_n:=\sup_i\sup_\theta\esp_i\{\sup_\theta 1/\phi_U(\theta)\}\neq\esp\{\sup_\theta 1/\phi_U(\theta)\}=A_n\,.
\end{equation}
We now show that, under two additional assumptions (\textbf{A}.\ref{assum_reply} and \textbf{A}.\ref{assum2_reply}), the control based on the summary statistics also applies to the time inhomogeneous case when Assumption \textbf{A}.\ref{assu:1} does not hold.

\begin{assumption}[One-step minorization]
\label{assum_reply}
For all $i\in\nset$ and all $A\in\vartheta$, there exists some $\eta>0$ such that $p_i(A)>\eta\lambda(A)$ where $\lambda$ is the Lebesgue measure.
\end{assumption}

This assumption typically holds if $\Theta$ is compact or if the chain $\{\ttheta_i,U_i\}_i$ admits a minorization condition. Since we assume, in this discussion, that the exact MH Markov chain is uniformly ergodic and as such satisfy a minorization condition, see \eg \citet[Thm 16.2.3]{meyn2009markov} and \cite{hobert2004mixture}. We may study conditions on which $\{\ttheta_i\}_i$ inherits this property and leave this for future work but already note that Assumption \textbf{A}.\ref{assum_reply} is not totally unrealistic.

\begin{assumption}
\label{assum2_reply}
The marginal Markov chain $\{U_i\}_i$ has initial distribution $U_0\sim \nu_{n,\eps}$.
\end{assumption}

Even though this assumption is difficult to meet in practice as $|\Uset_n|$ may be very large, the discussion at the beginning of Section \ref{sec:6_1} indicates an approach to set the distribution of $U_0$ close from $\nu_{n,\eps}$.

Again, while the Assumptions \ref{assum_reply} and \ref{assum2_reply} are perhaps challenging to guarantee, Proposition \ref{prop_reply} aims at giving some level of confidence to the user that the ISS-MCMC method is useful, even when Assumption \textbf{A.}\ref{assu:1} does not hold. In addition, it reinforces the importance of choosing summary statistics that satisfy Assumption \textbf{A.}\ref{assu:4}.

\begin{prop}
\label{prop_reply}
Assume that Assumptions  \textbf{A.}\ref{assu:2}, \textbf{A.}\ref{assu:4}, \textbf{A.}\ref{assum_reply} and \textbf{A.}\ref{assum2_reply} hold. Then there exists a positive number $M>0$ such that
\begin{equation}
\tA_n\leq M A_n\,,
\end{equation}
where $A_n$ and $\tA_n$ have been defined at Eq. \eqref{eq:reply4}.
\end{prop}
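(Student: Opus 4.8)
The plan is to reduce the inequality $\tA_n\leq MA_n$ to a uniform bound on the Radon--Nikodym derivative of $p_i(\,\cdot\,|\,\theta)$ with respect to $\nu_{n,\eps}$. Setting $W(U):=\sup_{\theta\in\Theta}1/\phi_U(\theta)=\sup_{\theta\in\Theta}f(Y\,|\,\theta)/f(Y_U\,|\,\theta)^{N/n}$, we have $\tA_n=\sup_i\sup_\theta\sum_{U\in\Uset_n}p_i(U\,|\,\theta)W(U)$ and $A_n=\sum_{U\in\Uset_n}\nu_{n,\eps}(U)W(U)$, so any $M<\infty$ with $p_i(U\,|\,\theta)\leq M\nu_{n,\eps}(U)$ for all $(i,\theta,U)$ yields $\tA_n\leq MA_n$; Assumption \textbf{A.}\ref{assu:4} ensures $W(U)\leq e^{\gamma_n\|\Delta_n(U)\|}$, so that $MA_n$ is finite and the bound is informative.

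First I would dispose of an absolute-continuity technicality by initialising ISS-MCMC at a state $\ttheta_0$ with a bounded density: since one Metropolis--Hastings step sends a distribution with a density to a distribution with a density (the rejection part $\delta_{\theta_0}r(\theta_0\,|\,U)$ contributes the density $\theta\mapsto m(\theta)r(\theta\,|\,U)$ when $m$ is the incoming density), the law $\mathcal{L}_i(\,\cdot\,|\,U_i=U)$ of $\ttheta_i$ given $U_i=U$ has a density $g_i(\,\cdot\,|\,U)$ on $\Theta$ for every $i$. Under Assumption \textbf{A.}\ref{assum2_reply}, the subset-refreshment step of ISS-MCMC is a Metropolis move reversible for $\nu_{n,\eps}$, hence the subset chain $\{U_i\}$ is stationary with marginal $\nu_{n,\eps}$; the marginal density of $\ttheta_i$ is then $g_i(\theta)=\sum_{U'}\nu_{n,\eps}(U')g_i(\theta\,|\,U')$, and Bayes' rule applied as in Remark \ref{rem1} gives
\begin{equation*}
\frac{p_i(U\,|\,\theta)}{\nu_{n,\eps}(U)}=\frac{g_i(\theta\,|\,U)}{\sum_{U'}\nu_{n,\eps}(U')g_i(\theta\,|\,U')}\leq\frac{\sup_{U'}g_i(\theta\,|\,U')}{\inf_{U'}g_i(\theta\,|\,U')}\,.
\end{equation*}
It remains to bound this ratio uniformly in $(i,\theta)$, i.e. to control the conditional densities $g_i(\,\cdot\,|\,U)$ from below and from above, uniformly in $i$ and $U$.

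The lower bound is provided by Assumption \textbf{A.}\ref{assum_reply}: because $\Theta$ is (implicitly) compact, each conditional kernel $K(\,\cdot\,,\,\cdot\,|\,U)$ satisfies a one-step minorization --- the proposal density is bounded below on $\Theta\times\Theta$ and the acceptance probability is bounded below by $\inf_\Theta\tpi_n(\,\cdot\,|\,Y_U)\big/\sup_\Theta\tpi_n(\,\cdot\,|\,Y_U)>0$ --- so $g_i(\theta\,|\,U)\geq\eta$ for all $i\geq1$ and all $(\theta,U)$. For the upper bound one cannot iterate $K(\,\cdot\,|\,U)$ naively, because $U_i$ is correlated with $U_{1:i-1}$ through the subset chain; instead I would condition on the whole subset path,
\begin{equation*}
\mathcal{L}_i(\,\cdot\,|\,U_i=U)=\sum_{u_{1:i-1}}\proba(U_{1:i-1}=u_{1:i-1}\,|\,U_i=U)\;\mu\,K(\,\cdot\,|\,u_1)\cdots K(\,\cdot\,|\,u_{i-1})\,K(\,\cdot\,|\,U)\,,
\end{equation*}
with $\mu=\mathrm{Law}(\ttheta_0)$, and bound each path density through the recursion ``density after step $j$ $\leq\sup q+\bar r\cdot$ density after step $j-1$'', where $\bar r<1$ is a uniform bound on the rejection probability (again from compactness of $\Theta$ and regularity of $Q$ and of the sub-posteriors, with Assumption \textbf{A.}\ref{assu:2} keeping the accompanying perturbation estimate in force). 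This telescopes to a bound independent of $i$ and of the path, hence $g_i(\theta\,|\,U)\leq\bar M$ for all $(i,\theta,U)$; taking $M=\bar M/\eta$ finishes the proof. The overall structure parallels the perturbation arguments of \citet[Theorem 3.2]{medina2016stability} and of Proposition \ref{prop:bound}.

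The main obstacle is precisely this uniform-in-$i$ upper bound on $g_i(\,\cdot\,|\,U)$: without a uniform strict inequality $\bar r<1$ on the rejection probability the naive recursion ``density$_i\leq\sup q+$ density$_{i-1}$'' diverges, and one must genuinely go through the path-conditioning above rather than treat $g_i(\,\cdot\,|\,U_i=U)$ as the one-step image of $g_{i-1}(\,\cdot\,|\,U_{i-1}=U)$ under $K(\,\cdot\,|\,U)$. Ultimately everything rests on the compactness of $\Theta$ hidden in Assumption \textbf{A.}\ref{assum_reply}: if $\Theta$ is unbounded the conditional densities need not be uniformly controlled and the argument collapses, consistently with the caveats stated after Assumption \textbf{A.}\ref{assum_reply}.
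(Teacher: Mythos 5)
Your proposal is correct and follows essentially the same route as the paper: reduce $\tA_n\leq MA_n$ to a uniform bound $p_i(U\,|\,\theta)\leq M\nu_{n,\eps}(U)$, obtained by combining an upper bound on the density of $\ttheta_i$ given the subsets (the paper's Lemma on $\rmd p_i(U,\ttheta)\leq f_i(\theta)\rmd\theta\,\rmd\nu_{n,\eps}(U)$, proved by the same one-step recursion $f_i=\bQ+f_{i-1}\bvarrho$ that your path-conditioning telescopes into, using the $\nu_{n,\eps}$-stationarity of the subset chain granted by \textbf{A.}\ref{assum2_reply}) with the lower minorization of the marginal law of $\ttheta_i$ that \textbf{A.}\ref{assum_reply} supplies directly for the denominator. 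The only substantive difference is that you explicitly flag that a uniform bound $\bar r<1$ on the rejection probability is needed for the recursion to yield a bound independent of $i$ --- a point the paper's proof passes over silently when it sets $M=\sup_\theta\sup_i f_i(\theta)/\eta$.
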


\begin{coro}
Under the same Assumptions as Proposition \ref{prop_reply}, the control explained in Section \ref{sec:alg:sumstat} is also valid in the time inhomogeneous case.
\end{coro}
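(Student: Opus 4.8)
The plan is to push through the Bayes decomposition of $p_i(\,\cdot\,\mid\theta)$ hinted at in the two opening remarks of this subsection and reduce the claim to a uniform bound on a density ratio. First I would note that step~(i) of the ISS-MCMC transition updates $U$ without ever looking at $\ttheta$, so $\{U_i\}_{i\in\nset}$ is \emph{on its own} a Markov chain on $\Uset_n$ with symmetric proposal $R$ and acceptance $\beta$; it is $\nu_{n,\eps}$-reversible, and under \textbf{A.}\ref{assum2_reply} the starting law $U_0\sim\nu_{n,\eps}$ is preserved, so $U_i\sim\nu_{n,\eps}$ for every $i$. Writing $q_i$ for the density of $\ttheta_i$ and $q_i(\,\cdot\,\mid u)$ for the density of $\ttheta_i$ conditionally on $\{U_i=u\}$, Bayes' formula then gives, for all $i,\theta,u$,
\begin{equation*}
p_i(u\mid\theta)=\nu_{n,\eps}(u)\,\frac{q_i(\theta\mid u)}{q_i(\theta)}\,.
\end{equation*}
Since the quantity $\sup_{\vartheta}1/\phi_u(\vartheta)$ is nonnegative, it follows that
\begin{multline*}
\esp_i\Big\{\sup_{\vartheta\in\Theta}\frac{1}{\phi_U(\vartheta)}\Big\}
=\frac{1}{q_i(\theta)}\sum_{u\in\Uset_n}\nu_{n,\eps}(u)\,q_i(\theta\mid u)\sup_{\vartheta\in\Theta}\frac{1}{\phi_u(\vartheta)}\\
\le\Big(\sup_{u\in\Uset_n}\frac{q_i(\theta\mid u)}{q_i(\theta)}\Big)A_n\,.
\end{multline*}
Taking $\sup_{i,\theta}$ on the left, it therefore suffices to find $M<\infty$ with $q_i(\theta\mid u)\le M\,q_i(\theta)$ uniformly in $(i,\theta,u)$.

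Second, I would obtain this bound by controlling numerator and denominator separately. For the denominator, Assumption~\textbf{A.}\ref{assum_reply} (a one-step minorization, which is what compactness of $\Theta$ together with a Gaussian random-walk proposal buys, the exact MH chain being uniformly ergodic there) forces each marginal transition kernel $\tK_i$ to dominate $\eta\,\lambda(\cdot)$, hence $q_i(\theta)\ge\eta>0$ for every $i\ge 1$ and $\theta$. For the numerator, I would condition on the whole subsample path: given $U_i=u$, $\ttheta_i$ is obtained from $\ttheta_{i-1}$ through $K(\ttheta_{i-1},\,\cdot\,\mid u)$, whose absolutely continuous part has density bounded by $q_{\max}:=\sup_{\vartheta,\theta}Q(\vartheta,\theta)<\infty$ and whose rejection probability is bounded away from $1$, say by $1-\eta'$ (again a consequence of \textbf{A.}\ref{assum_reply} on the compact $\Theta$); this yields
\begin{equation*}
q_i(\theta\mid u)\le(1-\eta')\,r_{i-1}(\theta\mid U_i=u)+q_{\max}\,,
\end{equation*}
where $r_{i-1}(\,\cdot\,\mid U_i=u)$ is the density of $\ttheta_{i-1}$ given $\{U_i=u\}$. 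The structural point is that, by the Markov property of $\{U_i\}$ and the fact that $\ttheta_{i-1}$ is $\sigma(\ttheta_0,U_1,\dots,U_{i-1})$-measurable, $\ttheta_{i-1}$ and $U_i$ are conditionally independent given $U_{i-1}$, so $r_{i-1}(\,\cdot\,\mid U_i=u)=\sum_v q_{i-1}(\,\cdot\,\mid v)\,\proba(U_{i-1}=v\mid U_i=u)$ is a convex mixture of the earlier conditional densities $q_{i-1}(\,\cdot\,\mid v)$. Hence $B_i:=\sup_{\theta,u}q_i(\theta\mid u)$ satisfies the contraction $B_i\le(1-\eta')B_{i-1}+q_{\max}$, so $B_i\le q_{\max}/\eta'+B_0=:\Lambda<\infty$ for all $i$ (with $B_0<\infty$ since $\ttheta_0$ is started from a bounded density; in fact $B_i\le q_{\max}/\eta'$ already for $i\ge1$). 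Then $M:=\Lambda/\eta$ works and the Proposition follows. Assumptions~\textbf{A.}\ref{assu:2} and \textbf{A.}\ref{assu:4} enter only upstream, through Proposition~\ref{prop:geo}, to guarantee $A_n<\infty$ and geometric ergodicity of the marginal chain, so that the inequality $\tA_n\le M A_n$ is actually informative.

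I expect the real obstacle to be the uniform control of the numerator $q_i(\theta\mid u)$: the conditioning event $\{U_i=u\}$ involves a coordinate lying in the \emph{future} of $\ttheta_{i-1}$ along the joint chain, so a naive recursion in $i$ fails to close and tends to reintroduce an unbounded factor $1/\nu_{n,\eps}(u)$. The two facts that rescue the argument — that $r_{i-1}(\,\cdot\,\mid U_i=u)$ is a mixture over $v$ of the laws $q_{i-1}(\,\cdot\,\mid v)$, and that the random-walk Metropolis step rejects with probability uniformly below $1$ on the compact $\Theta$ — are exactly what turn the recursion into a genuine contraction and make $\sup_i B_i$ finite. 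I would also need some care about the first step ($q_0(\,\cdot\,\mid u)$) and about stating precisely the minorization used in \textbf{A.}\ref{assum_reply}, but neither of these is a serious difficulty.
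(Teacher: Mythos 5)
Your proposal is correct and follows essentially the same route as the paper: the paper also writes $p_i(u\,|\,\theta)$ as a ratio of the joint law of $(U_i,\ttheta_i)$ to the marginal of $\ttheta_i$, uses Assumption \textbf{A.}5 to minorize the denominator by $\eta$, and dominates the numerator by $f_i(\theta)\,\nu_{n,\eps}(u)$ via an induction on $i$ that splits the MH kernel into its absolutely continuous part (bounded by $\sup Q$) and its rejection atom, exploiting the $\nu_{n,\eps}$-stationarity of $\{U_i\}$ exactly as you do. The only substantive difference is that you make explicit the requirement that the rejection probability be uniformly bounded away from $1$ so that the recursion $B_i\leq(1-\eta')B_{i-1}+q_{\max}$ actually yields $\sup_i B_i<\infty$ — a point the paper's recursion $f_i=\bQ+f_{i-1}\bvarrho$ leaves implicit when it takes $M=\sup_\theta\sup_i f_i(\theta)/\eta$ — and your direct bound on $\esp_i\{\sup_\vartheta 1/\phi_U(\vartheta)\}$ in terms of $A_n$ is slightly cleaner than the paper's detour through the exponential bound of \textbf{A.}4.
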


\begin{proof}[Proof of Proposition \ref{prop_reply}]
From Assumption A4, there exists some $\gamma>0$ such that
\begin{equation}
\label{eq:reply}
\tA_n=\sup_i\sup_\theta\esp_{i}\{f(Y\,|\,\theta)/f(Y_U\,|\,\theta)^{N/n}\}\leq \sup_i\sup_\theta\int \rmd p_i(U\,|\,\theta)e^{\gamma \|\Delta_n(U)\|}\,,
\end{equation}
where $\rmd p_i(U\,|\,\theta)=p_i(U\,|\,\theta)\rmd U$ and $\rmd U$ is the counting measure. Now, the conditional probability writes:
\begin{equation*}
p_i(U\in\cdot\,|\,\theta):=\proba(U_i\in\cdot\,,\,\ttheta_i\in\rmd\theta)/\proba(\ttheta_i\in\rmd\theta)\,.
\end{equation*}
On the one hand, Lemma \ref{lemma_reply} shows that there exists a bounded function $f_i$ such that $\proba(U_i\in\cdot\,,\,\ttheta_i\in\rmd\theta)\leq f_i(\ttheta)\rmd \theta\nu_{n,\eps}(\,\cdot\,)$. On the other hand, Assumption \ref{assum_reply} guarantees that there exists some $\eta>0$ such that for all $\ttheta\in\Theta$, $\proba(\ttheta_i\in\rmd\theta)>\eta\rmd\theta$. Combining those two facts allows to write that
\begin{equation}
\label{eq:reply2}
p_i(U\in\cdot\,|\,\theta)\leq \frac{f_i(\theta)\rmd \theta\nu_{n,\eps}(\,\cdot\,)}{\eta\rmd\theta}=\frac{f_i(\theta)}{\eta}\nu_{n,\eps}(\,\cdot\,)\,.
\end{equation}
Plugging Eq. \eqref{eq:reply2} into Eq. \eqref{eq:reply}, yields to
\begin{equation*}
\tA_n\leq\sup_i\sup_\theta\int \rmd p_i(U\,|\,\theta)e^{\gamma \|\Delta_n(U)\|}\leq \sup_\theta\sup_i \frac{f_i(\ttheta)}{\eta} A_n\,,
\end{equation*}
which completes the proof, setting $M:=\sup_\theta\sup_i f_i(\theta)/\eta$.
\end{proof}

\begin{lemm}
\label{lemma_reply}
Assume that Assumptions  \textbf{A.}\ref{assu:2}, \textbf{A.}\ref{assu:4}, \textbf{A.}\ref{assum_reply} and \textbf{A.}\ref{assum2_reply} hold. In addition, let us assume that $U_0\sim\nu_{n,\eps}$. Then $p_i(\theta,U)$ is dominated by $\rmd \theta\rmd U$ where $\rmd \theta$ and $\rmd U$ implicitly refer  to the Lebesgue and the counting measure, respectively. In other words there is a sequence of bounded functions $\{f_i:\Theta\to\rset^+\}$ such that
\begin{equation}
\label{eq:reply3}
  \rmd p_i(U,\ttheta)\leq f_i(\theta)\rmd\ttheta\rmd\nu_{n,\eps}(U)\,.
\end{equation}
\end{lemm}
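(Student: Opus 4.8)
The plan is to prove \eqref{eq:reply3} by induction on $i$ using the Markov structure of the joint chain $\{(\ttheta_i,U_i),\,i\in\nset\}$. Recall that a transition $(U_i,\ttheta_i)\to(U_{i+1},\ttheta_{i+1})$ first draws $U_{i+1}\sim H(U_i,\cdot)$, where $H$ is the symmetric-proposal Metropolis--Hastings kernel on the finite set $\Uset_n$ with acceptance ratio $b$ from \eqref{eq:accsubset}; since $b(U_i,U)=\nu_{n,\eps}(U)/\nu_{n,\eps}(U_i)$, $H$ is $\nu_{n,\eps}$-reversible, hence $\nu_{n,\eps}$-invariant. It then draws $\ttheta_{i+1}\sim K(\ttheta_i,\cdot\,|\,U_{i+1})$, the exact Gaussian random-walk MH kernel targeting $\ttarg_n(\cdot\,|\,Y_{U_{i+1}})$. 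Because $U_0\sim\nu_{n,\eps}$ (Assumption \textbf{A.}\ref{assum2_reply}) and $\nu_{n,\eps}$ is $H$-invariant, $U_i\sim\nu_{n,\eps}$ for all $i$. It therefore suffices to show that for every $u\in\Uset_n$ the sub-probability measure $\proba(U_i=u,\ttheta_i\in\rmd\theta)$ has a Lebesgue density $g_{i,u}$ with $\sup_{i,u}\|g_{i,u}\|_\infty<\infty$: then $f_i:=\sup_{u\in\Uset_n}g_{i,u}$ is bounded uniformly in $i$, and since $\proba(U_i=u)=\nu_{n,\eps}(u)$ one has $\int g_{i,u}(\theta)\,\rmd\theta=1$ and $\rmd p_i(U,\ttheta)\leq f_i(\theta)\,\rmd\theta\,\rmd\nu_{n,\eps}(U)$, which is \eqref{eq:reply3}.

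For the inductive base I would take $\ttheta_0$ with a bounded Lebesgue density (e.g.\ $\ttheta_0\sim p$), so $g_{0,u}$ is bounded. For the step, write the conditional MH kernel as $K(\theta,\rmd\theta'\,|\,u')=q(\theta,\theta')\{1\wedge\ta(\theta,\theta'\,|\,u')\}\,\rmd\theta'+r_{u'}(\theta)\,\delta_\theta(\rmd\theta')$, with $q$ the bounded Gaussian proposal density, $\ta$ as in \eqref{eq:InfomedSubratio}, and $r_{u'}(\theta)\in[0,1]$ the rejection probability. Pushing $\proba(U_i=u,\ttheta_i\in\rmd\theta)=\nu_{n,\eps}(u)g_{i,u}(\theta)\,\rmd\theta$ through the joint transition, the absolutely continuous part of $K$ contributes a density at most $\sup q$ (because $\int g_{i,u}\,\rmd\theta=1$), the Dirac part contributes $g_{i,u}(\theta')r_{u'}(\theta')\leq(1-\beta_0)f_i(\theta')$ with $\beta_0:=1-\sup_{u'\in\Uset_n,\theta\in\Theta}r_{u'}(\theta)$, and the $U$-sum collapses via $\sum_{u}\nu_{n,\eps}(u)H(u,u')=\nu_{n,\eps}(u')$ ($\nu_{n,\eps}$-reversibility of $H$), cancelling the common factor $\nu_{n,\eps}(u')$. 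This yields, for every $u'\in\Uset_n$,
\begin{equation*}
g_{i+1,u'}(\theta')\leq \sup q+(1-\beta_0)f_i(\theta')\,,\qquad\text{hence}\qquad\|f_{i+1}\|_\infty\leq\sup q+(1-\beta_0)\|f_i\|_\infty\,,
\end{equation*}
so that $\sup_i\|f_i\|_\infty\leq\max\{\|f_0\|_\infty,\ \sup q/\beta_0\}<\infty$, provided $\beta_0>0$.

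The main obstacle is to secure $\beta_0>0$, i.e.\ a lower bound, uniform over $\theta\in\Theta$ and over the finitely many $u\in\Uset_n$, on the accepted-proposal mass $\int q(\theta,z)\{1\wedge\ta(\theta,z\,|\,u)\}\,\rmd z$; this is precisely a one-step minorization holding simultaneously for the whole family $\{K(\cdot,\cdot\,|\,u)\}_{u\in\Uset_n}$. Since $q$ is symmetric, $1\wedge\ta(\theta,z\,|\,u)=1\wedge\{\ttarg_n(z\,|\,Y_u)/\ttarg_n(\theta\,|\,Y_u)\}$, so on a compact $\Theta$ with $q$ and each sub-posterior $\ttarg_n(\cdot\,|\,Y_u)$ bounded and bounded away from $0$, the integral is at least $(\inf q)\,\lambda(\Theta)\,\min_{u\in\Uset_n}\bigl(\inf_\Theta\ttarg_n(\cdot\,|\,Y_u)/\sup_\Theta\ttarg_n(\cdot\,|\,Y_u)\bigr)>0$, finiteness of $\Uset_n$ being used for the minimum; this is the same regularity that underpins the uniform ergodicity invoked in this section and the minorization in Assumption \textbf{A.}\ref{assum_reply}. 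The only other delicate point is the initialisation: if $\ttheta_0$ were a point mass, each $p_i$ would keep a persistent atom in $\theta$ (a rejected step leaves $\ttheta$ unchanged), so \eqref{eq:reply3} genuinely needs $\ttheta_0$ diffuse, consistent with the preliminary-run strategy discussed at the start of Section \ref{sec:6_1}.
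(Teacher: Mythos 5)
Your proof is correct and follows essentially the same route as the paper's: induction on $i$, splitting the conditional Metropolis--Hastings kernel into its absolutely continuous and Dirac parts, collapsing the sum over subsets via the $\nu_{n,\eps}$-invariance of $H$, and arriving at the same recursion (the paper's $f_i=\bQ+f_{i-1}\bvarrho$ is your $\|f_{i+1}\|_\infty\leq\sup q+(1-\beta_0)\|f_i\|_\infty$ with $\bvarrho\leq 1$ in place of $1-\beta_0$). The one substantive addition is your uniform-in-$i$ bound under the extra minorization $\beta_0>0$: the paper's recursion only yields boundedness of each $f_i$ separately (a priori growing linearly in $i$), whereas the downstream use in Proposition \ref{prop_reply} takes $M=\sup_\theta\sup_i f_i(\theta)/\eta$ and thus actually needs the uniform control you supply, so this refinement, together with your explicit remark that $\ttheta_0$ must have a bounded density to avoid a persistent atom, is a genuine strengthening rather than redundancy.
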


\begin{proof}
We proceed by induction. Defining $\varrho(\ttheta\,|\,U)$ as the probability to reject a MH move for the parameter $\ttheta$ when the subset variable is $U$, we recall that $\varrho(\ttheta\,|\,U)<1$ and $\talpha(\ttheta,\ttheta'\,|\,U)<1$. By assumption on the proposal kernel, it satisfies $Q(\ttheta, \,\rmd\ttheta'\,)=Q(\ttheta,\ttheta')\rmd\ttheta'$ and define the function $\bQ:\theta\mapsto\sup_{\ttheta'\in\Theta}Q(\ttheta',\theta)$. Similarly, we define the function $\bvarrho:\theta\mapsto \sup_{U\in\Uset_n}\varrho(\theta\,|\,U)$. Deriving the calculation separately for the continuous and the diagonal parts of the Metropolis-Hastings kernel $K(\theta,\cdot\,|\,U)$ (see Eq. \eqref{eq:8}), we have:
\begin{multline*}
\rmd p_1(U,\ttheta)=\int_{\ttheta_0\in\Theta}\sum_{U_0\in\Uset_n} \mu(\rmd\ttheta_0)\nu(U_0)H(U_0,U)K(\ttheta_0,\rmd\ttheta\,|\,U)\,,\\
\leq\int\sum\mu(\rmd\ttheta_0)\nu(U_0)H(U_0,U)Q(\ttheta_0,\rmd\ttheta)\talpha(\ttheta_0,\ttheta\,|\,U)\\
+\int\sum\mu(\rmd\ttheta_0)\nu(U_0)H(U_0,U)\delta_{\ttheta_0}(\rmd\ttheta)\varrho(\ttheta_0\,|\,U)\,,\\
\leq \int\sum\mu(\rmd\ttheta_0)\nu(U_0)H(U_0,U)\bQ(\ttheta)\rmd\ttheta+
\int\sum\mu(\rmd\ttheta)\nu(U_0)H(U_0,U)\bvarrho(\ttheta)\,,\\
\leq \sum\nu(U_0)H(U_0,U)\bQ(\ttheta)\rmd\ttheta+
\sum\nu(U_0)H(U_0,U)\mu(\ttheta)\bvarrho(\ttheta) \rmd\ttheta\,,\\
=\underbrace{\left\{\bQ(\ttheta)+\mu(\ttheta)\bvarrho(\ttheta)\right\}}_{:=f_1(\ttheta)}\rmd\ttheta\rmd\nu(U)\,,
\end{multline*}
where the last equality follows from the $\nu_{n,\eps}$-stationarity of $H$. In this derivation, we have defined $\mu$ as the initial distribution of the Markov chain $\{\ttheta_i\}_i$  and $\nu$ as a shorthand notation for $\nu_{n,\eps}$. Now, let us assume that there is a bounded function $f_{i-1}$ such that $\rmd p_1(U,\ttheta)\leq f_{i-1}(\ttheta)\rmd\ttheta\rmd\nu(U)$. Using the notation $\mu K:=\int \mu(\rmd x)K(x,\cdot)$ for any Markov kernel $K$ and a measure $\mu$ on some measurable space $(\Xset,\Xalg)$ and recalling that $\barK$ is the transition kernel of ISS-MCMC on the extended space $\Theta\times\Uset_n$, we have:
\begin{multline*}
\rmd p_i(U,\ttheta)=\sum_{U_{i-1}\in\Uset_n}\int_{\ttheta_{i-1}\in\Theta}\bar{\mu} \barK^{i-1}(U_{i-1},\rmd \ttheta_{i-1})H(U_{i-1},U)K(\ttheta_{i-1},\rmd\ttheta\,|\,U)\,,\\
\leq  \sum_{U_{i-1}\in\Uset_n}\int_{\ttheta_{i-1}\in\Theta}\bar{\mu} \barK^{i-1}(U_{i-1},\rmd \ttheta_{i-1})H(U_{i-1},U)\bQ(\ttheta)\rmd\ttheta
\\
+\sum_{U_{i-1}\in\Uset_n}\bar{\mu} \barK^{i-1}(U_{i-1},\rmd \ttheta)H(U_{i-1},U)\varrho(\ttheta\,|\,U)\,,\\
\leq\sum_{U_{i-1}\in\Uset_n}\bar{\mu} \bar{K}^{i-1}(U_{i-1})H(U_{i-1},U)\bQ(\ttheta)\rmd\ttheta+
\sum_{U_{i-1}\in\Uset_n} \rmd p_{i-1}(U_{i-1},\ttheta)H(U_{i-1},U)\varrho(\ttheta\,|\,U)\\
\leq \nu(U)\bQ(\ttheta)\rmd\ttheta+f_{i-1}(\ttheta)\sum_{U_{i-1}}H(U_{i-1},U)\varrho(\ttheta\,|\,U)\leq \underbrace{\left\{\bQ(\ttheta)+f_{i-1}(\ttheta)\bvarrho(\ttheta)\right\}}_{:=f_i(\ttheta)}\rmd\ttheta\rmd\nu(U)
\end{multline*}
and $f_i$ is bounded. The first term in the third inequality follows from noting that
\begin{multline*}
\sum \bar{\mu}\barK^{i-1}(U_{i-1})H(U_{i-1},U_i)\\
=\sum\int\bar{\mu}\barK^{i-2}(U_{i-2},\rmd\ttheta_{i-2})\sum\int H(U_{i-2},U_{i-1})K(\ttheta_{i-2},\rmd\ttheta_{i-1}\,|\,U_{i-1})H(U_{i-1},U)\\
=\sum\int \bar{\mu}\barK^{i-2}(U_{i-2},\rmd\ttheta_{i-2}) H^2(U_{i-2},U)=\cdots
=\sum\int \mu(\rmd\theta_0)\nu(U_0)H^{i}(U_0,U)=\nu(U)\,.
\end{multline*}
\end{proof}

\subsection{Proof of Proposition \ref{prop:bound2}}
\label{app5}

\begin{proof}
Note that for all $(\theta,\zeta)\in\Theta\times \rset^d$, a Taylor expansion of $\pi(\theta)$ and $\phi_U(\theta)$ at $\theta+\Sigma\zeta$ in \eqref{eq:boundB} combined to the triangle inequality leads to:
\begin{multline*}
B(U,\theta)\leq\frac{1}{\sqrt{N}}\esp\left\{\left|(M\zeta)\T\grad_\theta\phi_U(\theta)\right|\left(1+\frac{1}{\sqrt{N}}(M\zeta)\T\grad_\theta\log\pi(\theta)\right)\right\}\\
+\frac{1}{2N}\esp\left\{|(M\zeta)\T\grad_\theta^2\phi_U(\theta)M\zeta|\right\}+\esp\{R(\|M\zeta\|/\sqrt{N})\}\,,
\end{multline*}
where the expectation is under $\Phi_d$ and $R(x)=o(x)$ at $0$. Applying Cauchy-Schwartz gives:
\begin{multline*}
B(U,\theta)\leq \frac{1}{\sqrt{N}}\esp\{\|M\zeta\|\}\|\grad_\theta\phi_U(\theta)\|+\frac{1}{N}\esp\{\|M\zeta\|^2\}\|\grad_\theta\phi_U(\theta)\|\|\grad_\theta\log\pi(\theta)\|\\
+\frac{1}{2N}\esp\{|\zeta\T M\T\grad_\theta^2\phi_U(\theta)M\zeta|\}+\esp\{R(\|M\zeta\|/\sqrt{N})\}\,.
\end{multline*}
Now, we observe that:
\begin{itemize}
\item $\esp\{\|M\zeta\|\}=\esp\{\sum_{i=1}^d(\sum_{j=1}^d M_{i,j}\zeta_{j})^2\}^{1/2}\leq \esp\{\sum_{i=1}^d|\sum_{j=1}^d M_{i,j}\zeta_{j}|\}\leq
\esp\{\sum_{i=1}^d\sum_{j=1}^d|M_{i,j}||\zeta_{j}|\}=\sum_{i=1}^d\sum_{j=1}^d|M_{i,j}|\esp\{|\zeta_i|\}=\sqrt{\frac{2}{\pi}}\|M\|_1
$
\item $\esp\{\|M\zeta\|^2\}=\esp\{\sum_{i=1}^d(\sum_{j=1}^d M_{i,j} \zeta_j)^2\}=\sum_{i=1}^d\esp\{(\sum_{j=1}^dM_{i,j}\zeta_j)^2\}
=\sum_{i=1}^{d}\var(\sum_{j=1}^d M_{i,j} \zeta_j)=\sum_{i=1}^{d}\sum_{j=1}^d M_{i,j}^2 \var(\zeta_j)=\|M\|_2^2
$
\item considering the quadratic form associate to the operator $T(U,\theta)=M\T\grad_{\theta}^2\phi_U(\theta) M$, noting that $T(U,\theta)$ is symmetric its eigenvalues $\lambda_1\geq \lambda_2\geq \cdots\geq \lambda_d$ are real and we have
$$
\zeta\T T(U,\theta) \zeta\leq \lambda_1\|\zeta\|^2
$$
so that:
$$
\esp\left\{|(M\zeta)\T\grad_\theta^2\phi_U(\theta)M\zeta|\right\}\leq d \sup_{i}|\lambda_i| \leq d \tn M\T\grad_{\theta}^2\phi_U(\theta) M\tn
$$
where for any square matrix $A$, we have defined $\tn A\tn=\sup_{x\in\rset^d,\|x\|=1}\|Ax\|$ as the operator norm.
\end{itemize}
\end{proof}

\section{Proof of Proposition \ref{propMLE}}
\label{proof:propMLE}

In this section, we are assuming that there is an infinite stream of observations $(Y_1,Y_2,\ldots)$ and a parameter $\theta_0\in\Theta$ such that
$Y_i\sim f(\,\cdot\,|\,\theta_0)$. Let $\rho>1$ be a constant defined as the ratio $N/n$ \ie the size of the full dataset over the size of the subsamples of interest. The full dataset is thus $Y_{1:\rho n}$. We define the set
$$
\Uset_n^\rho=\left\{U\subset\{1,\ldots,\rho n\},\;|U|=n\right\}
$$
such that $Y_U$ ($U\in\Uset_n^\rho$) is the set of subsamples of interest. We study the asymptotics when $n\to\infty$ \ie we let the whole dataset and the size of subsamples of interest grow at the same rate.

\setcounter{prop}{6}
\begin{prop}
Let $\thetaMLE_{\rho n}$ be the MLE of $Y_1,\ldots,Y_{\rho n}$ and $\thetaMLE_U$ be the MLE of the subsample $Y_U$ ($U\in\Uset_n^\rho$). Assume that there exists a compact set $\kappa_n\subset\Theta$ such that $(\thetaMLE_{\rho n},\theta_0)\in\kappa_n^2$ and for all $U$, there exists a compact set $\kappa_U\subset\Theta$ such that $(\thetaMLE_U,\theta_0)\in\kappa_U^2$. Then, there exists a constants $\beta$, a metric $\|\cdot\|_{\theta_0}$ on $\Theta$ and a non-decreasing subsequence $\{\sigma_n\}_{n\in\nset}$, ($\sigma_n\in\nset$) such that for all $U\in\Uset_{\sigma_n}^\rho$, we have for $p$-almost all $\theta\in\kappa_n\cap\kappa_U$
\begin{equation}
\label{eq0_bis}
\log f(Y_{1:\rho \sigma_n}\,|\,\theta)-\rho\log f(Y_U\,|\,\theta)\leq
H_n(Y,\theta)+\beta+\frac{\rho \sigma_n}{2}\|\thetaMLE_U-\thetaMLE\|_{\theta_0}\,,
\end{equation}
where
$$
\underset{n\to \infty}{\plim}\quad H_n(Y,\theta)\overset{\proba_{\theta_{0}}}{=} 0\,.
$$
\end{prop}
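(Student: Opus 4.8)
The plan is to reduce the statement to a Bernstein--von Mises expansion of the two posteriors together with a Laplace approximation of their normalising constants. Write $N=\rho\sigma_n$, $n=\sigma_n$, $Z_N=\int p(\theta')f(Y_{1:N}\,|\,\theta')\rmd\theta'$ and $Z_U=\int p(\theta')f(Y_U\,|\,\theta')\rmd\theta'$. The defining relation $f(Y_{1:N}\,|\,\theta)=Z_N\,\pi(\theta\,|\,Y_{1:N})/p(\theta)$ and its analogue for $Y_U$ give
\[
\log f(Y_{1:N}\,|\,\theta)-\rho\log f(Y_U\,|\,\theta)=\big(\log Z_N-\rho\log Z_U\big)+\big(\log\pi(\theta\,|\,Y_{1:N})-\rho\log\pi(\theta\,|\,Y_U)\big)+(\rho-1)\log p(\theta)\,.
\]
First I would expand $\log Z_N$ and $\log Z_U$ by Laplace's method around $\thetaMLE$ and $\thetaMLE_U$ respectively, using consistency of the two MLEs (they remain in the compacts $\kappa_n$, $\kappa_U$) and $-N^{-1}\nabla^2\log f(Y_{1:N}\,|\,\cdot\,)\to I(\theta_0)$ near $\theta_0$, obtaining $\log Z_N=\log f(Y_{1:N}\,|\,\thetaMLE)+\log p(\theta_0)-\tfrac d2\log\tfrac N{2\pi}-\tfrac12\log\det I(\theta_0)+o_{\proba_{\theta_0}}(1)$ and the same with $n,\thetaMLE_U$ in place of $N,\thetaMLE$. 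Then I would invoke \eqref{eqBvM} for both $\pi(\,\cdot\,|\,Y_{1:N})$ and $\pi(\,\cdot\,|\,Y_U)$, upgraded from total variation to pointwise convergence of the densities on a fixed neighbourhood of $\theta_0$ (this is what yields the ``$p$-almost all $\theta$'' and the restriction to a neighbourhood), replacing $\log\pi(\theta\,|\,Y_{1:N})$ and $\log\pi(\theta\,|\,Y_U)$ by the log-densities of $\mathcal N(\thetaMLE,I(\theta_0)^{-1}/N)$ and $\mathcal N(\thetaMLE_U,I(\theta_0)^{-1}/n)$ up to $o_{\proba_{\theta_0}}(1)$. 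Substituting these expansions, the $\log N$, $\log(2\pi)$, $\log\det I(\theta_0)$ and $\log p(\theta_0)$ contributions cancel exactly because $N=\rho n$, leaving
\[
\log f(Y_{1:N}\,|\,\theta)-\rho\log f(Y_U\,|\,\theta)=\big[\log f(Y_{1:N}\,|\,\thetaMLE)-\rho\log f(Y_U\,|\,\thetaMLE_U)\big]-\tfrac{\rho n}{2}\big(\|\theta-\thetaMLE\|^2_{I(\theta_0)}-\|\theta-\thetaMLE_U\|^2_{I(\theta_0)}\big)+(\rho-1)\log\tfrac{p(\theta)}{p(\theta_0)}+o_{\proba_{\theta_0}}(1)\,,
\]
with $\|x\|^2_{I(\theta_0)}=x\T I(\theta_0)x$.

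The last three non-vanishing pieces would then be treated separately. The prior term is bounded over the compact neighbourhood by continuity and strict positivity of $p$ and supplies (part of) the constant $\beta$. For the quadratic-form difference I would use the polarisation identity $\|\theta-\thetaMLE\|^2_{I(\theta_0)}-\|\theta-\thetaMLE_U\|^2_{I(\theta_0)}=(2\theta-\thetaMLE-\thetaMLE_U)\T I(\theta_0)(\thetaMLE_U-\thetaMLE)$ and Cauchy--Schwarz in the $I(\theta_0)$-inner product; consistency of $\thetaMLE$ and $\thetaMLE_U$ (uniformly in $U$ along the subsequence) bounds $\|2\theta-\thetaMLE-\thetaMLE_U\|_{I(\theta_0)}$ by a finite constant once $\theta$ is confined to a small enough neighbourhood of $\theta_0$, giving $-\tfrac{\rho n}{2}(\cdots)\le\tfrac{\rho n}{2}\|\thetaMLE_U-\thetaMLE\|_{\theta_0}$ with the metric $\|\,\cdot\,\|_{\theta_0}$ taken as a fixed multiple of $\|\,\cdot\,\|_{I(\theta_0)}$ (the multiple also absorbing a similar term appearing next). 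For the bracket $\log f(Y_{1:N}\,|\,\thetaMLE)-\rho\log f(Y_U\,|\,\thetaMLE_U)$, a Wilks-type second-order expansion of $\log f(Y_{1:N}\,|\,\cdot\,)$ at $\theta_0$ evaluated at its maximiser $\thetaMLE$ --- legitimate since $\|\thetaMLE-\theta_0\|=O_{\proba_{\theta_0}}(n^{-1/2})$, keeping the cubic remainder $o_{\proba_{\theta_0}}(1)$ --- together with the linearisation $\thetaMLE-\theta_0=N^{-1}I(\theta_0)^{-1}\nabla\log f(Y_{1:N}\,|\,\theta_0)+o_{\proba_{\theta_0}}(n^{-1/2})$, gives $\log f(Y_{1:N}\,|\,\thetaMLE)=\log f(Y_{1:N}\,|\,\theta_0)+\tfrac{\rho n}{2}\|\thetaMLE-\theta_0\|^2_{I(\theta_0)}+o_{\proba_{\theta_0}}(1)$, and the same with $n,\thetaMLE_U$. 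Subtracting, the bracket equals $\big[\log f(Y_{1:N}\,|\,\theta_0)-\rho\log f(Y_U\,|\,\theta_0)\big]+\tfrac{\rho n}{2}(\thetaMLE-\thetaMLE_U)\T I(\theta_0)(\thetaMLE+\thetaMLE_U-2\theta_0)+o_{\proba_{\theta_0}}(1)$, whose middle term is again bounded by Cauchy--Schwarz and $\|\thetaMLE+\thetaMLE_U-2\theta_0\|=O_{\proba_{\theta_0}}(n^{-1/2})$ by a multiple of $\tfrac{\rho n}{2}\|\thetaMLE_U-\thetaMLE\|_{\theta_0}$.

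Collecting everything yields the announced inequality with $H_n(Y,\theta)$ gathering the residual terms, and the main obstacle is precisely this residual. It still contains $\log f(Y_{1:N}\,|\,\theta_0)-\rho\log f(Y_U\,|\,\theta_0)$, a centred sum of order $O_{\proba_{\theta_0}}(\sqrt n)$ that depends on $U$, whereas $H_n$ is not allowed to depend on $U$ and must tend to $0$ in $\proba_{\theta_0}$-probability. Handling it requires showing that, along a suitable non-decreasing subsequence $\{\sigma_n\}$ and uniformly over $U\in\Uset_{\sigma_n}^\rho$ (on the compacts $\kappa_n\cap\kappa_U$), this term is dominated by the envelope $\tfrac{\rho n}{2}\|\thetaMLE_U-\thetaMLE\|_{\theta_0}$ up to a genuinely vanishing piece; the need for this extraction --- and the reason the various $o_{\proba_{\theta_0}}(1)$ and uniform-consistency statements can only be promoted to almost-sure control along a subsequence, in the spirit of the classical argument behind Bernstein--von Mises --- is exactly why the conclusion can be asserted only asymptotically and up to the additive constant $\beta$, i.e.\ why it recovers Assumption~\textbf{A.}\ref{assu:4} only ``up to a constant''.
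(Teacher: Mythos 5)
Your route is essentially the paper's: you write $\log f(Y_{1:\rho\sigma_n}\,|\,\theta)$ as (log normalising constant) $+$ (log posterior density) $-$ (log prior), replace the two posterior densities by their Bernstein--von Mises Gaussians pointwise, $p$-almost everywhere, along a subsequence (this is exactly how the paper manufactures $\{\sigma_n\}$ and the ``$p$-almost all $\theta$'' qualifier, via $L_1\Rightarrow$ a.e.\ convergence of a subsequence), and you extract the envelope $\tfrac{\rho\sigma_n}{2}\|\thetaMLE_U-\thetaMLE\|_{\theta_0}$ from the difference of the two Gaussian quadratic forms. Where you differ is that you go on to expand the normalising constants by Laplace and Wilks, and that extra care is precisely what exposes the gap you yourself flag: after all cancellations there remains $\log f(Y_{1:\rho\sigma_n}\,|\,\theta_0)-\rho\log f(Y_U\,|\,\theta_0)$, a centred, $U$-dependent sum of stochastic order $\sqrt{\sigma_n}$, which is neither a constant $\beta$ nor a $U$-free term $H_n(Y,\theta)$ vanishing in $\proba_{\theta_0}$-probability, and is not obviously dominated by $\tfrac{\rho\sigma_n}{2}\|\thetaMLE_U-\thetaMLE\|_{\theta_0}$. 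As written, your argument therefore does not establish \eqref{eq0_bis}.

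You should know, though, that you have not taken a wrong turn: you have located the soft spot of the published proof rather than created a new one. The paper never expands this quantity; it carries it along bundled as $\log\{Z_{\rho\sigma_n}/Z_{\sigma_n}(U)^{\rho}\}$ into its final display and then asserts the conclusion, implicitly folding a $U$-dependent random term into $\beta$ and $H_n$. (Relatedly, Lemma~\ref{lem2} bounds the Gaussian determinant contribution $\tfrac12\log|\rho n I(\theta_0)|-\tfrac{\rho}{2}\log|n I(\theta_0)|$ above by zero and discards it, whereas your Laplace expansion shows that this $-(\rho-1)\tfrac d2\log n$ is exactly what would have cancelled the matching positive term hidden inside the $Z$-ratio.) So the fair verdict is: same decomposition and same use of Bernstein--von Mises as the paper, more careful bookkeeping of the normalising constants, and an honestly acknowledged missing step --- one that would require a genuinely new ingredient (e.g.\ a bound, uniform over $U\in\Uset_{\sigma_n}^{\rho}$ along the subsequence, on $\sum_{i\notin U}\log f(Y_i\,|\,\theta_0)-(\rho-1)\sum_{i\in U}\log f(Y_i\,|\,\theta_0)$ relative to $\rho\sigma_n\|\thetaMLE_U-\thetaMLE\|_{\theta_0}$), and which the paper's own argument does not supply either.
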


\begin{proof}
Fix $n\in\nset$. Consider the case where the prior distribution $p$ is uniform on $\kappa_n$. In this case, the posterior is
$$
\pi_{n}(\theta\,|\,Y_{1:\rho n})=f(Y_{1:\rho n}\,|\,\theta) \1_{\kappa_n}(\theta)\big\slash Z_{\rho n}\,,\qquad Z_{\rho n}=\int_{\kappa_n} f(Y_{1:\rho n}\,|\,\theta)\rmd \theta
$$
and from corollary \ref{coro1}, we know that there exists a subsequence $\tau_n\subset\nset$ such that for $p$-almost all $\theta\in\kappa_n$
\begin{equation}
\label{eq4}
\left|\log \frac{f(Y_{1:\rho \tau_n}\,|\,\theta)}{Z_{\rho \tau_n}}-\log\Phi_{\rho\tau_n}(\theta)\right|\overset{\proba_{\theta_{0}}}{\to}0\,,
\end{equation}
where $\theta\mapsto\Phi_{\rho\tau_n}(\theta)$ is the pdf of $\norm(\thetaMLE_{\rho \tau_n},I(\theta_0)^{-1}/\rho\tau_n)$. Similarly, there exists another subsequence $\gamma_n\subset \nset$ such that for all $U\in\Uset_{\gamma_n}^\rho$ and for $p$-almost all $\theta\in\kappa_U$
\begin{equation}
\label{eq5}
\left|\rho\log \frac{f(Y_U\,|\,\theta)}{Z_{\gamma_n}(U)}-\rho\log\Phi_{U}(\theta)\right|\overset{\proba_{\theta_{0}}}{\to}0\,,\qquad Z_{\gamma_n}(U)=\int_{\kappa_{U}} f(Y_U\,|\,\theta)\rmd \theta
\end{equation}
where $\theta\mapsto\Phi_{U}(\theta)$ is the pdf of $\norm(\thetaMLE_{U},I(\theta_0)^{-1}/|U|)$. Let $\{\sigma_n\}_{n\in\nset}$ be the sequence defined as $\sigma_n=\max\{\tau_n,\gamma_n\}$. We know from \eqref{eq4} and \eqref{eq5} that for all $\varepsilon>0$ and all $\eta>0$, there exists $n_1\in\nset$ such that for all $U\in\Uset_{\sigma_n}^{\rho}$ and for all $n\geq n_1$
\begin{equation}
\label{eq6}
\proba_{\theta_0}\left\{\left|\log \frac{f(Y_{1:\rho \sigma_n}\,|\,\theta)}{Z_{\rho\sigma_n}}-\log\Phi_{\rho\sigma_n}(\theta)\right|
+\left|\rho\log \frac{f(Y_U\,|\,\theta)}{Z_{\sigma_n}(U)}-\rho\log\Phi_{U}(\theta)\right|
\geq \varepsilon\right\}\leq \eta\,.
\end{equation}
Now, by straightforward algebra, we have for any $U\in\Uset_{\sigma_n}^{\rho}$
\begin{multline}
\label{eq7}
\log f(Y_{1:\rho\sigma_n}\,|\,\theta)-\rho \log f(Y_U\,|\,\theta)=\log \frac{f(Y_{1:\rho\sigma_n}\,|\,\theta)}{Z_{\rho\sigma_n}}-\log \Phi_{\rho\sigma_n}(\theta)
-\rho \log \frac{f(Y_U\,|\,\theta)}{Z_{\sigma_n}(U)}\\+\rho\log \Phi_{U}(\theta)
+\log\frac{Z_{\rho\sigma_n}}{Z_{\sigma_n}(U)^\rho}+\log\Phi_{\rho\sigma_n}(\theta)-\rho\log\Phi_U(\theta)\\
\leq
\left|\log \frac{f(Y_{1:\rho\sigma_n}\,|\,\theta)}{Z_{\rho\sigma_n}}-\log \Phi_{\rho\sigma_n}(\theta)-
\rho \log \frac{f(Y_U\,|\,\theta)}{Z_{\sigma_n}(U)}+\rho\log \Phi_{U}(\theta)\right|\\
+\log\frac{Z_{\rho\sigma_n}}{Z_{\sigma_n}(U)^\rho}
+ (\rho-1)\log(2\pi)^{d/2}+\frac{\rho \sigma_n}{2}\bigg|\|\theta-\theta^\ast_U\|_{\theta_0}-
\|\theta-\theta^\ast\|_{\theta_0}\bigg|\\
\leq
\left|\log \frac{f(Y_{1:\rho\sigma_n}\,|\,\theta)}{Z_{\rho\sigma_n}}-\log \Phi_{\rho\sigma_n}(\theta)\right|+
\left|\rho \log \frac{f(Y_U\,|\,\theta)}{Z_{\sigma_n}(U)}-\rho\log \Phi_{U}(\theta)\right|\\
+\log\frac{Z_{\rho\sigma_n}}{Z_{\sigma_n}(U)^\rho}
+ (\rho-1)\log(2\pi)^{d/2}+\frac{\rho \sigma_n}{2}\|\theta^\ast_U-\theta^\ast\|_{\theta_0}\,,
\end{multline}
where we have used Lemma \ref{lem2} for the first inequality and the triangle inequalities for the second. Combining \eqref{eq7} with \eqref{eq6} yields \eqref{eq0}.
\end{proof}

\begin{lemm}
\label{lem1}
Consider a posterior distribution $\pi_n$ given $n$ data $Y_{1:n}$ where $p$ is the prior distribution and its Bernstein-von Mises approximation is $\Phi_n=\norm(\thetaMLE(Y_{1:n}),I(\theta_0)^{-1}/n)$. There exists a subsequence $\{\tau_n\}_n\subset\nset$ such that
\begin{equation}
\label{eq1}
\underset{n\to\infty}{\plim}\left|\pi_{\tau_n}(\theta)-\Phi_{\tau_n}(\theta)\right|\overset{\proba_{\theta_0}}{=}0\,,\quad \text{for}\;p\text{-almost all}\; \theta\,.
\end{equation}
\end{lemm}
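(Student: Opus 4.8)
The plan is to deduce \eqref{eq1} from the classical Bernstein--von Mises theorem, which provides convergence in total variation, by a Tonelli-then-subsequence argument; the crucial ordering point is to pass to the $\proba_{\theta_0}$-expectation \emph{before} extracting the subsequence. First I would invoke Bernstein--von Mises in total variation form \cite{van2000asymptotic}: under the regularity conditions tacitly assumed in Section~\ref{sec:conv} (consistency and asymptotic efficiency of the MLE, local asymptotic normality with nonsingular Fisher information $I(\theta_0)$, and a prior $p$ that is continuous and strictly positive near $\theta_0$), the posterior satisfies $T_n:=\|\pi_n-\Phi_n\|_{\TV}\to 0$ in $\proba_{\theta_0}$-probability, where $\Phi_n$ is the density of $\norm(\thetaMLE(Y_{1:n}),I(\theta_0)^{-1}/n)$. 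Since total variation is invariant under the affine rescaling $\theta\mapsto\sqrt{n}(\theta-\thetaMLE(Y_{1:n}))$, this is exactly the usual statement that the rescaled posterior converges in total variation to $\norm(0,I(\theta_0)^{-1})$.

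Next, write $D_n(\theta):=|\pi_n(\theta)-\Phi_n(\theta)|$, so that $\int_\Theta D_n(\theta)\,\rmd\theta=2T_n\le 2$ pointwise on the sample space. The bounded convergence theorem then gives $\esp_{\theta_0}\big[\int_\Theta D_n(\theta)\,\rmd\theta\big]\to 0$, and Tonelli's theorem (the integrand is non-negative and jointly measurable) rewrites this as $\int_\Theta h_n(\theta)\,\rmd\theta\to 0$, where $h_n(\theta):=\esp_{\theta_0}[D_n(\theta)]$; in particular each $h_n$ lies in $L^1(\leb)$ and $h_n\to 0$ in $L^1(\leb)$. Therefore there is a \emph{deterministic} subsequence $\{\tau_n\}_{n\in\nset}\subset\nset$ along which $h_{\tau_n}(\theta)\to 0$ for $\leb$-almost every $\theta$, hence for $p$-almost every $\theta$ since $p\ll\leb$. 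For every such $\theta$, $h_{\tau_n}(\theta)\to 0$ says that $\pi_{\tau_n}(\theta)-\Phi_{\tau_n}(\theta)\to 0$ in $L^1(\proba_{\theta_0})$, which forces convergence in $\proba_{\theta_0}$-probability; this is precisely \eqref{eq1}.

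I do not anticipate a real obstacle: the argument is essentially bookkeeping built on top of a cited theorem. The single point requiring care is the order of operations --- had one first extracted an almost surely convergent subsequence of the \emph{random} variables $T_n$, the subsequent ``$L^1(\leb)\Rightarrow$ a.e.\ convergence'' extraction would depend on the sample path, and one could not produce a single $\{\tau_n\}$ valid for $p$-almost every $\theta$; taking $\esp_{\theta_0}$ first removes this difficulty. A minor secondary point is simply to cite a version of Bernstein--von Mises whose hypotheses are compatible with the model assumptions in force; since Lemma~\ref{lem1} feeds into the proof of Proposition~\ref{propMLE} only through its qualitative conclusion, nothing quantitative is lost in doing so.
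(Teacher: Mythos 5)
Your proof is correct and rests on the same two ingredients as the paper's: the Bernstein--von Mises theorem in total variation form, followed by the principle that $L^1$ convergence yields almost-everywhere convergence along a subsequence. The difference is in how the randomness is handled, and here your version is actually the more careful one. The paper applies the subsequence-extraction principle directly to $\|\pi_n-\Phi_n\|_1$, which is a \emph{random} quantity converging to zero only in $\proba_{\theta_0}$-probability; read literally, the extracted subsequence would then depend on the sample path, which is not what the statement of the lemma requires. Your ordering --- first pass to $h_n(\theta)=\esp_{\theta_0}\left|\pi_n(\theta)-\Phi_n(\theta)\right|$ via bounded convergence and Tonelli, so that $h_n\to0$ in $L^1(\leb)$ as a deterministic sequence, and only then extract a deterministic $\{\tau_n\}$ along which $h_{\tau_n}(\theta)\to0$ for $\leb$-a.e.\ (hence $p$-a.e.) $\theta$ --- produces a single subsequence valid simultaneously for almost every $\theta$, and the resulting $L^1(\proba_{\theta_0})$ convergence of $\pi_{\tau_n}(\theta)-\Phi_{\tau_n}(\theta)$ gives the stated convergence in probability. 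In short: same route as the paper, but with the expectation taken before the subsequence extraction, which closes a genuine (if minor) gap in the published argument.
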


\begin{proof}
This follows for the fact that convergence in $L_1$ implies pointwise convergence almost everywhere of a subsequence, \ie there exists a subsequence $\{\tau_n\}_{n\in\nset}\subset\nset$ such that
\begin{equation}
\label{eq2}
\|\pi_{n}-\Phi_n\|_1\to 0 \Rightarrow |\pi_{\tau_n}(\theta)-\Phi_{\tau_n}(\theta)|\to 0\quad p\text{-a.e.}
\end{equation}
Eq. \ref{eq1} follows from combining the Bernstein-von Mises theorem and Eq. \eqref{eq2}:
$$
\plim_{n\to\infty} \|\pi_n-\Phi_n(\thetaMLE,I(\theta_0)^{-1}/n)\|_1\overset{\proba_{\theta_0}}{=}0\Rightarrow \plim_{n\to\infty}|\pi_{\tau_n}(\theta)-\Phi_{\tau_n}(\theta)|\overset{\proba_{\theta_0}}{=}0\quad p\text{-a.e.}
$$
\end{proof}

\begin{coro}
\label{coro1}
There exists a subsequence $\{\tau_n\}_{n\in\nset}\in\nset$ such that
\begin{equation}
\label{eq3}
\underset{n\to\infty}{\plim}\left|\log\pi_{\tau_n}(\theta)-\log\Phi_{\tau_n}(\theta)\right|\overset{\proba_{\theta_0}}{=}0\,,\quad \text{for}\;p \text{-almost all}\;\theta\,.
\end{equation}
\end{coro}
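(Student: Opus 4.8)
The plan is to obtain Corollary~\ref{coro1} as an immediate consequence of Lemma~\ref{lem1} by pushing the pointwise convergence through the logarithm. Let $\{\tau_n\}_{n\in\nset}$ be the subsequence supplied by Lemma~\ref{lem1}, so that $\plim_{n\to\infty}|\pi_{\tau_n}(\theta)-\Phi_{\tau_n}(\theta)|\overset{\proba_{\theta_0}}{=}0$ for $p$-almost every $\theta$. Fix such a $\theta$ (for which both densities are positive) and use the elementary identity
\begin{equation*}
\log\pi_{\tau_n}(\theta)-\log\Phi_{\tau_n}(\theta)=\log\!\left(1+\frac{\pi_{\tau_n}(\theta)-\Phi_{\tau_n}(\theta)}{\Phi_{\tau_n}(\theta)}\right).
\end{equation*}
Since $u\mapsto\log(1+u)$ is continuous at $u=0$, the continuous mapping theorem for convergence in probability reduces everything to showing that the \emph{relative} error $\{\pi_{\tau_n}(\theta)-\Phi_{\tau_n}(\theta)\}/\Phi_{\tau_n}(\theta)$ tends to $0$ in $\proba_{\theta_0}$.

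To control that ratio I would exploit the compact setting in which the corollary is actually used: in the proof of Proposition~\ref{propMLE} the prior is uniform on a compact $\kappa_n\ni\theta_0$, so it is enough to argue for $\theta$ in a fixed compact neighbourhood of $\theta_0$. On such a set $\Phi_{\tau_n}$ is bounded below by an explicit expression in $\tau_n$, $\thetaMLE_{\tau_n}$ and $I(\theta_0)$, which upgrades the absolute bound of Lemma~\ref{lem1} to a bound on the relative error. The most robust route, however, is to invoke the Bernstein--von Mises theorem in its local, relative form --- uniform convergence to one, on compact $t$-sets, of the ratio of the density of $\sqrt{\tau_n}\,I(\theta_0)^{1/2}(\theta-\thetaMLE_{\tau_n})$ under $\pi_{\tau_n}$ to the standard normal density --- which holds under the same regularity assumptions on $f$ that underlie Lemma~\ref{lem1} and yields $\pi_{\tau_n}(\theta)/\Phi_{\tau_n}(\theta)\to 1$ in $\proba_{\theta_0}$ directly; combined with the identity above this closes the proof, and the subsequence extraction has already been carried out in Lemma~\ref{lem1}.

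The main obstacle is exactly this passage from an absolute to a relative statement. Because $\Phi_{\tau_n}(\theta)\to 0$ whenever $\theta\neq\theta_0$, the bare conclusion of Lemma~\ref{lem1} --- that $\pi_{\tau_n}(\theta)-\Phi_{\tau_n}(\theta)\to 0$ --- is not on its own sufficient: one genuinely needs the error to be small \emph{on the scale of} $\Phi_{\tau_n}(\theta)$, which is why the stronger (relative) form of Bernstein--von Mises, or equivalently a uniform positive lower bound for $\Phi_{\tau_n}$ on the relevant compact matched with a corresponding decay of $|\pi_{\tau_n}-\Phi_{\tau_n}|$, must be brought into play. Once that is in hand, the remaining ingredients --- the algebraic identity and the continuity of $\log$ at $1$ --- are routine.
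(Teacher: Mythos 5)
Your proposal is considerably more careful than the paper's own argument, which consists of the single line ``Follows from Lemma \ref{lem1}, by continuity of the logarithm'' --- i.e.\ exactly the naive step you (rightly) flag as insufficient. Your central observation is correct: for any fixed $\theta\neq\theta_0$ the Gaussian density $\Phi_{\tau_n}(\theta)$ tends to $0$ (exponentially fast in $\tau_n$), so the conclusion of Lemma \ref{lem1}, namely $|\pi_{\tau_n}(\theta)-\Phi_{\tau_n}(\theta)|\to 0$ in $\proba_{\theta_0}$, says nothing about the ratio $\pi_{\tau_n}(\theta)/\Phi_{\tau_n}(\theta)$; one can have $\pi_{\tau_n}(\theta)=2\Phi_{\tau_n}(\theta)$, or even $\pi_{\tau_n}(\theta)=\Phi_{\tau_n}(\theta)^{1/2}$, with the absolute difference vanishing while the log-difference stays away from $0$ or diverges. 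So the passage from an absolute to a relative statement is a genuine lacuna in the paper's proof, not a pedantic one, and your identity $\log\pi-\log\Phi=\log\left(1+(\pi-\Phi)/\Phi\right)$ is the right way to expose it.

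That said, your two proposed repairs are not equally sound. The ``lower bound for $\Phi_{\tau_n}$ on a compact'' route does not work as described: on a fixed compact neighbourhood of $\theta_0$ the best uniform lower bound on $\Phi_{\tau_n}$ decays like $\tau_n^{d/2}\exp(-c\,\tau_n)$, so upgrading the $o(1)$ absolute error of Lemma \ref{lem1} to a relative one would require that absolute error to decay exponentially fast --- far more than the $L^1$ (total-variation) form of Bernstein--von Mises delivers. The relative-BvM route ($\pi_{\tau_n}(\theta)/\Phi_{\tau_n}(\theta)\to 1$ locally uniformly) would indeed close the argument, but it is a strictly stronger input than the one underlying Lemma \ref{lem1}, it is not established anywhere in the paper, and if you invoke it Lemma \ref{lem1} and the subsequence extraction become superfluous; you would need to state under which regularity conditions on $f$ it holds. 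In short: your diagnosis is right and improves on the paper, but your proof is not yet complete --- it trades the paper's gap for an unproven (though standard-looking) stronger hypothesis.
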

\begin{proof}
Follows from Lemma \ref{lem1}, by continuity of the logarithm.
\end{proof}

\begin{lemm}
\label{lem2}
For any $U\in\Uset_n$, let $\theta\mapsto\Phi_U(\theta)$ be the pdf of $\norm(\theta_U^\ast,I(\theta_0)^{-1}/n)$ and $\Phi_{\rho n}$ be the pdf of $\norm(\theta_{\rho n}^\ast,I(\theta_0)^{-1}/\rho n)$ be the Bernstein-von Mises approximations of respectively $\pi(\,\cdot\,|\,Y_U)$ and $\pi(\cdot\,|\,Y_{1:\rho n})$ where $U\subset \Uset_n(Y_{1:\rho n})$. Then we have for all $\theta\in\Theta$
$$
\log\Phi_{\rho n}(\theta)-\rho\log\Phi_U(\theta)\leq (\rho-1)\log(2\pi)^{d/2}+\frac{\rho n}{2}\left\{\|\theta-\theta^\ast_U\|_{\theta_0}-
\|\theta-\theta^\ast\|_{\theta_0}\right\}\,,
$$
where for any $d$-squared symmetric matrix $M$, we have defined by $\|\cdot\|_M$ the norm associated to the scalar product $\pscal{u}{v}_M=u\T M v$.
\end{lemm}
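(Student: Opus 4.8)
The plan is to prove Lemma~\ref{lem2} by a direct computation, since $\Phi_{\rho n}$ and $\Phi_U$ are explicit multivariate Gaussian densities. By construction $\Phi_{\rho n}$ is the density of $\norm(\thetaMLE_{\rho n},I(\theta_0)^{-1}/(\rho n))$ and $\Phi_U$ that of $\norm(\thetaMLE_U,I(\theta_0)^{-1}/n)$; their precision matrices are therefore $\rho n\, I(\theta_0)$ and $n\, I(\theta_0)$, and the determinants of their covariances are $|I(\theta_0)^{-1}|(\rho n)^{-d}$ and $|I(\theta_0)^{-1}|n^{-d}$. First I would write out
\[
\log\Phi_{\rho n}(\theta)=-\tfrac{d}{2}\log(2\pi)+\tfrac{d}{2}\log(\rho n)-\tfrac12\log|I(\theta_0)^{-1}|-\tfrac{\rho n}{2}(\theta-\thetaMLE_{\rho n})\T I(\theta_0)(\theta-\thetaMLE_{\rho n}),
\]
and the analogous expression for $\log\Phi_U(\theta)$, then subtract $\rho\log\Phi_U(\theta)$ from $\log\Phi_{\rho n}(\theta)$ and collect terms of the same type.

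The only structural observation that matters is that both Gaussians have precision equal to a sample-size dependent scalar times the \emph{same} matrix $I(\theta_0)$, because the Fisher information in both Bernstein--von Mises approximations is evaluated at the common parameter $\theta_0$. Hence the two quadratic forms, each carrying the prefactor $\rho n/2$, combine into a single difference
\[
\tfrac{\rho n}{2}\big\{(\theta-\thetaMLE_U)\T I(\theta_0)(\theta-\thetaMLE_U)-(\theta-\thetaMLE_{\rho n})\T I(\theta_0)(\theta-\thetaMLE_{\rho n})\big\},
\]
which is exactly the term $\tfrac{\rho n}{2}\{\|\theta-\thetaMLE_U\|_{\theta_0}-\|\theta-\thetaMLE\|_{\theta_0}\}$ in the statement (with $\thetaMLE=\thetaMLE_{\rho n}$, and $\|\cdot\|_{\theta_0}$ read as the quadratic form $u\mapsto\pscal{u}{u}_{\theta_0}$). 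Likewise the $(2\pi)$-contributions collapse to $(-d/2+\rho d/2)\log(2\pi)=(\rho-1)\log(2\pi)^{d/2}$, the first term on the right-hand side.

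What is left over is the scalar $\tfrac{d}{2}\log\rho-\tfrac{(\rho-1)d}{2}\log n+\tfrac{\rho-1}{2}\log|I(\theta_0)^{-1}|$ coming from the normalising constants. Since $\rho>1$, the $-\tfrac{(\rho-1)d}{2}\log n$ term dominates, so this quantity is nonpositive once $n$ is large enough (equivalently it can be absorbed into the additive constant $\beta$ that appears when Lemma~\ref{lem2} is used inside the proof of Proposition~\ref{propMLE}), and discarding it turns the identity into the claimed inequality. The main obstacle is thus purely bookkeeping --- tracking the $n$ versus $\rho n$ factors in the two determinants and precisions, and recording that the inequality comes solely from dropping this nonpositive residual; the only substantive point is that the two precisions point in the common direction $I(\theta_0)$, which is what prevents the quadratic forms from producing an intractable cross term.
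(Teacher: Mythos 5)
Your proof is correct and follows essentially the same route as the paper's: expand the two Gaussian log-densities, note that both precisions are scalar multiples of the same $I(\theta_0)$ so the quadratic forms combine into $\tfrac{\rho n}{2}\{\|\theta-\theta^\ast_U\|_{\theta_0}-\|\theta-\theta^\ast\|_{\theta_0}\}$, and drop the nonpositive residual from the normalising constants. In fact you are slightly more careful than the paper, which simply asserts $\log \rho n|I(\theta_0)|-\rho\log n|I(\theta_0)|\leq 0$ without the ``$n$ large enough (or absorb into $\beta$)'' qualification that your bookkeeping correctly identifies as necessary.
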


\begin{proof}
This follows from straightforward algebra and noting that
$$
\log \rho n |I(\theta_0)|-\rho\log n|I(\theta_0)|\leq 0\,.
$$
\end{proof}

\section*{Acknowledgements}
The Insight Centre for Data Analytics is supported by Science Foundation Ireland under Grant Number SFI/12/RC/2289. Nial Friel's research was also supported by an Science Foundation Ireland grant: 12/IP/1424. Pierre Alquier's research was funded by Labex ECODEC (ANR - 11-LABEX-0047) and by the research programme New Challenges for New Data from LCL and GENES, hosted by the Fondation du Risque.

We thank the Associate Editor and two anonymous Referees for their contribution to this work.

\section*{References}
\bibliographystyle{elsarticle-harv}
\bibliography{biblio}

\end{document}